\documentclass{article}

\usepackage{arxiv}

\usepackage[colorlinks=true, linkcolor=darkblue, citecolor=darkblue, urlcolor=darkblue]{hyperref}
\usepackage[utf8]{inputenc} 
\usepackage[T1]{fontenc}    
\usepackage{url}            
\usepackage{booktabs}       
\usepackage{amsfonts}       
\usepackage{nicefrac}       
\usepackage{microtype}      
\usepackage{lipsum}         
\usepackage{graphicx}
\usepackage{doi}

\usepackage{amsmath, amssymb, amsthm}
\usepackage{graphicx}
\usepackage{enumitem}
\usepackage{subcaption}
\usepackage{algorithm}
\usepackage{algorithmic}
\usepackage{xcolor}
\usepackage{tabularx}
\usepackage{booktabs}
\usepackage{multirow}
\usepackage{comment}
\usepackage{float}
\usepackage[nomarkers,nolists,figuresonly]{endfloat}

\usepackage{threeparttable}
\usepackage{setspace}
\usepackage{dsfont}
\usepackage{appendix}
\usepackage{authblk}

\usepackage[english]{babel}
\newtheorem{theorem}{Theorem}[section]
\newtheorem{lemma}[theorem]{Lemma}
\newtheorem{assumption}[theorem]{Assumption}

\DeclareMathOperator*{\argmin}{argmin}

\newcommand{\EE}[0]{\mathbb{E}}

\definecolor{darkblue}{rgb}{0.0, 0.0, 0.5}
\usepackage{natbib}

\newcommand{\pl}{\parallel}
\newcommand{\openr}{\hbox{${\rm I\kern-.2em R}$}}
\newcommand{\openn}{\hbox{${\rm I\kern-.2em N}$}}

\title{Targeted Highly Adaptive Lasso Minimum Loss Estimation of Target Functions}

\author[1]{Vanessa Rodriguez}
\author[1]{Karla Diaz-Ordaz}
\author[1]{Brieuc Lehmann}
\author[2]{Mark J. van der Laan}

\affil[1]{Department of Statistical Science, University College London, UK}
\affil[2]{Division of Biostatistics, University of California, Berkeley, USA}

\date{\today}

\begin{document}
\maketitle

\begin{abstract}
We propose a Targeted Highly Adaptive Lasso for estimation of non-pathwise differentiable functional parameters such as the dose-response curve (DRC) for continuous exposure. 
We assume the target function lies in the $k$-th order smoothness class used to define the $k$-th order Highly Adaptive Lasso (HAL), which can be well approximated by linear spans of $k$-th order spline basis functions.
We construct a projection of the true target function onto a large finite dimensional working model spanned by an initial set of $k$-th order spline basis functions, which defines a pathwise differentiable approximation of the target functional parameter. A standard TMLE is then applied with a data-adaptive initial fit, replacing the MLE targeting step with a LASSO step over HAL spline basis functions that span the target function.
We prove that the resulting Targeted HAL-MLE is pointwise asymptotically normally distributed and achieves a convergence rate determined solely by the dimension and smoothness of the target function, giving dimension free rates up till $\log n$-factors.
Through a simulation study for the DRC, we show that the Targeted HAL outperforms a HAL plug-in estimator in terms of bias and mean squared error.
Targeted HAL offers a fully data-adaptive approach to inference on functional parameters without requiring sieve specification or parametric assumptions.
\end{abstract}
{\bf Key words:} Canonical gradient, Highly Adaptive Lasso, Nonparametric Estimation of Functionals, Pathwise Differentiability, Targeted Minimum Loss Estimation (TMLE).

\section{Introduction}

Targeted minimum loss estimation (TMLE) is a general framework for statistical estimation in causal inference \citep{vanderLaan:Rose11}. 
Given $n$ i.i.d. copies of $O\sim P_0\in {\cal M}$ and a pathwise differentiable target parameter $\Psi:{\cal M}\rightarrow \mathbb{R}^d$, TMLE constructs an initial estimator of the relevant part $Q(P_0)$ of $P_0$ minimizing a loss-based empirical risk, and then updates it through a fluctuation step that solves the efficient influence curve estimating equation.
TMLE provides a plug-in efficient estimator of pathwise differentiable target features of the data distribution \citep{bkrw1997,vanderLaan:Rose11,vanderLaan&Rose18}. 
It can be used to target infinite dimensional pathwise differentiable functions with universal least favorable paths such as survival curves \citep{rytgaard_targeted_2022,rytgaard_estimation_2023,vanderLaan&Rose18,vanderLaan&Gruber15}.
The standard TMLE fluctuation step, however, is designed to solve a finite-dimensional score equation, and therefore does not directly extend to non-pathwise differentiable functional target parameters. 
The goal of this article is to construct estimators that provide statistical inference for such non-pathwise differentiable target functions.

We illustrate this using the confounder-adjusted Dose Response Curve (DRC), defined as
\[
\psi_0(a)=\Psi(P_0)(a)\equiv E_{P_0}E_{P_0}(Y\mid A=a,W),
\]
where $A$ is a univariate continuous treatment, such as a dose of a drug, $Y$ is a binary or continuous outcome, and $W$ is a $(\bar{d}-1)$-dimensional vector of baseline covariates. 
We can interpret $\psi_0(a)$ causally, $\psi_0(a)=E_0 Y(a)$, assuming conditional exchangeability given $W$, i.e. $A$ is independent of $Y(a)$, the potential outcome under treatment $A=a$, given $W$, and positivity  of the treatment mechanism $g_0(A\mid W)=P_0(A\mid W)$, that is 
 $g_0(a\mid W)>0$ almost everywhere.
More generally, $\psi_0$ represents a confounder adjusted response curve of interest establishing the association of $A$ with $Y$ after controlling for the other measured factors $W$ affecting the outcome.

We take the Highly-Adaptive LASSO (HAL) as our starting point. HAL approximates $\bar{Q}_0$, using a finite linear combination of indicator basis functions. The $\bar{k}$-th order HAL-MLE $\bar{Q}_n$ minimizes the empirical risk $P_n L(\bar{Q})$ over the smoothness class $D^{(\bar{k})}_M([0,1]^{\bar{d}})$ of c\`adl\`ag functions with $k$-th order sectional variation norm bounded by $M$, $k=0,\ldots$, which can be represented as the closure of the linear span of $\bar{k}$-th order spline basis functions with $L_1$-norm bounded by $M$\citep{van2023higher,vanderLaan17,benkeser_hal_2016}.  Going forward, this $\bar{k}$-th order sectional variation norm is implicitly assumed to be finite, though for simplicity of notation we drop the subscript $M$ unless necessary. Subspaces $D^{(\bar{k})}({\cal R}(\bar{d}))$ are spanned by selected subsets ${\cal R}(\bar{d})$ of these basis functions, enabling computation via standard LASSO software \citep{van2023higher}. The $\bar{k}$-th order HAL-MLE achieves dimension-free rates $O^+(n^{-\bar{k}^*/(2\bar{k}^*+1)})$ of convergence, with $\bar{k}^*=\bar{k}+1$, where we use notation $O^+(n^{-\delta})$ for a rate that is $O(n^{-\delta})$ up to a power of a $\log n$-factor \citep{van2023higher}. 
We can then obtain a plug-in HAL-MLE estimator of the DRC by marginalizing $\bar{Q}_n$, the estimated conditional outcome regression, over the empirical distribution of $W$, $\psi_n(a)=Q_{W,n}\bar{Q}_n(a,\cdot)$. This results in a pointwise asymptotically normal estimator (via a Delta-method argument \citep{van2023higher,shi2025halbasedpluginestimationpointwise}). However, $\bar{Q}_n$ is not targeted towards the lower-dimensional target function $\psi_0$, resulting in finite sample bias and motivating a targeted approach.

Our proposed approach is to introduce a sieve for the target function $\psi_0$: a finite-dimensional spline working model growing in dimension with sample size, that can uniformly approximate any element of the target function space at a rate controlled by its dimension. 
Projecting $\psi_0$ onto this working model defines a pathwise differentiable approximation $\Psi_{\lambda}(P_0)$ of $\Psi(P_0)$, to which a standard TMLE can be applied. 
The dimension, $n(\lambda)$, of the working model acts as a tuning parameter governing the bias-variance tradeoff.
Choosing it optimally achieves dimension-free rates of convergence determined solely by the smoothness of $\psi_0$, not the smoothness of the higher-dimensional true outcome regression $\bar{Q}_0$, which is particularly advantageous when $\psi_0$ is substantially smoother than $\bar{Q}_0$. 
This approach extends the working marginal structural model (MSM) literature \citep{Robins:Hernan:Brumback00,Neugebauer:vanderLaan07} from fixed working models to actual estimation of the target function. 

We first establish the pointwise asymptotic normality of a pre-specified fixed sieve TMLE with respect to the projection of the target function, and show that with some undersmoothing (choosing a larger sieve) asymptotic normality also holds with respect to the true target function.
We next let the working model be constructed data-adaptively, and show that the fixed-sieve asymptotic normality results still carry over to the data-adaptive TMLE.
Constructing such a working model with the required approximation rate is straightforward for the univariate DRC but becomes challenging for higher-dimensional target functions.
We therefore propose the Targeted HAL-MLE (T-HAL-MLE), which avoids explicit sieve construction by starting from a large initial working model and replacing the MLE targeting step with a LASSO step that selects the relevant basis functions.

Several related works address non-pathwise differentiable parameter estimation via a similar two-step structure: an initial outcome regression followed by a targeted update using a sieve of basis functions in the style of TMLE. 
An earlier targeted estimator for the DRC was proposed in \citep{Diaz13c}, but without the pointwise normality results we establish here. 
The more recent i-learner \citep{vansteelandt_orthogonal_2025} and EP-learner \citep{van_der_laan_combining_2024} both update the initial outcome regression estimate so that a plug-in loss becomes asymptotically Neyman-orthogonal, achieving the minimax convergence rate for smooth target functions. 
Both methods require the user to prescribe the sieve dimension and its growth rate with sample size, which is straightforward for a univariate target function but non-trivial for higher-dimensional multivariate functions. 
Moreover, they target the parameter directly rather than a pathwise differentiable projection of it, which complicates valid variance estimation. 
The T-HAL-MLE instead uses a LASSO at the targeting step to data-adaptively select the relevant basis functions from a large initial working model, avoiding explicit sieve construction. 
Crucially, because the T-HAL-MLE targets a pathwise differentiable projection $\Psi_\lambda(P_0)$ of $\psi_0$ rather than the non-pathwise differentiable parameter directly, we are able to provide a formal proof of pointwise asymptotic normality at dimension-free rates. 
To our knowledge, no analogous guarantee has been established for the EP-learner or the i-learner.

The remainder of this article is structured as follows. Section \ref{Chtargetfunction2} discusses the plug-in HAL-MLE as a benchmark and motivates the need for targeted estimation. 
Section \ref{Chtargetfunction3} develops the sieve-based TMLE for the projection of the DRC onto a fixed working model, where the sieve is  fixed. 
Section \ref{Chtargetfunction4} generalizes this method to data adaptively determined sieves, thereby paving the way for the proposed T-HAL-MLE.
Section \ref{section4b} introduces the T-HAL-MLE and establishes its pointwise asymptotic normality, with the general theorem for arbitrary non-pathwise differentiable target functions given in Section \ref{Chtargetfunction5}. 
Section \ref{sec:simulation} reports a simulation study comparing the HAL-MLE with the T-HAL-MLE under a range of challenging outcome models and treatment distributions.
We conclude with a discussion in Section \ref{Chtargetfunction7}. 
The Appendix covers methods for computing the canonical gradient of the projection parameter under more general HAL models, and provides a further example of estimating the conditional causal effect of a binary treatment on a survival outcome.

\begin{table}
\centering
\caption{Notations and their descriptions.}
\label{tab:notation}
\begin{tabular}{p{0.33\textwidth} p{0.62\textwidth}}
\toprule
\textbf{Notation} & \textbf{Description} \\
\midrule
$n$& Sample size\\[2pt]
$O=(W,A,Y)\sim P_0\in{\cal M}$ & Observed unit: $W$ covariates, $A$ dose, $Y$ outcome \\[2pt]
$P_0$, $P_n$ & True data-generating and empirical distributions \\[2pt]
$\Psi:{\cal M}\to D^{(k)}([0,1]^{d})$; $\psi_0=\Psi(P_0)$ & Target function parameter and its true value \\[2pt]
$\bar{Q}_P=E_P(Y\mid A,W)$& Outcome regression\\[2pt]
$\bar{Q}_n$ & Initial outcome estimator \\[2pt]
$g_0(a\mid W)$& True conditional density of exposure \\[2pt]
$Q_W$ & Marginal distribution over $W$ \\[2pt]
$D^{(k)}_M([0,1]^{d})$ & $k$-th order smoothness class, sectional variation norm $\leq M$ \\[2pt]
$D^{(k)}({\cal R}(d))$ & Subspace spanned by $k$-th order splines in ${\cal R}(d)$ \\[2pt]
$D^{(k)}({\cal R}_{\lambda})$; $n(\lambda)$; $\lambda$ & Fixed companion sieve for $\psi_0$; its dimension; tuning parameter \\[2pt]
$D^{(k)}({\cal R}_{\lambda,n})$; $n(\lambda,n)$ & Data-adaptive sieve used by the TMLE; its dimension ($n(\lambda,n)\geq n(\lambda)$) \\[2pt]
$\phi_j$, $\bar{\phi}_j$ & $k$-th / $\bar{k}$-th order spline basis function (target / outcome) \\[2pt]
$\omega(a)$ & Weight function for the $L^2$-projection defining $\Psi_\lambda$ \\[2pt]
$\Sigma_{\phi_\lambda} = \int \omega(a)\,{\bf\phi}_\lambda{\bf\phi}_\lambda^\top da$ & Gram matrix  \\[2pt]
$\alpha_\lambda(P)$, $\alpha_{\lambda,n}(P)$  & $L^2$-projection coefficients (fixed / data-adaptive)\\[2pt]
$\Psi_{\lambda}(P)=\alpha_{\lambda}(P)^\top{\bf\phi}_\lambda$ & Fixed-sieve projection of target function \\[2pt]
$\Psi_{\lambda,n}(P)=\alpha_{\lambda,n}(P)^\top{\bf\phi}_{\lambda,n}$ & Data-adaptive projection of target function \\[2pt]
$\bar{Q}_{n,\lambda}^*$ & TMLE-update of $\bar{Q}_{n}$ targeting $\alpha_\lambda(P_0)$ \\[2pt]
$D^*_{\alpha_\lambda(),P}$ & Canonical gradient of the projection coefficients $\alpha_{\lambda}$ at $P$ \\[2pt]
$D^*_{\Psi_\lambda(),a,P}={\bf \phi}_\lambda(a)^\top D^*_{\alpha_\lambda(),P}$ & Canonical gradient of $\Psi_{\lambda}(P)(a)$ at $P$ \\[2pt]
$\sigma^2_\lambda(a)=P_0\{D^*_{\Psi_\lambda(),a,P_0}\}^2$ & Variance of canonical gradient \\[2pt]
\bottomrule
\end{tabular}
\end{table}

\section{Plug-in HAL-MLE of the DRC}\label{Chtargetfunction2}

Suppose we observe $n$ i.i.d. copies $O = (W, A, Y) \sim P_0 \in \cal{M}$ and assume a $\cal{M}$ is a nonparametric model beyond possible assumptions on $g_0(A \mid W)$. 
Throughout we use notation $Pf \equiv \int f(o) \, dP(o)$. 
The DRC can be written as $\Psi(P) = \Psi^F(Q_P) = Q_W \Phi(\bar{Q}_P)= \int \Phi(\bar{Q}_P(A, w))dQ_W(w)$, where $Q_P = (Q_W, \bar{Q}_P)$ consists of the marginal distribution of $W$ and the conditional outcome regression $\bar{Q}_P(A,W) = E_P[Y \mid A, W]$, $\Psi^F$ is the functional mapping $Q_P$ to the estimand, and $\Phi(x) = x$ if $Y$  is continuous and $\Phi(x) = 1/(1 + \exp(-x))$ if $Y$ is binary.
In the continuous case, $\psi(a) = \Psi(P)(a) = Q_W \bar{Q}(a, \cdot)$ is bilinear in $(Q_W, \bar{Q})$, while in the binary case, $\psi(a) = \Psi(P)(a) = Q_W \{1 + \exp(-\bar{Q}(a, \cdot))\}^{-1}$ is non-linear in $\bar{Q}(a, \cdot)$. 

We consider the statistical model, 
\[
{\cal M}({\cal R}(\bar{d})) = \{P : \bar{Q}_P \in D^{(\bar{k})}({\cal R}(\bar{d})), \, \Psi(P) \in D^{(k)}([0,1]^{d})\},
\]
for a specified spline model $D^{(\bar{k})}({\cal R}(\bar{d})) \subset D^{(\bar{k})}([0,1]^{\bar{d}})$ for $\bar{Q}_0$. 
The smoothness orders $\bar{k}$ and $k$ refer separately to the outcome regression $\bar{Q}_0$ and the target function $\psi_0$. For simplicity one might set $k = \bar{k}$, but the target function is often substantially smoother than $\bar{Q}_0$, which may be smooth in $A$ but less so in $W$. In general we therefore have $k \geq \bar{k}$.

Let $L(\bar{Q})$ be the squared error loss if $Y$ is continuous, or the binary log-likelihood loss if $Y$ is binary, so that $\bar{Q}_0 = \arg\min_{\bar{Q} \in D^{(\bar{k})}({\cal R}(\bar{d}))} P_0 L(\bar{Q})$. 
Following the standard HAL-MLE construction, we select a finite rich subset ${\cal R}_N \subset {\cal R}(\bar{d})$ such that the approximation error of $D^{(\bar{k})}({\cal R}_N)$ relative to $D^{(\bar{k})}({\cal R}(\bar{d}))$ is negligible. 
The HAL-MLE is then defined by
\[
\beta_n = \arg\min_{\beta:\, \|\beta\|_1 \leq C_n} P_n L\Big( \sum_{j \in {\cal R}_N} \beta(j) \bar{\phi}_j \Big), 
\qquad \bar{Q}_n = \sum_{j \in {\cal R}_N} \beta_n(j) \bar{\phi}_j,
\]
where $C_n$ is an $L_1$-norm bound on the coefficient vector.
This yields the plug-in estimator $\psi_n = Q_{W,n} \Phi(\bar{Q}_n)$ of $\Psi_{{\cal R}(\bar{d})}(P_0)$. 
We write ${\cal R}_n = \{j \in {\cal R}_N : \beta_n(j) \neq 0\}$ for the active set, so that $\bar{Q}_n \in D^{(\bar{k})}_{C_n}({\cal R}_n)$ as well.

We now analyse the plug-in HAL-MLE $\Psi^F(Q_n)$ of the DRC $\Psi^F(Q_0)$, drawing on the pointwise asymptotic normality results for the HAL-MLE established in \citep{van2023higher}. 
Decomposing the estimation error gives
\begin{eqnarray*}
\psi_n - \psi_0 &=& (Q_{W,n} - Q_{W,0})\Phi(\bar{Q}_0) + Q_{W,0}\left\{\Phi(\bar{Q}_n) - \Phi(\bar{Q}_0)\right\} \\
&& + (Q_{W,n} - Q_{W,0})\left\{\Phi(\bar{Q}_n) - \Phi(\bar{Q}_0)\right\}.
\end{eqnarray*}
The first term is $O_P(n^{-1/2})$ and the third is $o_P(n^{-1/2})$ because $D^{(\bar{k})}_{C_n}([0,1]^{\bar{d}})$ is a Donsker class, assuming $C_n = O_P(1)$ or grows sufficiently slowly with sample size. 
The first-order behaviour of $\psi_n - \psi_0$ is therefore driven entirely by the second term, and we have
\[
\psi_n - \psi_0 = Q_{W,0}\left\{\Phi(\bar{Q}_n) - \Phi(\bar{Q}_0)\right\} 
+ O_P(n^{-1/2}).
\]
Since $\Phi$ is nicely differentiable, the directional derivative of $\Phi$ at $\bar{Q}$ acting on a perturbation $h$ is
\[
\dot{\Phi}_{\bar{Q}}(h)(a, w) = 
\begin{cases}
h(a, w) & \text{if } \Phi(x) = x, \\[2pt]
\dot{Q}(a, w) h(a, w) & \text{if } \Phi(x) = 1/(1 + \exp(-x)),
\end{cases}
\]
where $\dot{Q}(a, w) = \exp(-\bar{Q}(a, w)) / \{1 + \exp(-\bar{Q}(a, w))\}^2$. 
In the binary case the directional derivative is therefore a multiplication operator. 
Linearising around $\bar{Q}_0$ gives
\[
Q_{W,0}\{\Phi(\bar{Q}_n) - \Phi(\bar{Q}_0)\} \approx 
Q_{W,0} \dot{\Phi}_{\bar{Q}_0}(\bar{Q}_n - \bar{Q}_0),
\]
where the remainder behaves as the square of the rate of convergence of $Q_n-Q_0$ and is thus negligible.

The proof of the pointwise asymptotic normality of HAL involves a decomposition into a random part and a bias part, $(\bar{Q}_n - \bar{Q}_{0,n}) + (\bar{Q}_{0,n} - \bar{Q}_0)$, where $\bar{Q}_{0,n} = \arg\min_{\bar{Q} \in D^{(\bar{k})}({\cal R}_{0,n})} P_0 L(\bar{Q})$ is the oracle MLE over an independent sparse working model $D^{(\bar{k})}({\cal R}_{0,n})$ of size $J_n$ with uniform approximation error 
$O^+(1/J_n^{\bar{k}+1})$. 
Applying this decomposition to the linearised expression gives
\[
Q_{W,0}\{\Phi(\bar{Q}_n) - \Phi(\bar{Q}_0)\} \approx 
Q_{W,0} \dot{\Phi}_{\bar{Q}_0}(\bar{Q}_n - \bar{Q}_{0,n}) 
+ Q_{W,0} \dot{\Phi}_{\bar{Q}_0}(\bar{Q}_{0,n} - \bar{Q}_0),
\]
where the second term is the bias. The asymptotic linearity of HAL for $\bar{Q}_n - \bar{Q}_{0,n}$, applied to the linear functional $\bar{Q} \mapsto Q_{W,0} \dot{Q}_0 \bar{Q}$, gives pointwise asymptotic normality of $(n/J_n)^{1/2} Q_{W,0} \dot{Q}_{0,a}
(\bar{Q}_n - \bar{Q}_{0,n})$ at each $a$. 
To obtain inference for $\psi_0$ we also need to control the bias. 
Since $\bar{Q}_0 \in D^{(\bar{k})}({\cal R}(\bar{d}))$ implies $\|\bar{Q}_{0,n} - \bar{Q}_0\|_\infty = O^+(J_n^{-(\bar{k}+1)})$, choosing $J_n$ large enough, which may require undersmoothing relative to the cross-validation selector $C_{n,cv}$, renders the bias negligible relative 
to the standard error $(J_n/n)^{1/2}$. This results in a corresponding rate of convergence $(J_n/n)^{1/2}$ with respect to the target function. Selecting $J_n$ so that the bias and standard error are balanced shows that 
this plug-in HAL-MLE  achieves a rate of convergence that is dimension-free, up to $\log n$ factors,
\[
\|\psi_n - \psi_0\|_\infty = O^+(n^{-k^*/(2k^*+1)}), \quad k^* = k + 1.
\]

The above asymptotic results show that the plug-in HAL-MLE of lower-dimensional targets has excellent statistical properties given appropriate undersmoothing, and the same analysis extends to any target function estimation problem. 
However, the plug-in HAL-MLE has two important limitations.
First, in finite samples the bias term depends in an important way on the choice of submodel $D^{(\bar{k})}({\cal R}_N)$, and specifically on the bias of $\bar{Q}_{0,n} - \bar{Q}_0$ on the data-adaptively selected model $D^{(\bar{k})}({\cal R}_n)$. This arises because the fit $\bar{Q}_n$ is aimed at the high-dimensional function $\bar{Q}_0$ and is not targeted towards the lower-dimensional curve $\psi_0$.
For a fixed $L_1$-norm the HAL-MLE is concerned with fitting the whole function $\bar{Q}_{0}$ and not only the relevant features $\Psi(Q_{W,0},\bar{Q}_0)$.
Appropriate selection of ${\cal R}_N$ and undersmoothing of the HAL-MLE may be needed to ensure that enough basis functions involving $A$ are included for the marginalised $\Psi(Q_{W,n}, \bar{Q}_n)$ to be a flexible enough function of $A$. 
However, global undersmoothing of $\bar{Q}_n$ is an inefficient way to achieve the desired fit, in just the same way that global undersmoothing is an inefficient route to estimating a pathwise differentiable feature of $\Psi(P_0)$. 
It is this very inefficiency that motivated TMLE in the pathwise differentiable setting as a principled approach to undersmoothing with minimal overfitting. 
The same logic motivates a TMLE-style targeted undersmoothing for the DRC: starting from a well-tuned, non-undersmoothed $\bar{Q}_n$, the targeting step can economically tailor the fit towards the marginalised target.

Second, the plug-in HAL-MLE relies on $\bar{Q}_0 \in D^{(\bar{k})}({\cal R}(\bar{d}))$ to control bias at a rate determined by the smoothness $k$ of the outcome regression $\bar{Q}_0$. 
If we only care about the target function $\Phi(\bar{Q}_0)$, however, the bias control should depend only on the smoothness of $Q_{W,0} \Phi(\bar{Q}_0)$, not on the global smoothness of $\bar{Q}_0$. 
Our proposed TMLE achieves exactly this: it controls bias by assuming $\Psi^F(Q_0) \in D^{(k)}([0,1])$ with $\bar{k} \le k$, avoiding any additional smoothness assumptions on $\bar{Q}_0$ in $W$. 
We still require second-order terms in $\bar{Q}_n - \bar{Q}_0$ to converge to zero fast enough, but this can be arranged by assuming, for example, $\bar{Q}_0 \in D^{(0)}([0,1]^{\bar{d}})$ together with $\Psi^F(Q_0) \in D^{(k)}([0,1])$ for some $k \geq 1$. 
This also shows that, if $\Psi^F(Q_0) \in D^{(k)}([0,1]^{d})$ while $\bar{Q}_0$ is non-smooth in $W$ (so that $\bar{Q}_0 \in D^{(0)}([0,1]^{\bar{d}})$ but not $D^{(1)}([0,1]^{\bar{d}})$), the plug-in HAL-MLE achieves only the rate 
$O^+(n^{-1/3})$ while the proposed TMLE attains the minimax rate $O^+(n^{-k^*/(2k^*+1)})$.

\section{Sieve TMLE of projection of DRC on spline working model in target function space} \label{Chtargetfunction3}

The plug-in HAL-MLE presented above is fully nonparametric but its bias is determined
by the global smoothness of $\bar{Q}_0$, not the smoothness of the lower dimensional target function $\Psi^F(Q_0)$.
This motivates us to first consider a working MSM that will directly control the bias for the target function and then use TMLE to estimate the unknown coefficients in this working MSM. It addresses the lack of targeting of the plug-in HAL-MLE and provides theoretical strong results, at least as strong as the plug-in HAL-MLE. 

The first step is to define an $n(\lambda)$-dimensional parametric working model $D^{(k)}({\cal R}_{\lambda})\subset D^{(k)}([0,1]^{d})$ in terms of linear combination of $n(\lambda)$ $k$-th order splines for the target function, where $\lambda$ is a tuning parameter that controls the size $n(\lambda)$ of the model and its uniform approximation error of the total parameter space $D^{(k)}([0,1]^{d})$ for the target function. Due to our uniform approximation results for spline working models, we know that this $n(\lambda)$-dimensional working model can be chosen to uniformly approximate any target function in $D^{(k)}([0,1]^{d})$ at rate $O^+(n(\lambda)^{-(k+1)})$, thereby controlling the bias \citep{van2023higher}.

As a second step we need to decide how to define the coefficients in this working model as a parameter of the data distribution $P$, which then yields a definition of an approximation $\Psi_{\lambda}:{\cal M}\rightarrow D^{(k)}([0,1]^{d})$ of $\Psi:{\cal M}\rightarrow D^{(k)}([0,1]^{d})$ so that $\Psi_{\lambda}(P_0)$ is the approximation of target function $\Psi(P_0)$.
We define the coefficient vector in terms of a projection of the target function $\psi_0$ onto this working model using a particular distance such as $L^2$-norm. We refer to this as projecting in the target function space.
One could also define a working model ${\cal Q}_w$ for $\bar{Q}_P$ that implies that $\Psi({\cal Q}_w)$ equals or strongly approximates our $n(\lambda)$-dimensional working model $D^{(k)}({\cal R}_{\lambda})$ for the target function $\Psi^F(Q_0)$. The loss-based projection of $\bar{Q}_P$ onto this working model ${\cal Q}_w$ implies then also a projection on the working model $D^{(k)}({\cal R}_{\lambda})$ and thereby defines the coefficient vector as a function of $P$. We refer to this as projecting in model space. In this article we focus on projecting in target function space, but in future work we show that there may be important advantages about projecting in model space.

Either projection makes the coefficient vector $\alpha_{\lambda}(P)$ a pathwise differentiable parameter on the statistical model ${\cal M}$, so that we can apply TMLE $P_{n,\lambda}^*$ of the $n(\lambda)$-dimensional target parameter $\alpha_{\lambda}(P_0)$ solving the empirical mean of the $n(\lambda)$-dimensional canonical gradient $P_n D^*_{\alpha_{\lambda}(),P_{n,\lambda}^*}=0$. This results then in a TMLE $P_{n,\lambda}^*$ and corresponding plug-in estimator $\Psi(P_{n,\lambda}^*)$ of $\Psi_{\lambda}(P_0)$. Typically, $P_{n,\lambda}^*$ represents both the TMLE $Q_{n,\lambda}^*$ targeting $\Psi^F(Q_0)$ and a possible nuisance parameter $g_n$ the efficient influence curve $D^*_{\alpha_{\lambda}(),P}=D^*_{\alpha_{\lambda}(),Q_P,g_P}$ depends upon. We will then follow a standard TMLE analysis to prove asymptotic linearity of $\Psi(P_{n,\lambda}^*)-\Psi_{\lambda}(P_0)$ and thereby asymptotic normality at rate $(n(\lambda)/n)^{1/2}$. We will show that the choice of $n(\lambda)$ controls the bias-variance trade-off and can be optimized to obtain the desired optimal rate $n^{-k^*/(2k^*+1)}$ of convergence.

\subsection{Projection definition and its canonical gradient}
We will first demonstrate this approach in the DRC example. The general TMLE is a straightforward generalization of this. 
Given the working model $D^{(k)}({\cal R}_{\lambda})$ for $\psi_0$, we define the weighted $L^2$-projection parameter $\Psi_{\lambda}(P)= \alpha_{\lambda,P}^{\top}{\bf \phi}_\lambda(a)$, where 
\[
\alpha_{\lambda}(P)=\arg\min_{\alpha} \int_a \omega(a)\left(E_P\bar{Q}_P(a,W)-\sum_{j\in {\cal R}_\lambda}\alpha(j)\phi_j(a)\right )^2 da,\]
with weight $\omega(a)$. For example, $\omega(a)$ could be the marginal density of treatment for the creation of stabilized weights.

We now determine the canonical gradient of $\alpha_{\lambda}(P)$.
We have that $\alpha_{\lambda}(P)=\alpha_{\lambda}(Q_W,\bar{Q}_P)$ depends on $P$ through the marginal distribution of $W$ and the outcome regression $\bar{Q}_P=E_P(Y\mid A,W)$.
For the squared error and binary log-likelihood loss, the loss-based tangent space at a $\bar{Q}\in D^{(\bar{k})}({\cal R}(\bar{d}))$ implied by paths $\beta_h(\delta)=\beta+\delta h$, so that
$\bar{Q}_{\beta_h(\delta)}=\sum_{j\in {\cal R}(\bar{d})}\beta(j)\bar{\phi}_j+\delta \sum_{j\in {\cal R}(\bar{d})}h(j)\bar{\phi}_j$ through $\bar{Q}_P$ at $\delta=0$ (i.e., paths through $\bar{Q}_{\beta}$ that stay in the model space $D^{(\bar{k})}({\cal R}(\bar{d}))$) is given by
\[
T_{\cal M}(P)=\{ h(A,W)(Y-E_P(Y\mid A,W)): h\in D^{(\bar{k})}({\cal R}(\bar{d}))\}.\]
\begin{lemma}
\label{lem:canonical gradient}
Consider $\alpha_{\lambda}:{\cal M}\rightarrow\openr^{n(\lambda)}$ defined above. Let $\Sigma_{\phi_\lambda}=\int_a \omega(a) {\bf \phi}_\lambda(a){\bf \phi}_\lambda(a)^{\top}da$.
The canonical gradient of $\alpha_{\lambda}$ at $P$ is given by:
\begin{equation}
\label{canonicalgradient}
   D^*_{\alpha_{\lambda}(),P}=D^*_{\alpha_{\lambda}(),\bar{Q},P}+D^*_{\alpha_{\lambda}(),Q_W,P}, 
\end{equation}
where
\begin{eqnarray*}
D^*_{\alpha_{\lambda}(),Q_W,P}&=&\int_a \Sigma_{\phi_\lambda}^{-1}{\bf \phi}_\lambda(a)(\bar{Q}_P(a,W)-\Psi(P)(a) )\omega(a)da\\
D^*_{\alpha(),\bar{Q},P}&=&\omega(A)\Sigma_{\phi_\lambda}^{-1}{\bf \phi}_\lambda(A)\frac{1}{g(A\mid W)}(Y-\bar{Q}_P(A,W)).
\end{eqnarray*}
{\bf Canonical gradient of $\Psi_{\lambda,a}()$:}
By the delta-method this implies that 
\begin{eqnarray*}
D^*_{\Psi_{\lambda}(),a,P}&=&{\bf \phi}_\lambda^{\top}(a)D^*_{\alpha_{\lambda}(),P}\\
&=&D^*_{\Psi_{\lambda}(),a,\bar{Q},P}+D^*_{\Psi_{\lambda}(),a,Q_W,P}\\
D^*_{\Psi_{\lambda}(),a,\bar{Q},P}&=&\left\{\Sigma_{{\phi}_\lambda}^{-1}{\bf \phi}_\lambda(A)\right\}^{\top}{\bf \phi}_\lambda(a) \frac{\omega(A)}{g(A\mid W)}(Y-\bar{Q}_P(A,W))\\
D^*_{\Psi_{\lambda}(),a,Q_W,P}&=&{\bf \phi}_\lambda^{\top}(a) \int_{a'} \Sigma_{\phi_\lambda}^{-1}{\bf \phi}_\lambda(a')(\bar{Q}_P(a',W)-\Psi(P)(a') )h(a')da'.
\end{eqnarray*}
We could select an orthonormal basis ${\bf \phi}_\lambda^*$ w.r.t. inner product $\langle h_1,h_2\rangle =\int_a \omega(a) h_1(a)h_2(a)da $, in which case
$\Sigma_{\phi_\lambda^*}$ is the identity matrix so that we obtain
\begin{eqnarray*}
D^*_{\Psi_{\lambda}(),a,P}&=&\sum_{j\in {\cal R}_{\lambda}}\phi_j^*(a)\phi_j^*(A) \frac{\omega(A)}{g(A\mid W)}(Y-\bar{Q}_P(A,W))\\
&&+\sum_{j\in {\cal R}_{\lambda}}\phi_j^*(a)\int_{a'}\phi_j^*(a')(\bar{Q}_P(a',W)-\Psi(P)(a'))h(a')da'.
\end{eqnarray*}
\end{lemma}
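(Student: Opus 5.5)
Since the minimization defining $\alpha_\lambda(P)$ is an unpenalized weighted least-squares problem over a finite linear span, I will first solve it in closed form. Writing $\psi_P(a)=Q_W\bar Q_P(a,\cdot)$, the normal equations give
\[
\alpha_\lambda(P)=\Sigma_{\phi_\lambda}^{-1}\int_a \omega(a){\bf \phi}_\lambda(a)\,\psi_P(a)\,da ,
\]
where I assume $\Sigma_{\phi_\lambda}$ is invertible, i.e.\ the basis functions are linearly independent in $L^2(\omega)$. The matrix $\Sigma_{\phi_\lambda}$ does not depend on $P$. So $\alpha_\lambda$ is a fixed linear map applied to the $\omega$-weighted integral of the DRC, and the whole task reduces to computing the pathwise derivative of $P\mapsto \int_a \omega(a){\bf\phi}_\lambda(a)Q_W\bar Q_P(a,\cdot)\,da$ along one-dimensional submodels.

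Next, for a smooth path $P_\delta$ through $P$ with score $S=S_W(W)+S_A(A\mid W)+S_Y(Y\mid A,W)$, I will use the factorization $p=q_W\,g\,q_{Y}$. The derivative of $\int\omega{\bf\phi}_\lambda\,Q_{W,\delta}\bar Q_P\,da$ is $P\{S_W\int_a\omega(a){\bf\phi}_\lambda(a)\bar Q_P(a,W)\,da\}$. Centering this gives the $Q_W$-component, because subtracting $\int_a \omega{\bf\phi}_\lambda\psi_P\,da$ makes it mean zero. The derivative through $\bar Q_\delta$ is $\int_a\int_w \omega(a){\bf\phi}_\lambda(a)\,E_P[(Y-\bar Q_P)S_Y\mid a,w]\,dQ_W(w)\,da$. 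I will then rewrite the integral over $a$ against Lebesgue measure as an expectation over $A\sim g(\cdot\mid W)$ by inserting $g(a\mid w)/g(a\mid w)$, which is the step that needs positivity $g>0$. This turns the derivative into $P\{\omega(A){\bf\phi}_\lambda(A)g(A\mid W)^{-1}(Y-\bar Q_P)S\}$. The $S_A$ component contributes nothing, since $\bar Q_P$ and $Q_W$ do not depend on $g$. Multiplying by $\Sigma_{\phi_\lambda}^{-1}$ then yields a gradient equal to the stated sum $D^*_{\alpha_\lambda(),\bar Q,P}+D^*_{\alpha_\lambda(),Q_W,P}$.

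The main obstacle is showing that this gradient is the \emph{canonical} one. The model ${\cal M}({\cal R}(\bar d))$ restricts $\bar Q_P$ to $D^{(\bar k)}({\cal R}(\bar d))$, so the tangent space $T_{\cal M}(P)$ for $\bar Q$ consists only of the elements $h(A,W)(Y-\bar Q_P)$ with $h$ in that span. The $\bar Q$-component $\omega(A){\bf\phi}_\lambda(A)g^{-1}(Y-\bar Q_P)$ lies in this tangent space only if $\omega{\bf\phi}_\lambda/g\in D^{(\bar k)}({\cal R}(\bar d))$. Otherwise the canonical gradient is its $L^2(P)$-projection onto the closure of $\{h(Y-\bar Q_P)\}$, which is a weighted projection with weight $E[(Y-\bar Q)^2\mid A,W]$. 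My plan is to argue that the HAL class is rich enough, meaning its closure is dense in $L^2$ given bounded sectional variation of $\omega{\bf\phi}_\lambda/g$, and that the assumptions on $g$ are left unrestricted. Then the $Y$-part of the tangent space is effectively saturated and the displayed expression is the canonical gradient. I would flag the exact-model case, and its projection, as deferred to the Appendix. The $Q_W$-part is nonparametric, so no projection is needed there.

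Finally, the formula for $\Psi_\lambda(P)(a)={\bf\phi}_\lambda(a)^\top\alpha_\lambda(P)$ follows by the delta method, since this map is linear in $\alpha_\lambda$: I left-multiply by ${\bf\phi}_\lambda(a)^\top$ and use the symmetry of $\Sigma_{\phi_\lambda}^{-1}$. The weight written $h(a')$ in the statement is to be read as $\omega(a')$. For an orthonormal basis ${\bf\phi}^*_\lambda$, $\Sigma_{\phi_\lambda^*}=I$, which immediately gives the sum forms in the statement.
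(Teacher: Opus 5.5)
Your proof is correct and reaches the same decomposition as the paper, but it obtains the key $\bar{Q}$-component differently.

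\textbf{The paper's route.} The paper linearizes the normal equations through an implicit-function-theorem step, which is trivial here because $\Sigma_{\phi_\lambda}$ does not depend on $P$. It then obtains the $\bar{Q}$-component in three moves:
\begin{enumerate}
\item It pretends $(A,W)$ is discrete and uses the nonparametric influence function $I(A=a,W=w)(Y-\bar{Q}_P(A,W))/\{g(a\mid w)p(w)\}$ of the point evaluation $\bar{Q}_P(a,w)$.
\item It integrates this against $\omega(a){\bf \phi}_\lambda(a)\,dQ_W(w)\,da$.
\item It appeals to finer and finer discretizations for the continuous case.
\end{enumerate}

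\textbf{Your route.} You differentiate directly along smooth submodels with score $S_W+S_A+S_Y$, using the factorization $q_W\,g\,q_Y$. This verifies $\frac{d}{d\delta}\alpha_\lambda(P_\delta)=P\{D^*_{\alpha_\lambda(),P}S\}$ in the continuous setting without any limiting argument. It also makes explicit the one place positivity enters, namely the insertion of $g(a\mid w)/g(a\mid w)$.

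\textbf{Canonicality.} Your route confronts a point the paper's proof sidesteps. The paper works entirely in the nonparametric model, where any gradient is automatically canonical. Yet the surrounding text sets up the restricted tangent space of ${\cal M}({\cal R}(\bar d))$, and the restricted case is only handled by projection in the Appendix.
\begin{itemize}
\item When ${\cal R}(\bar d)$ is the full HAL index set, your density argument is sound. The $L^2(P)$-closure of the span of all $\bar{k}$-th order splines is then all of $L^2(P)$, so the $Y$-part of the tangent space is saturated.
\item For a genuinely restricted ${\cal R}(\bar d)$, you rightly flag that the displayed expression is only a gradient, not the canonical one.
\end{itemize}
So the lemma should be read under this nonparametric, or dense-span, interpretation.

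Your reading of $h(a')$ as $\omega(a')$ is also correct.
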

{\bf Proof:}
To determine the pathwise derivative and canonical gradient we need to use the implicit function theorem. 
We have that $\alpha_{\lambda}(P)$ solves
\[
 0=\int_a \omega(a){\bf \phi}_\lambda(a)(E_P\bar{Q}_P(a,W)-\alpha^{\top}{\bf \phi}_\lambda(a) )da.\]
 This is an equation of the form $U(\alpha(P),\Psi(P))=0$, where our notation suppresses the dependence on $\lambda$. 
 To understand the derivative of $\alpha(P)$ as a functional of $P$, consider $\psi_n=E_n\bar{Q}_n$ and $\Psi(P)=E_P \bar{Q}_P$ and by definition of $\alpha_n=\alpha(\bar{Q}_n))$ and $\alpha(P)=\alpha(\bar{Q}_P)$, we have $U(\alpha_n,\psi_n)=U(\alpha(P),\Psi(P))=0$.
 Thus, 
 \[
 \alpha_n-\alpha(P)\approx \Sigma_{\phi_1}^{-1}\int_a \omega(a){\bf \phi}_\lambda(a)(\psi_n-\Psi(P))(a)da,\]
 where $\Sigma_{\phi_1}=\int_a \omega(a) {\bf \phi}_{1}(a){\bf \phi}_{1}(a)^{\top}da$.
 Moreover, \[
 \psi_n-\Psi(P)=E_n\bar{Q}_n-E_P\bar{Q}_P\approx (Q_{W,n}-Q_{W})\bar{Q}_P(a,\cdot)+Q_W (\bar{Q}_n-\bar{Q}_P).\]
 Thus,
 \[
 \begin{array}{l}
 \alpha_n-\alpha(P)\approx \Sigma_{\phi_1}^{-1}\int_a \omega(a){\bf \phi}_\lambda(a)Q_W(\bar{Q}_n-\bar{Q}_P)(a,\cdot)da\\
 +
 \Sigma_{\phi_1}^{-1}(Q_{W,n}-Q_{W})\int_a \omega(a){\bf \phi}_\lambda(a)\bar{Q}_P(a,\cdot)da .\end{array}
 \]
Viewing $\alpha_n=\alpha(P_{\epsilon})$, the above demonstrates 
that each of the two terms contribute a component of the canonical gradient $D^*_{\alpha_{\lambda}(),P}$ of the pathwise derivative $\frac{\partial}{\partial\epsilon} \alpha_{\lambda}(P_{\epsilon})$. The second term contributes the component due to the dependence on $Q_W$:
\[
D^*_{\alpha_{\lambda}(),Q_W,P}=\int_a \Sigma_{\phi_\lambda}^{-1}{\bf \phi}_\lambda(a)(\bar{Q}_P(a,W)-\Psi(P)(a) )\omega(a)da.
\]
 So it remains to obtain the canonical gradient of $P\rightarrow E_W\int_a \omega(a){\bf \phi}_\lambda(a)\bar{Q}_P(a,W)da$ treating $E_W$ as fixed.
 In a nonparametric model, and acting as if $W,A$ is discrete, the canonical gradient of $\bar{Q}_P(a,w)$ for a given $(a,w)$ is given by 
 $D_{\bar{Q}(),(a,w),P}=I(A=a,W=w)/(g(a\mid w)p(w))(Y-\bar{Q}_P(A,W))$.
 By the delta-method this yields the following canonical gradient
 \[
 \begin{array}{l}
\int_w p(w) \int_a \omega(a){\bf \phi}_\lambda(a) \frac{I(A=a,W=w)}{g(a\mid w)p(w)}(Y-\bar{Q}_P(A,W)) da\\
=\frac{\omega(A){\bf \phi}_\lambda(A)}{g(A\mid W)}(Y-\bar{Q}_P(A,W)).\end{array}
\]
Thus, the canonical gradient contribution from the dependence of $\alpha_{\lambda}$ on $\bar{Q}_P$ is given by:
\[
D^*_{\alpha(),\bar{Q},P}=\omega(A)\Sigma_{\phi_\lambda}^{-1}{\bf \phi}_\lambda(A)\frac{1}{g(A\mid W)}(Y-\bar{Q}_P(A,W)).\]
This is also the canonical gradient if $(W,A)$ is not discrete, which follows by using a finer and finer discretization and applying the above derivation.
$\Box$


\subsection{Representation of canonical gradient that proves its variance is $O(n(\lambda))$}
\begin{lemma}\label{lemma1}
Define $D^*_{\phi,P}\equiv \phi(A)\omega(A)/g(A\mid W)(Y-\bar{Q}_P(A,W))+\int_{a'}\phi(a')(\bar{Q}_P(a',W)-\Psi(P)(a'))h(a')da'$.
Define the following inner product on $D^{(k)}([0,1])$:
\[
\langle \phi_1,\phi_2 \rangle_{\psi,P}\equiv 
E_P D^*_{\phi_1,P}D^*_{\phi_2,P}.\]
Let $\phi_j^*$, $j\in {\cal R}_{\lambda}$, be an orthonormal basis of $D^{(k)}({\cal R}_{\lambda})$ w.r.t. this inner product.
Then, \[
 D^*_{\Psi_{\lambda}(),a,P}=\sum_{j\in {\cal R}_{\lambda}}\phi_j^*(a)D^*_{\phi_j^*,P}(O),\]
 which is now a sum of uncorrelated random variables.
 The variance is then represented as:
 \[
 \mbox{VAR}_P D^*_{\Psi_{\lambda}(),a,P}=\sum_{j\in {\cal R}_{\lambda}}\{\phi_j^*(a)\}^2.
 \]
 This inner product can be represented as 
\[
\begin{array}{l}
\langle \phi_1,\phi_2\rangle_{\psi,P}= \int a \phi_1\phi_2(a) d\mu_P(a)
+\int_{a_1,a_2}\phi_1(a_1)\phi_2(a_2)\rho_P(a_1,a_2) da_1 da_2,
\end{array}
\]
where
\begin{eqnarray*}
 d\mu_P(a)&=&
 \frac{h^2(a)}{g(a\mid W)}\sigma^2(a,W) da\\
\rho_P(a_1,a_2)&\equiv& E_W(\bar{Q}_P(a_1,W)-\Psi(P)(a_1))(\bar{Q}_P(a_2,W)-\Psi(P)(a_2)) .
\end{eqnarray*}
\end{lemma}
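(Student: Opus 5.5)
The plan rests on one observation: the map $\phi\mapsto D^*_{\phi,P}$ is linear in $\phi$. Here I read the weight $h$ in the definition of $D^*_{\phi,P}$ and in Lemma \ref{lem:canonical gradient} as the weight $\omega$.

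\textbf{Step 1 (kernel representation).} First I rewrite the canonical gradient of Lemma \ref{lem:canonical gradient} as $D^*_{\Psi_\lambda(),a,P}=D^*_{K_a,P}$. Here
\[
K_a(\cdot)\equiv {\bf \phi}_\lambda(a)^{\top}\Sigma_{\phi_\lambda}^{-1}{\bf \phi}_\lambda(\cdot)\in D^{(k)}({\cal R}_\lambda),
\]
which is the reproducing kernel of the working model with respect to the $\omega$-weighted $L^2$ inner product. Substituting $\phi=K_a$ into the definition of $D^*_{\phi,P}$ reproduces both the $\bar{Q}$-component and the $Q_W$-component of Lemma \ref{lem:canonical gradient}, by direct comparison.

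\textbf{Step 2 (expansion and variance).} Since $K_a$ lies in the finite-dimensional space $D^{(k)}({\cal R}_\lambda)$, I expand it in the orthonormal basis $\{\phi_j^*\}$ for $\langle\cdot,\cdot\rangle_{\psi,P}$:
\[
K_a=\sum_{j}c_j(a)\phi_j^*, \qquad c_j(a)=\langle K_a,\phi_j^*\rangle_{\psi,P}.
\]
Linearity then gives $D^*_{\Psi_\lambda(),a,P}=\sum_j c_j(a)D^*_{\phi_j^*,P}$. Each $D^*_{\phi,P}$ has mean zero under $P$: the first term has conditional mean zero given $(A,W)$, and the second is centred because $E_W\bar{Q}_P(a',W)=\Psi(P)(a')$. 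Hence
\[
\mathrm{Cov}_P\bigl(D^*_{\phi_i^*,P},D^*_{\phi_j^*,P}\bigr)=\langle\phi_i^*,\phi_j^*\rangle_{\psi,P}=\delta_{ij},
\]
so the summands are uncorrelated. This gives $\mathrm{VAR}_P D^*_{\Psi_\lambda(),a,P}=\sum_j c_j(a)^2$.

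\textbf{Main obstacle: identifying $c_j(a)$ with $\phi_j^*(a)$.} The statement writes the coefficient as $\phi_j^*(a)$. That identification holds only if $K_a$ also reproduces point evaluation in the $\psi$-inner product, i.e. if $\langle K_a,\phi\rangle_{\psi,P}=\phi(a)$ for all $\phi$ in the span. In general $K_a$ reproduces evaluation only in the $\omega$-inner product, not in $\langle\cdot,\cdot\rangle_{\psi,P}$. I would handle this in two steps:
\begin{itemize}
\item First, check whether a suitable choice of $\omega$ makes the two inner products agree on the span, so that $c_j(a)=\phi_j^*(a)$ literally.
\item Otherwise, state the result with coefficients $c_j(a)=\langle K_a,\phi_j^*\rangle_{\psi,P}$, equivalently $\sum_j c_j(a)^2=\|K_a\|_{\psi,P}^2$.
\end{itemize}
The order-$n(\lambda)$ claim then follows from boundedness of $\omega/g$ and $\sigma^2$. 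Under those bounds, $\|K_a\|^2_{\psi,P}\lesssim\|K_a\|^2_{L^2(\omega)}=K_a(a)={\bf\phi}_\lambda(a)^\top\Sigma_{\phi_\lambda}^{-1}{\bf\phi}_\lambda(a)=O(n(\lambda))$ for a spline basis with well-conditioned Gram matrix.

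\textbf{Step 3 (inner-product representation).} Multiply out $E_P D^*_{\phi_1,P}D^*_{\phi_2,P}$ into three types of term.
\begin{itemize}
\item \emph{Cross terms.} They vanish after conditioning on $(A,W)$, because the integral term is a function of $W$ only while $E[Y-\bar{Q}_P\mid A,W]=0$.
\item \emph{Residual term.} Conditioning on $(A,W)$ gives $E[\phi_1\phi_2(A)\omega^2(A)g^{-2}(A\mid W)\sigma^2(A,W)]$, where $\sigma^2(a,W)$ is the conditional variance of $Y$. Integrating $A$ against $g(\cdot\mid W)$ yields $\int\phi_1\phi_2(a)\,d\mu_P(a)$ with $d\mu_P(a)=E_W[\omega^2(a)\sigma^2(a,W)/g(a\mid W)]\,da$, i.e. the paper's expression read with $h=\omega$ and with $E_W$ taken.
\item \emph{Integral term.} Fubini gives $\int\!\!\int\phi_1(a_1)\phi_2(a_2)\,\omega(a_1)\omega(a_2)\,\rho_P(a_1,a_2)\,da_1da_2$.
\end{itemize}
This matches the stated form once the weights $\omega(a_1)\omega(a_2)$ are absorbed into $\rho_P$.
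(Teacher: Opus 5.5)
Your Step 3 is the whole of the paper's proof. The paper derives only the inner-product representation, by the same expansion of $E_P D^*_{\phi_1,P}D^*_{\phi_2,P}$, and your bookkeeping is more careful than its display: you justify dropping the cross terms, keep $E_W$ inside $d\mu_P$, and retain the factor $\omega(a_1)\omega(a_2)$ in the $\rho_P$ term, which the paper loses. Do add the paper's closing remark that the form is positive definite on $D^{(k)}({\cal R}_\lambda)$, since otherwise the orthonormal basis you are given need not exist. The $\rho_P$ part is a covariance, hence positive semidefinite, and the $\mu_P$ part is positive definite on splines once $\omega>0$ and $\sigma^2>0$.

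Your diagnosis of the first two claims is correct, the paper's proof never addresses them, and nothing in your own argument fails. The statement conflates two bases. For the $\omega$-orthonormal basis of Lemma~\ref{lem:canonical gradient} the coefficients are $\phi_j^*(a)$ but the summands are correlated; for a $\langle\cdot,\cdot\rangle_{\psi,P}$-orthonormal basis the summands are uncorrelated but the coefficients are your $c_j(a)$.

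A scaling check settles your first bullet. Replacing $\omega$ by $t\omega$ for a constant $t>0$ leaves $\Psi_\lambda(P)$, and hence $D^*_{\Psi_\lambda(),a,P}$ and its variance, unchanged. It does, however, multiply every $D^*_{\phi,P}$ by $t$ and $\langle\cdot,\cdot\rangle_{\psi,P}$ by $t^2$, so the $\langle\cdot,\cdot\rangle_{\psi,P}$-orthonormal basis becomes $\phi_j^*/t$. The lemma's two right-hand sides therefore scale as $t^{-1}$ and $t^{-2}$, whereas your $c_j(a)$ are invariant. A normalization of $\omega$ can thus match only one scalar, which in general suffices only when $n(\lambda)=1$. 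For $n(\lambda)\geq 2$, the representation for all $a$ is equivalent to the two inner products coinciding on the span. That requires tuning $\omega$ to the unknown $g$, $\sigma^2$ and $\bar{Q}_P$ (for instance $\omega(a)=1/E_W[\sigma^2(a,W)/g(a\mid W)]$ when $\rho_P\equiv 0$), and it fails in general for a prespecified weight such as the marginal density of $A$.

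So state the result as in your second bullet. In an $\omega$-orthonormal basis, $\|K_a\|^2_{\psi,P}$ is exactly the quadratic form ${\bf \phi}^{*,\top}_\lambda(a)\Sigma_\lambda{\bf \phi}^*_\lambda(a)$ of Lemma~\ref{lem:canonicalvariance}. Your bound $\|K_a\|^2_{\psi,P}\lesssim K_a(a)$ is that lemma's maximal-eigenvalue assumption made explicit. For the $\rho_P$ part of that bound you also need Cauchy--Schwarz together with $\int\omega(a)\,\mathrm{Var}_W\bar{Q}_P(a,W)\,da<\infty$ (e.g.\ bounded $\bar{Q}_P$ and integrable $\omega$), not only bounded $\omega/g$ and $\sigma^2$.
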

\begin{proof}
This inner product can be worked out as a more regular looking inner product as follows: 
\[
\begin{array}{l}
\langle \phi_1,\phi_2\rangle_{\psi}\equiv 
E_P \phi_1(A)\phi_2(A) \frac{h^2(A)}{g^2(A\mid W)}(Y-\bar{Q}_P(A,W))^2\\
+E_P\int_{a_1}\phi_1(a_1)(\bar{Q}_P(a_1,W)-\Psi(P)(a_1))h(a_1)da_1\int_{a_2}\phi_2(a_2)(\bar{Q}_P(a_2,W)-\Psi(P)(a_2))h(a_2)da_2\\
=E_W\int_a \phi_1\phi_2 \frac{h^2(a)}{g(a\mid W)}\sigma^2(a,W) da\\
+\int_{a_1,a_2}\phi_1(a_1)\phi_2(a_2) \left\{ E_W(\bar{Q}_P(a_1,W)-\Psi(P)(a_1))(\bar{Q}_P(a_2,W)-\Psi(P)(a_2)) \right\} da_1 da_2\\
\equiv \int_a \phi_1 \phi_2 h^2(a) E_W\frac{\sigma^2(a,W)}{g(a\mid W)} da
+\int_{a_1,a_2}\phi_1(a_1)\phi_2(a_2)\rho_P(a_1,a_2) da_1 da_2\\
\\
\equiv \int a \phi_1\phi_2(a) d\mu_P(a)
+\int_{a_1,a_2}\phi_1(a_1)\phi_2(a_2)\rho_P(a_1,a_2) da_1 da_2,
\end{array}
\]
where $d\mu_P(a)$ is a positive measure, and $\rho_P(a_1,a_2)$ is defined as the covariance of $\bar{Q}_P(a_1,W)$ and $\bar{Q}_P(a_2,W)$.
This inner product is indeed a bi-linear mapping and $\langle \phi,\phi\rangle_{\psi}=0$ implies $\phi=0$, making it a valid inner product.
\end{proof}

Even without using this orthonormal basis
we could define $\tilde{\phi}_\lambda\equiv \Sigma_{\phi_\lambda}^{-1}{\bf \phi}_\lambda$, giving us a new basis $\tilde{\phi}_{\lambda,j}$, $j\in {\cal R}_\lambda$. We can then still represent
\[
D^*_{\Psi_{\lambda}(),a,P}=\sum_{j\in {\cal R}_{\lambda}}\tilde{\phi}_j(a) D^*_{\tilde{\phi}_j,P}(O),\]
as a linear combination of $D^*_{\tilde{\phi}_j,P}$ with coefficients $\tilde{\phi}_j(a)$, $j\in {\cal R}_\lambda$. 
But in that case, it would not be a sum of uncorrelated random variables. 
$\Psi_{\lambda}(P)(a)$ is invariant w.r.t. any linear transformation of the basis functions so that this representation applies to any original basis choice and corresponding $\alpha_{\lambda}(P)$. The utility of this representation is that it shows that the variance of this canonical gradient grows with $n(\lambda)$ as $n(\lambda)$.

One can obtain a similar representation of the variance of the canonical gradient as a sum of squares of the orthonormal basis function for other inner products. 
\begin{lemma}\label{lem:canonicalvariance}
We have that $\Psi(P_{\lambda})(x)=\sum_{j\in {\cal R}_\lambda}\alpha_{\lambda,P}(j)\phi_j^*(x)$ so that, by Lemma \ref{lem:canonical gradient}, its canonical gradient is $D^*_{\Psi_{\lambda}(),x,P}=\sum_{j\in {\cal R}_\lambda}D_{\alpha_{\lambda,j}(),P}\phi_j^*(x)$.
This is a linear combination of the $n(\lambda)$ canonical gradients $D_{\alpha_{\lambda,j}(),P}=D^*_{\phi_j^*,P}$ of $\alpha_{\lambda,P}(j)$.
Let ${\bf \phi}_\lambda^*(x)=(\phi_j^*(x): j\in {\cal R}_\lambda)$ be this vector of coefficients.
Then, 
\[
\mbox{VAR}D^*_{\Psi_{\lambda}(),x,P}={\bf \phi}_\lambda^{*,\top}\Sigma_{\lambda}{\bf \phi}_\lambda^*,\]
where $\Sigma_{\lambda}$ is the covariance matrix of $n(\lambda)$-dimensional $D_{\alpha_{\lambda}(),P}$.
We have an eigenvalue decomposition of $\Sigma_{\lambda}=S_{\lambda}D_{\lambda}S^{\top}_{\lambda}$, where $D_{\lambda}$ is a diagonal matrix with its eigenvalues $\gamma(j)$, $j\in {\cal R}_\lambda$. We can then define $\Sigma_{\lambda}^{1/2}= S_{\lambda}D^{1/2}_{\lambda}$. Then it follows that this variance is given by 
\[
\sigma^2_{\lambda}(x)n(\lambda)= \sum_{j\in {\cal R}_\lambda} \{\phi_j^*(x)\}^2\gamma(j).\]
If we assume that the maximal eigenvalue $\max_j \gamma(j)=O(1)$, then it follows that $\sigma^2_{\lambda}(x)n(\lambda)\sim 
\sum_{j\in {\cal R}_\lambda}\{\phi_j^*(x)\}^2$.
\end{lemma}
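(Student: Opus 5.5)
The plan is to treat $\Psi_\lambda(P)(x)$ as a fixed linear combination of the coordinates of $\alpha_{\lambda}(P)$ and then diagonalize the covariance of the coordinate gradients.

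\textbf{Step 1 (gradient as a linear combination).} Expand $\Psi_\lambda(P)(x)=\sum_{j\in{\cal R}_\lambda}\alpha_{\lambda,P}(j)\phi_j^*(x)$ in the chosen basis ${\bf \phi}_\lambda^*$. Since $x$ is fixed, $P\mapsto\Psi_\lambda(P)(x)$ is a linear map of the finite-dimensional pathwise differentiable parameter $\alpha_\lambda(P)$ with the deterministic coefficient vector ${\bf \phi}_\lambda^*(x)$. By Lemma \ref{lem:canonical gradient} and the delta method, its canonical gradient is therefore ${\bf \phi}_\lambda^{*}(x)^\top D^*_{\alpha_\lambda(),P}=\sum_j \phi_j^*(x)D_{\alpha_{\lambda,j}(),P}$. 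In the DRC example, Lemma \ref{lemma1} identifies the $j$-th coordinate as $D^*_{\phi_j^*,P}$; there the basis is already transformed by $\Sigma_{\phi_\lambda}^{-1}$, as in the $\tilde\phi$ remark, which does not affect $\Psi_\lambda$.

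\textbf{Step 2 (variance formula).} Each canonical gradient has mean zero, so $\mbox{VAR}\,D^*_{\Psi_\lambda(),x,P}=E_P\{{\bf \phi}_\lambda^{*\top}(x)D_{\alpha_\lambda(),P}\}^2={\bf \phi}_\lambda^{*\top}(x)\Sigma_\lambda{\bf \phi}_\lambda^*(x)$. Here $\Sigma_\lambda=P D_{\alpha_\lambda(),P}D_{\alpha_\lambda(),P}^\top$ is symmetric positive semidefinite, so the spectral theorem gives $\Sigma_\lambda=S_\lambda D_\lambda S_\lambda^\top$ with $S_\lambda$ orthogonal. Substituting yields
\[
\mbox{VAR}\,D^*_{\Psi_\lambda(),x,P}=\sum_j\gamma(j)\{(S_\lambda^\top{\bf \phi}_\lambda^*(x))_j\}^2 .
\]
Absorbing the rotation $S_\lambda$ into the basis gives an orthonormal change of basis of the same linear span, since $\Psi_\lambda$ is invariant under linear reparametrization. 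Equivalently, choose ${\bf \phi}^*_\lambda$ to be the eigenbasis. The variance then takes the stated form $\sum_j\{\phi_j^*(x)\}^2\gamma(j)$, where $\sigma^2_\lambda(x)n(\lambda)$ is read as the notation for this variance. When $\gamma\equiv1$ this reduces to Lemma \ref{lemma1}.

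\textbf{Step 3 (order statement).} Suppose $\max_j\gamma(j)=O(1)$. Then $\sum_j\{\phi_j^*(x)\}^2\gamma(j)\le\max_j\gamma(j)\sum_j\{\phi_j^*(x)\}^2$, which gives the upper half of the claimed $\sim$. The reverse inequality would require $\min_j\gamma(j)$ bounded away from zero. I would state this explicitly as the implicit reading of $\sim$, or justify it via positivity of $g$ and nondegeneracy of $\sigma^2(a,W)$. The argument is that the $\mu_P$-part of the inner product in Lemma \ref{lemma1} dominates a multiple of the $\omega$-weighted $L^2$ norm, so the eigenvalues are bounded below.

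\textbf{Main obstacle.} The main obstacle is the bookkeeping of which basis is orthonormal: the $\omega$-inner-product basis versus the eigenbasis of $\Sigma_\lambda$. It has to be consistent so that the rotation $S_\lambda$ can legitimately be absorbed into the $\phi_j^*$. I would resolve this by stating the identity in the rotated basis $S_\lambda^\top{\bf \phi}^*_\lambda$, which is again orthonormal for the $\omega$-inner product. The lower eigenvalue bound is the only genuinely non-algebraic point.
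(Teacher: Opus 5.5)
Your proposal is correct and follows the same route as the paper. The paper gives no proof separate from the chain already contained in the statement: gradient ${\bf \phi}_\lambda^{*}(x)^{\top}D_{\alpha_\lambda(),P}$, quadratic form ${\bf \phi}_\lambda^{*,\top}(x)\Sigma_\lambda{\bf \phi}_\lambda^*(x)$, eigendecomposition, then the eigenvalue bound. Your two sharpenings are both right:
\begin{itemize}
\item The eigenvalue-weighted identity holds only after rotating to the basis $S_\lambda^{\top}{\bf \phi}_\lambda^*$. This is what the paper's $\Sigma_\lambda^{1/2}=S_\lambda D_\lambda^{1/2}$ tacitly does, and the rotation leaves $\sum_j\{\phi_j^*(x)\}^2$ unchanged because $S_\lambda$ is orthogonal.
\item The two-sided $\sim$ additionally needs $\min_j\gamma(j)$ bounded away from zero.
\end{itemize}

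One small slip in your optional justification of that lower bound. Positivity $g\geq\delta$ bounds $1/g$ from above, so it serves the upper bound $\max_j\gamma(j)=O(1)$, not the lower bound. A lower bound $c\int\omega f^2$ on the $\mu_P$-part of Lemma~\ref{lemma1} comes instead from $\sigma^2(a,W)$ being bounded below together with $\omega\geq c\,p_A$, where $p_A$ is the marginal density of $A$; one uses Jensen, $E_W\{1/g(a\mid W)\}\geq 1/p_A(a)$, and the stabilized choice $\omega=p_A$ is an example.
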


Either way, we have
\[
\mbox{VAR}_P D^*_{\Psi_{\lambda}(),a,P}=\mbox{VAR}_P D^*_{\Psi_{\lambda}(),a,\bar{Q},P}+\mbox{VAR}_P D^*_{\Psi_{\lambda}(),a,Q_W,P},\]
and for variance estimation one would generally just use this direct representation. 


Informally, this shows that the variance of a TMLE of the projection curve $\Psi_{\lambda}(P_0)$ in $D^{(k)}({\cal R}_{\lambda})$ will behave as $O(n(\lambda)/n)$ corresponding with a rate of convergence $(n(\lambda)/n)^{1/2}$. Combined with the bias being $O((n(\lambda)^{-(k+1)})$, this yields the desired rate of convergence for the TMLE w.r.t. the true curve $\psi_0$.

\subsection{Defining the TMLE of pathwise differentiable approximation $\Psi_{\lambda}(P_0)$ of DRC}
Let $\bar{Q}_n$ be an initial estimator of $\bar{Q}_0$, and let $g_n$ be an estimator of $g_0$. 
If we assume $\bar{Q}_0\in D^{(\bar{k})}([0,1]^{\bar{d}})$ and $g_0\in D^{(m)}([0,1]^{\bar{d}})$, then we can use $\bar{k}$-th order and $m$-th order HAL-MLEs. 
For example, for $g_0(a\mid W)$ we might use the link function $\lambda(a\mid W)=\exp(\sum_{j\in {\cal R}_1(d)}\alpha(j)\bar{\phi}_j)$ with $\lambda(a\mid W)=g(a\mid W)/\bar{G}(a\mid W)$, and $\bar{G}(a\mid W)=P(A>a\mid W)$, and use glmnet Poisson to obtain an HAL-MLE $g_n$. Consider a least favorable path through $\bar{Q}_n$ with $n(\lambda)$-dimensional clever covariate \[
C_{g_n,\lambda}(A,W)=\omega(A)\Sigma_{\phi_\lambda}^{-1}{\bf \phi}_\lambda(A)\frac{1}{g_n(A\mid W)}.\]
We can define the fluctuation model $\bar{Q}_{n,\epsilon}=\bar{Q}_n+\epsilon C_{g_n,\lambda}$ and compute the least squares estimator $\epsilon_n$, resulting in the TMLE update $\bar{Q}_{n,\lambda}^*$ that solves $P_n D^*_{\alpha_{\lambda}(),\bar{Q}_{n,\lambda}^*,g_n}=0$.
Alternatively, we can use $C_{\lambda}(A,W)=\omega(A)\Sigma_{\phi_\lambda}^{-1}{\bf \phi}_\lambda(A)$ as the clever covariate and define the fluctuation model $\bar{Q}_{n,\epsilon}=\bar{Q}_n+\epsilon C_{\lambda}$.
We can then compute the weighted least squares estimator using weights $1/g_n(A\mid W)$, resulting in a TMLE $\bar{Q}_{n,\lambda}^*$.
Either one of these two TMLEs solves $P_n D^*_{\alpha_{\lambda}(),\bar{Q}_{n,\lambda}^*,g_n}=0$. 

We use short-hand notation $P_{n,\lambda}^*=(\bar{Q}_{n,\lambda}^*,g_{n,\lambda})$, and denote the efficient influence curve score equation as $P_n D^*_{\alpha_{\lambda}(),P_{n,\lambda}^*}=0$. 
The TMLE $\bar{Q}_{n,\lambda}^*$ maps into an estimator of the coefficients given by $\alpha_{\lambda}(P_{n,\lambda}^*)$:
\[
\alpha_{\lambda}(P_{n,\lambda}^*)=\arg\min_{\alpha} \int_a \omega(a)(Q_{W,n}\bar{Q}_{n,\lambda}^*(a,W)-\alpha^{\top}\phi^*_\lambda(a))^2da.
\]
 
The plug-in TMLE is then given by $\Psi_{\lambda}(P_{n,\lambda}^*)(a)=\alpha_{\lambda}(P_n^*)^{\top}{\bf \phi}_\lambda^*(a)$ and in our analysis of this TMLE we can use that for all $a$,
$P_n D^*_{\Psi_{\lambda}(),a,P_{n,\lambda}^*}=0$.

\subsection{Asymptotic normality theorem for sieve TMLE of DRC}
We now provide the following theorem for the asymptotic normality of the TMLE $\Psi_{\lambda}(P_{n,\lambda}^*)(a)$ of the DRC $\psi_0$, based on a fixed, non-data dependent, sieve of working models $D^{(k)}({\cal R}_{\lambda})\subset D^{(k)}([0,1])$.

\begin{theorem}\label{chtargetfunctiontmleI}
Let $O=(W,A,Y)\sim P_0\in {\cal M}$ with ${\cal M}$ nonparametric beyond a possible model for the conditional density of the continuous $A$ given $W$, and let the target function be the confounder-adjusted DRC $\Psi(P_0)(a)=E_{P_0}\bar{Q}_0(a,W)$, $\bar{Q}_P=E_P(Y\mid A,W)$. Assume $\psi_0\in D^{(k)}([0,1])$, $\bar{Q}_0\in D^{(\bar{k})}([0,1]^{\bar{d}})$, $g_0\in D^{(m)}([0,1]^{\bar{d}})$ with integers $k,\bar{k},m\geq 1$, set $k^*=k+1$, $\bar{k}^*=\bar{k}+1$, $m^*=m+1$. We also assume positivity $\min_{a\in [0,1]}g_0(a\mid W)>\delta>0$ $P_0$-a.e., and the smoothness condition relating the nuisance rates to the working-model dimension,
\[
 n^{-\bar{k}^*/(2\bar{k}^*+1)}n^{-m^*/(2m^*+1)} = o^-( n^{-(k^*+1)/(2k^*+1)}).\]
Let the $n(\lambda)$ basis functions be chosen so that our uniform approximation error applies:
 \[
 \sup_{\psi\in D^{(k)}_M([0,1])}\inf_{\alpha}\left\lvert{ \sum_{j\in {\cal R}_\lambda}\alpha(j)\phi_j^*-\psi}\right\rvert_{\infty}=O^+(n(\lambda)^{-k^*}).\]

Let $\Psi_{\lambda}(P_{n,\lambda}^*)(a)=\alpha_{\lambda}(P_{n,\lambda}^*)^{\top}{\bf \phi}_\lambda^*(a)$ be the plug-in TMLE defined above, constructed with an orthonormal basis $\phi_\lambda^*(a)$ so that $\Sigma_{\phi^*_\lambda}^{-1}$ is the identity matrix. From Lemma~\ref{lem:canonical gradient}, let $ D^*_{\Psi_{\lambda}(),a,P_{n,\lambda}^*}$ denote the gradient of $\Psi_{\lambda,a}$ evaluated at $P^*_{n,\lambda}$.

{\bf (i) Asymptotic normality with respect to the projection $\Psi_{\lambda}$.}
Uniformly in $a$,
\[
(n/n(\lambda))^{1/2}(\Psi_{\lambda}(P_{n,\lambda}^*)-\Psi_{\lambda}(P_0))(a)=
(n/n(\lambda))^{1/2}(P_n-P_0)D^*_{\Psi_{\lambda}(),a,P_0}+r_n(a),\]
with $\sup_a\mid r_n(a)\mid=o_P(1)$. Consequently, for each $a$,
\[
\sigma_{\lambda,P_0}^{-1}(a)(n/n(\lambda))^{1/2}(\Psi_{\lambda}(P_{n,\lambda}^*)-\Psi_{\lambda}(P_0))(a)\Rightarrow_d N(0,1),\]
with $\sigma^2_{\lambda,P_0}(a)=P_0\{D^*_{\Psi_{\lambda}(),a,P_0}\}^2$, and empirical process theory over the one-dimensional class $\{D^*_{\Psi_{\lambda}(),a,P_0}:a\}$ yields $0.95$-simultaneous confidence bands of width $O(\log n\,(n(\lambda)/n)^{1/2})$.

{\bf (ii) Asymptotic normality with respect to the target function $\psi_0$.}
By balancing the approximation bias $O^+(n(\lambda)^{-k^*})$ against the standard error by undersmoothing, i.e.\ taking $n(\lambda)\sim n^{1/(2k^*+1)}$ up to $\log n$-factors with $c$ large enough that $n(\lambda)^{-k^*}(\log n)^{c}=o((n(\lambda)/n)^{1/2})$, the same result holds with respect to the true target function $\psi_0=\Psi(P_0)$: uniformly in $a$,
\[
(n/n(\lambda))^{1/2}(\Psi_{\lambda}(P_{n,\lambda}^*)-\Psi(P_0))(a)=
(n/n(\lambda))^{1/2}(P_n-P_0)D^*_{\Psi_{\lambda}(),a,P_0}+r_n(a),\]
with $\sup_a\mid r_n(a)\mid=o_P(1)$. Consequently, for each $a$,
\[
\sigma_{\lambda,P_0}^{-1}(a)(n/n(\lambda))^{1/2}(\Psi_{\lambda}(P_{n,\lambda}^*)-\Psi(P_0))(a)\Rightarrow_d N(0,1),\]
and the same one-dimensional class $\{D^*_{\Psi_{\lambda}(),a,P_0}:a\}$ yields $0.95$-simultaneous confidence bands for $\psi_0$ of width $O(\log n\,(n(\lambda)/n)^{1/2})$.

\end{theorem}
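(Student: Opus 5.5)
The argument follows the usual TMLE expansion for the pathwise differentiable parameter $\Psi_{\lambda,a}$, applied uniformly in $a$, with the constants tracked in $n(\lambda)$.

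\textbf{Exact expansion.} By Lemma~\ref{lem:canonical gradient}, $\Psi_\lambda(P)(a)={\bf\phi}^{*\top}_\lambda(a)\alpha_\lambda(P)$ and $\alpha_\lambda$ is linear in $Q_W\bar Q_P$. This gives, for every $a$,
\[
\Psi_\lambda(P)(a)-\Psi_\lambda(P_0)(a)=-P_0D^*_{\Psi_\lambda(),a,P}+R_{2,a}(P,P_0),
\]
with
\[
R_{2,a}(P,P_0)={\bf\phi}^{*\top}_\lambda(a)\int\omega(a'){\bf\phi}^*_\lambda(a')\,E_{W}\Big\{\frac{(g-g_0)(a'\mid W)}{g(a'\mid W)}(\bar Q-\bar Q_0)(a',W)\Big\}da'.
\]
To verify this, I would compute $P_0D^*_{\Psi_\lambda(),a,P}$ directly. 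The $Q_W$-component has mean zero under $P_0$ after re-centering. The $\bar Q$-component has mean $\int\omega\phi\phi\,E_W[(g_0/g)(\bar Q_0-\bar Q)]$, and subtracting the linear term leaves the double-robust product.

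I would then apply this at $P=P^*_{n,\lambda}$. Since the TMLE solves $P_nD^*_{\Psi_\lambda(),a,P^*_{n,\lambda}}=0$ for all $a$ (it solves the vector equation for $\alpha_\lambda$), we get
\[
\Psi_\lambda(P^*_{n,\lambda})(a)-\Psi_\lambda(P_0)(a)=(P_n-P_0)D^*_{\Psi_\lambda(),a,P_0}+(P_n-P_0)\{D^*_{a,P^*_{n,\lambda}}-D^*_{a,P_0}\}+R_{2,a}.
\]
Multiplying by $(n/n(\lambda))^{1/2}$, part (i) reduces to showing that the last two terms are $o_P((n(\lambda)/n)^{1/2})$ uniformly in $a$. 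The leading term has variance $\sigma^2_\lambda(a)\asymp n(\lambda)$ by Lemma~\ref{lemma1} and Lemma~\ref{lem:canonicalvariance}. Since $\sup_a\|{\bf\phi}^*_\lambda(a)\|_2^2=O(n(\lambda))$ for localized splines, and positivity bounds $1/g$, I can apply Lindeberg's CLT to the triangular array $D^*_{a,P_0}/\sigma_\lambda(a)$. The Lindeberg condition holds because $\|D^*_{a,P_0}\|_\infty=O(n(\lambda))=o(n^{1/2}\sigma_\lambda)$ when $n(\lambda)=o(n)$. This yields the pointwise $N(0,1)$ limit.

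\textbf{Second-order remainder.} Because the basis is local and orthonormal, $|{\bf\phi}^{*\top}_\lambda(a){\bf\phi}^*_\lambda(a')|\lesssim n(\lambda)$ on a set of $a'$ of width $O(1/n(\lambda))$ around $a$, and is negligible elsewhere. Hence $\sup_a|R_{2,a}|\lesssim\sup_{a'}|E_W[(g_n-g_0)(\bar Q^*_{n,\lambda}-\bar Q_0)](a',W)|$, up to $\log$ factors. Bounding this requires supremum-norm (or $a'$-local $L^2$) rates for the HAL-MLEs. I would use the HAL rates $O^+(n^{-\bar k^*/(2\bar k^*+1)})$ and $O^+(n^{-m^*/(2m^*+1)})$, together with the fact that the targeting step moves $\bar Q_n$ by only $O_P((n(\lambda)/n)^{1/2})$. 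The assumed rate condition, with $n(\lambda)\sim n^{1/(2k^*+1)}$, makes this product $o((n(\lambda)/n)^{1/2})=o(n^{-k^*/(2k^*+1)})$, since the condition's exponent $(k^*+1)/(2k^*+1)$ leaves slack of $n^{-1/(2k^*+1)}$ to absorb the localization factor $n(\lambda)$ if only an $L^2$ bound is available.

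\textbf{Empirical process term (main obstacle).} The term $(P_n-P_0)\{D^*_{a,P^*_{n,\lambda}}-D^*_{a,P_0}\}$, uniformly in $a$, is the hardest step because the class grows with $n(\lambda)$. I would write $D^*_{a,P}={\bf\phi}^{*\top}_\lambda(a)D^*_{\alpha_\lambda(),P}$ and bound
\[
\sup_a|\cdot|\le\sup_a\|{\bf\phi}^*_\lambda(a)\|_2\,\Big\|(P_n-P_0)\{D^*_{\alpha_\lambda(),P^*_{n,\lambda}}-D^*_{\alpha_\lambda(),P_0}\}\Big\|_2 .
\]
Each coordinate lies in a class built from $\phi^*_j\omega/g$ times $(Y-\bar Q)$, with $\bar Q,g$ in bounded sectional-variation (Donsker) classes. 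By HAL entropy bounds and a maximal inequality (e.g.\ van der Vaart--Wellner Lemma 3.4.2) with $L^2$-radius $r_n$ given by the nuisance rates, each coordinate is $O^+(n^{-1/2}r_n)$. Summing over $n(\lambda)$ coordinates gives $O^+(n(\lambda)\,n^{-1/2}r_n)$, which is $o((n(\lambda)/n)^{1/2})$ provided $n(\lambda)^{1/2}r_n\to0$. This holds under the stated rate condition, but it needs care; if it fails, I would fall back on sample splitting or on the orthogonality of the local basis to avoid the crude sum. The TMLE step $\epsilon_n$ is an $n(\lambda)$-dimensional least-squares coefficient, so the fluctuated $\bar Q^*_{n,\lambda}$ stays in a class of entropy of order $n(\lambda)\log n$, which fits into the same bound. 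Simultaneous bands then follow from Gaussian approximation to the supremum over the one-dimensional index $a$, which costs a $\log n$ factor.

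For (ii), I would add $\Psi_\lambda(P_0)-\psi_0$. This difference equals the $\omega$-weighted $L^2$ projection error. By the uniform approximation assumption and the Lebesgue constant of local spline projections, which is bounded up to $\log$ factors, it is $O^+(n(\lambda)^{-k^*})$ in supremum norm. The undersmoothing choice of $n(\lambda)$ with the $(\log n)^c$ slack makes it $o((n(\lambda)/n)^{1/2})$, so (ii) follows from (i).
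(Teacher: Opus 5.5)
Your proposal is correct and follows the same skeleton as the paper's proof. Both use the exact expansion $\Psi_\lambda(P^*_{n,\lambda})(a)-\Psi_\lambda(P_0)(a)=(P_n-P_0)D^*_{\Psi_\lambda(),a,P^*_{n,\lambda}}+R_{\Psi_\lambda,a}(P^*_{n,\lambda},P_0)$, relying on the targeting step and the empirical $Q_W$ to give $P_nD^*_{\Psi_\lambda(),a,P^*_{n,\lambda}}=0$. Both then control a double-robust remainder, bound an empirical-process term via HAL entropy, apply the CLT to the leading term, and add the projection bias for (ii).

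Where you differ is in how the two error terms are bounded.

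\emph{Remainder.} The paper uses the crude bound $O(\|g-g_0\|_\infty\|\bar Q-\bar Q_0\|_\infty\, n(\lambda))$. It then spends the extra factor $n^{-1/(2k^*+1)}$ built into the rate condition on absorbing $n(\lambda)$. Your localization argument removes that factor, so your remainder requirement is weaker, and your fallback reproduces the paper's bound.

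\emph{Empirical-process term.} The paper divides the gradient by $n(\lambda)$, placing it in a fixed bounded Donsker class with $L^2$-radius $r_n n(\lambda)^{-1/2}$. This gives $n(\lambda)n^{-1/2}J_\infty(r_n n(\lambda)^{-1/2})$ for each $a$, after which uniformity is simply asserted. Your coordinatewise maximal inequality plus Cauchy--Schwarz over the basis gives the slightly larger $n(\lambda)n^{-1/2}J(r_n)$, but it is uniform in $a$ for free.

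You also supply two things the paper uses tacitly: the Lindeberg check, and the sup-norm stability of the weighted $L^2$ spline projection. The latter is what the paper needs when it passes from the best-approximation hypothesis to $\|\Psi_\lambda(P)-\Psi(P)\|_\infty=O(n(\lambda)^{-k^*})$; the general theorem makes this an explicit assumption, (A3).

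There are two points to tighten.

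First, the object that is localized is the basis-invariant projection kernel $K_\lambda(a,a')=\sum_j\phi^*_j(a)\phi^*_j(a')$, not the $\omega$-orthonormalized spline basis, which is not local. What both your remainder bound and your part (ii) actually use is that $\sup_a\int|K_\lambda(a,a')|\omega(a')da'$ is bounded, i.e.\ the Lebesgue constant of the projection.

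Second, HAL classes are not parametric. A single coordinate of the empirical-process term is therefore $O^+(n^{-1/2}J(r_n))$ with $J(\delta)\asymp\delta^{(2s+1)/(2s+2)}$ and $s=\min(\bar k,m)$, not $O^+(n^{-1/2}r_n)$. Once the $n(\lambda)$-dimensional fluctuation is included, your requirement becomes $n(\lambda)^{1/2}r_n^{(2s+1)/(2s+2)}\to0$ together with $n(\lambda)r_n\to0$. Both still hold, since $n(\lambda)\lesssim n^{1/5}$ and $r_n\lesssim n^{-2/5}$ up to log factors, giving $n^{-1/5}$ in each case. This follows from $k,\bar k,m\geq1$ rather than from the displayed rate condition, so your conclusion stands.
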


\begin{proof}
By the uniform approximation error of the working model, we have for any $P\in {\cal M}$,
 \[
 \left\lvert \alpha_{\lambda}^{\top}(P){\bf \phi}_\lambda^*-\Psi(P)\right\rvert_{\infty}=O(n(\lambda)^{-k^*}).\]

 By Lemma \ref{lem:canonical gradient}, the parameter $\Psi_{\lambda}$ is pathwise differentiable at any $P$ with canonical gradient given by:
 \[
D^*_{\alpha_{\lambda}(),P}=D^*_{\alpha_{\lambda}(),\bar{Q},P}+D^*_{\alpha_{\lambda}(),Q_W,P},\]
where, in the orthonormal basis ${\bf \phi}_\lambda^*$,
\begin{eqnarray*}
D^*_{\alpha_{\lambda}(),Q_W,P}&=&\int_a {\bf \phi}_\lambda^*(a)\bar{Q}_P(a,W)\omega(a)da\\
D^*_{\alpha_{\lambda}(),\bar{Q},P}&=&\omega(A){\bf \phi}_\lambda^*(A)\frac{1}{g(A\mid W)}(Y-\bar{Q}_P(A,W)).
 \end{eqnarray*}
This implies the canonical gradient of $P\rightarrow \Psi_{\lambda}(P)(a)$:
 \[
 D^*_{\Psi_{\lambda}(),a,P}=D^{*,\top}_{\alpha_{\lambda}(),P}{\bf \phi}_\lambda^*(a).\]
 
By Lemma \ref{lemma1}, under a weak regularity condition, which we assume to hold, we have
\[
\sigma^2_{\lambda,P}(a)\equiv P\{ D^*_{\Psi_{\lambda}(),a,P}\}^2=O(n(\lambda)).\]
We define the exact remainder $R_{\Psi_{\lambda},a}(P,P_0)\equiv \Psi_{\lambda}(P)-\Psi_{\lambda}(P_0)+P_0 D^*_{\Psi_{\lambda}(),a,P}$, which satisfies
 $R_{\Psi_{\lambda},a}(P,P_0)=R_{\alpha_{\lambda}}(P,P_0)^{\top}{\bf \phi}_\lambda^*(a)$, where 
$R_{\alpha_{\lambda}}(P,P_0)=\alpha_{\lambda}(P)-\alpha_{\lambda}(P_0)+P_0 D^*_{\alpha_{\lambda}(),P}$ is the exact remainder for $\alpha_{\lambda}:{\cal M}\rightarrow\openr^{n(\lambda)}$. 

For all $a$ we have the exact expansion:
\[
\alpha_{\lambda,P_{n,\lambda}^*}^{\top}{\bf \phi}_\lambda^*(a)-\alpha_{\lambda,P_0}^{\top}{\bf \phi}_\lambda^*(a)=
(P_n-P_0)D^*_{\Psi_{\lambda},a,P_{n,\lambda}^*}+R_{\Psi_{\lambda},a}(P_{n,\lambda}^*,P_0).\]
Here the $P_n D^*_{\Psi_{\lambda}(),a,P_{n,\lambda}^*}$ term is absent since the TMLE step solves $P_n D^*_{\alpha_{\lambda}(),\bar{Q},P_{n,\lambda}^*,g_n}=0$ and the $L^2$-projection's normal equations give $P_n D^*_{\alpha_{\lambda}(),Q_W,P_{n,\lambda}^*}=0$, so $P_n D^*_{\Psi_{\lambda}(),a,P_{n,\lambda}^*}=0$ for all $a$.

This second-order remainder is double robust, with $R_{\Psi_{\lambda},a}(P,P_0)=O(\pl g-g_0\pl_{\infty}\pl \bar{Q}_P-\bar{Q}_0\pl_{\infty} n(\lambda))$. Therefore $R_{\Psi_{\lambda},a}(P_{n,\lambda}^*,P_0)=O_P^+( n^{-\bar{k}^*/(2\bar{k}^*+1)}n^{-m^*/(2m^*+1)} n(\lambda))$. Selecting $n(\lambda)\sim^+ n^{1/(2k^*+1)}$, this equals $O^+( n^{-\bar{k}^*/(2\bar{k}^*+1)}n^{-m^*/(2m^*+1)} n^{1/(2k^*+1)})$, which by the assumed smoothness condition $n^{-\bar{k}^*/(2\bar{k}^*+1)}n^{-m^*/(2m^*+1)} = o^-( n^{-(k^*+1)/(2k^*+1)})$ is $o_P((n(\lambda)/n)^{1/2})$, hence negligible relative to the rate of convergence $(n(\lambda)/n)^{1/2}$.

Consider next the empirical process term \[
R_{n,a}\equiv (P_n-P_0)\{D^*_{\Psi_{\lambda},a,P_{n,\lambda}^*}-D^*_{\Psi_{\lambda},a,P_0}\}.\]
If we divide the canonical gradient by $n(\lambda)$ it follows that it falls in a Donsker class implied by $\bar{Q}_{n,\lambda}^*\in D^{(\bar{k})}_M([0,1]^{\bar{d}})$ and $g_n\in D^{(m)}_M([0,1]^{\bar{d}})$, so that the entropy integral $J_{\infty}(\delta) $ of this class is implied by the worst of the two entropy integrals of $D^{(\bar{k})}([0,1]^{\bar{d}})$ and $D^{(m)}([0,1]^{\bar{d}})$, which we understand. In addition, once we divide by $n(\lambda)$ it also follows that the rate of convergence in $L^2(P_0)$-norm of the random function $f_n=(D^*_{\Psi_{\lambda},a,P_{n,\lambda}^*}-D^*_{\Psi_{\lambda},a,P_0})/n(\lambda)$
will be faster than the one implied by the worst of the two rates of convergence of $\bar{Q}_{n,\lambda}^*$ and $g_n$, by a factor $n(\lambda)^{-1/2}$ (by Lemma~\ref{lemma1}). Let $\delta_n$ be the resulting rate of $\pl f_n\pl_{P_0}$.
We can then bound this empirical process term by $n(\lambda) n^{-1/2}J_{\infty}(\delta_n)$, where $J_{\infty}(\delta)$ is the entropy integral of $D^{(\min(\bar{k},m))}([0,1]^{\bar{d}})$, assuming the strong positivity assumption $\min_{a\in [0,1]}g_0(a\mid W)>\delta>0$. We have bounded this by $O^+(\delta^{(2s+1)/(2s+2)})$ with $s=\min(\bar{k},m)$. We know that $\min(\bar{k},m)\geq 1$ so that we can bound it at minimal by $\delta^{3/4}$. For example, if the slowest rate of the nuisance estimators is $n^{-2/5}$ (i.e., $\min(\bar{k},m)=1$), then $\delta_n=n^{-2/5}n(\lambda)^{-1/2}$, giving the bound $n(\lambda)n^{-1/2}(n^{-2/5}n(\lambda)^{-1/2})^{3/4}=n(\lambda)^{5/8}n^{-4/5}$. If $n(\lambda)\sim n^{1/5}$ as it would be for $\bar{k}=1$, then this bound is $n^{-27/40}$ which is $o(n^{-2/5})=o((n(\lambda)/n)^{1/2})$ as desired. It follows that for our chosen $n(\lambda)$ we obtain a rate of convergence $R_{n,a}=o_P((n(\lambda)/n)^{1/2})$. 

Combining the two bounds with the exact expansion gives
\[
(n/n(\lambda))^{1/2}(\Psi_{\lambda}(P_{n,\lambda}^*)-\Psi_{\lambda}(P_0))(a)=
(n/n(\lambda))^{1/2}(P_n-P_0) D^*_{\Psi_{\lambda}(),a,P_0}+o_P(1).\]
The leading term is a normalized sum of independent mean-zero random variables with bounded variance, so the central limit theorem (CLT) applies and the right-hand side divided by $\sigma_{\lambda,P_0}(a)$ converges to $N(0,1)$. We note that this argument did not rely on cross-fitting, but it did rely on the sieve ${\cal R}_{\lambda}$ being non-random, so that $D^*_{\Psi_{\lambda}(),a,P_0}$ is a fixed function of $O$ and the CLT applies; were ${\cal R}_{\lambda}$ informed by data, this final step would break down.

Under the same conditions controlling the remainders uniformly, we have $\sup_a\mid R_{n,a}\mid=o_P(n(\lambda)/n)$ and  \mbox{$\sup_a \mid R_{\Psi_{\lambda},a}(P_{n,\lambda}^*,P_0)\mid =o_P((n(\lambda)/n)^{1/2})$}, so the asymptotic linearity holds uniformly in $a$ with $\sup_a \mid r_n(a)\mid =o_P(1)$. 

Dividing $D^*_{\Psi_\lambda(), a, P_0}$ by $n(\lambda)$ places it in the same bounded Donsker class identified above for $f_n$. Applying the entropy integral bound for the one-dimensional class $\{D^*_{\Psi_\lambda(), a, P_0}/n(\lambda) : a\}$, whose covering number is $O(1/\varepsilon)$, the supremum over $a$ of $(n/n(\lambda))^{1/2}(P_n - P_0)D^*_{\Psi_\lambda(), a, P_0}$ is bounded by $O_P(\log n \cdot n(\lambda)^{1/2})$, yielding $0.95$-simultaneous confidence bands of width $O(\log n\, (n(\lambda)/n)^{1/2})$.

Finally, selecting $n(\lambda)\sim n^{1/(2k^*+1)}$ up to $\log n$-factors, chosen so that $n(\lambda)^{-k^*}(\log n)^{c}=o((n(\lambda)/n)^{1/2})$ for some large enough power $c$, balances the bias $O^+(n(\lambda)^{-k^*})$ against the standard error $(n(\lambda)/n)^{1/2}$. The approximation bound $\lvert \alpha_{\lambda}^{\top}(P){\bf \phi}_{\lambda}^*-\Psi(P)\rvert_{\infty}=O(n(\lambda)^{-k^*})$ then makes the replacement of $\Psi_{\lambda}(P_0)$ by $\Psi(P_0)$ asymptotically negligible, yielding the stated result with respect to $\psi_0$. This establishes part (ii).
\end{proof}

\section{Sieve TMLE for data adaptive sieve}\label{Chtargetfunction4}

\subsection{Extension of Theorem \ref{chtargetfunctiontmleI} to data adaptive working models}
In the above analysis of the sieve TMLE the $n(\lambda)$-dimensional working models $D^{(k)}({\cal R}_{\lambda})$ were pre-specified, and, as a consequence, the linearization of the TMLE in terms of an empirical mean of an influence curve that is a sum over ${\cal R}_{\lambda}$ of functions of $O$ converges to a normal limit distribution, by the CLT. The only randomness was the possible data adaptive selection of $n(\lambda)$, which is easily handled. Suppose now that we introduce a sieve ${\cal R}_{\lambda_n}$ that is data-dependent, and specifically, was based on outcome data.
The sieve TMLE $P_n^*$ is constructed to solve the score equations of its own data-adaptive working model, not those of this fixed sieve.
Our strategy is nonetheless to analyze $P_n^*$ as though it were a sieve TMLE for the fixed sieve ${\cal R}_{\lambda}$. This requires that $P_n^*$ also solves the efficient influence curve equation of the fixed working model up to a negligible remainder, $P_n D^*_{\Psi_{\lambda}(), a, P_n^*}=o_P((n(\lambda)/n)^{1/2})$, which we establish in Lemma~\ref{lem:sieve rates} below. 
That is, that the data-adaptive sieve approximates the fixed sieve closely enough that the resulting residual is negligible relative to the rate of convergence of the TMLE with respect to the projection curve, which is what allows the asymptotic normality argument derived for the fixed sieve to carry over to $P_n^*$. 

We will build the TMLE from the data-adaptive working model $D^{(k)}({\cal R}_{\lambda,n})$.  Provided the data-adaptive working model $D^{(k)}({\cal R}_{\lambda,n})$ is rich enough to approximate the entire smoothness class $D^{(k)}_M([0,1])$ at the rate of the fixed sieve (the projection-approximation condition of the lemma below), we show that this data-adaptive sieve TMLE also approximately solves the efficient influence curve equation of the fixed working model $D^{(k)}({\cal R}_\lambda)$, up to a remainder that is negligible relative to its rate of convergence, so that the asymptotic normality argument carries over. This also requires that an empirical process term, arising from the decomposition of the score equation, is negligible. The two working models yield two definitions $\alpha_{\lambda,n}(P)$ and $\alpha_\lambda(P)$, respectively, with basis functions $\phi_{\lambda,n}$ and $\phi_\lambda$, respectively. The two corresponding canonical gradients are denoted with $D^*_{\alpha_{\lambda,n}(),P}$ and $D^*_{\alpha_\lambda(),P}$, respectively. They imply canonical gradients of $\Psi_{\lambda,n}(P)=\alpha_{\lambda,n}^{\top}\phi_{\lambda,n}$ and $\Psi_\lambda(P)=\alpha_\lambda^{\top}\phi_\lambda$ denoted with $D^*_{\Psi_{\lambda,n}(),a,P_0}$ and $D^*_{\Psi_\lambda(),a,P_0}$, obtained by taking the inner product of the gradient of $\alpha$ with the basis $\phi_{\lambda,n}(a)$ and $\phi_\lambda(a)$, respectively, as in $D^*_{\Psi(),a,P}=D^{*,\top}_{\alpha(),P}\phi(a)$ above.

For $\phi\in D^{(k)}([0,1])$, let $\Pi(\phi\mid D^{(k)}({\cal R}_{\lambda,n}))$ denote a projection of $A\rightarrow \phi(A)$ onto the working model $D^{(k)}({\cal R}_{\lambda,n})$, where one can use any norm to define the projection. We use this projection below to state how well the data-adaptive sieve approximates the fixed sieve.
Throughout the following lemma, $P_{n,\lambda}^*=(\bar{Q}_{n,\lambda}^*,g_{n,\lambda})$ denotes the TMLE targeting the data-adaptive coefficient $\alpha_{\lambda,n}(P_0)$, i.e. $\bar{Q}_{n,\lambda}^*$ solves $P_n D^*_{\alpha_{\lambda,n}(),P_{n,\lambda}^*}=0$.

\begin{lemma}
\label{lem:sieve rates}
Suppose that ${\cal R}_{\lambda,n}$ is a data dependent sieve, and let ${\cal R}_{\lambda}$ be an independent sieve (i.e. fixed) of size $n(\lambda)\leq n(\lambda,n)$.

We assume that $D^{(k)}({\cal R}_{\lambda,n})$ satisfies
\[
\sup_{\phi\in D^{(k)}_M([0,1])} \left\lvert \phi-\Pi(\phi\mid D^{(k)}({\cal R}_{\lambda,n}))\right\rvert_{\infty}=O_P^+(n(\lambda)^{-k^*}).
\]
We assume the propensity is bounded away from zero, $\inf_{a,w} g_n(a\mid w)>\delta>0$ for some $\delta>0$. We then assume that the empirical-process term, arising from the decomposition of the score equation, satisfies:
\[
n^{-1/2} O_P\bigl(J_{\infty}(\delta_n,D_{M_n}^{\min(\bar{k},m)}([0,1]^{\bar{d}}))\bigr)=o_P((n(\lambda)/n)^{1/2}),\]
where $J_{\infty}(\delta,{\cal F})$ denotes the entropy (bracketing) integral of the function class ${\cal F}$ up to radius $\delta$ (see \citep{van2023higher}); the radius is $\delta_n=O^+(n(\lambda)^{-k^*})$; and
$M_n\equiv I(\bar{k}\leq m)\pl \bar{Q}_{n,\lambda}^*\pl^*_{v,\bar{k}}+I(\bar{k}>m)\pl g_n\pl^*_{v,m}$, where $\pl f\pl^*_{v,k}$ denotes the $k$-th order sectional variation norm of $f$ (i.e. the smallest $M$ with $f\in D^{(k)}_M$, so that $M_n$ is the $\min(\bar{k},m)$-th order sectional variation norm of whichever of $\bar{Q}_{n,\lambda}^*$, $g_n$ is less smooth). Finally, we assume the bias term is negligible:
\[
n(\lambda)^{-k^*} \pl \bar{Q}_{n,\lambda}^*-\bar{Q}_0\pl_{P_0}=o_P((n(\lambda)/n)^{1/2}).\]
In particular, this last condition holds if we use the $\bar{k}$-th order cross-validated HAL-MLE as the initial estimator of $\bar{Q}_{n,\lambda}^*$ and $\bar{Q}_0\in D^{(\bar{k})}_M([0,1]^{\bar{d}})$, since then $\pl \bar{Q}_{n,\lambda}^*-\bar{Q}_0\pl_{P_0} =O^+( n^{-\bar{k}^*/(2\bar{k}^*+1)})$.

 Then, we have
 \[
 P_n D^*_{\Psi_\lambda(),a,P_{n,\lambda}^*}=
o_P((n(\lambda)/n)^{1/2}).\]
\end{lemma}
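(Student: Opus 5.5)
The starting point is that the data-adaptive TMLE $P_{n,\lambda}^*$ solves $P_n D^*_{\phi,P_{n,\lambda}^*}=0$ for every $\phi$ in the span $D^{(k)}({\cal R}_{\lambda,n})$. Here $D^*_{\phi,P}$ is the linear-in-$\phi$ gradient map from Lemma~\ref{lemma1}. This holds because the clever covariate spans all basis functions in ${\cal R}_{\lambda,n}$, and because the $L^2$-projection normal equations kill the $Q_W$-component, exactly as in the proof of Theorem~\ref{chtargetfunctiontmleI}.

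By Lemma~\ref{lemma1} and the subsequent remark, for fixed $a$ we can write
\[
D^*_{\Psi_\lambda(),a,P}=\sum_{j\in{\cal R}_\lambda}\tilde\phi_j(a)D^*_{\tilde\phi_j,P}=D^*_{\phi_a,P},
\qquad \phi_a(\cdot)\equiv\sum_{j\in{\cal R}_\lambda}\tilde\phi_j(a)\tilde\phi_j(\cdot)\omega(\cdot).
\]
So the target quantity is $P_nD^*_{\phi_a,P_{n,\lambda}^*}$ for a single ``kernel'' function $\phi_a$ in the fixed working model. Since the $\tilde\phi_j$ are splines of order $k$ and $\omega$ is smooth, $\phi_a/n(\lambda)$ lies in a bounded ball $D^{(k)}_M([0,1])$. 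I would verify this through the sectional variation norm of $\sum_j\tilde\phi_j(a)\tilde\phi_j$, using the orthonormality normalization.

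Next, let $\phi_{a,n}=\Pi(\phi_a\mid D^{(k)}({\cal R}_{\lambda,n}))$. By linearity,
\[
P_nD^*_{\phi_a,P_{n,\lambda}^*}=P_nD^*_{\phi_a-\phi_{a,n},P_{n,\lambda}^*}
=(P_n-P_0)D^*_{\phi_a-\phi_{a,n},P_{n,\lambda}^*}+P_0D^*_{\phi_a-\phi_{a,n},P_{n,\lambda}^*}.
\]
The projection-approximation assumption, applied to $\phi_a/n(\lambda)$, gives $\|\phi_a-\phi_{a,n}\|_\infty=O_P^+(n(\lambda)^{1-k^*})$. Up to the $n(\lambda)$ scaling bookkeeping, this is the $\delta_n$ radius in the statement.

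For the bias term, note that $P_0$ of the $Q_W$-component of $D^*_{\phi,P}$ is zero at $P_0$. The remaining part is
\[
P_0\,(\phi_a-\phi_{a,n})(A)\frac{\omega(A)}{g_n(A\mid W)}(\bar Q_0-\bar Q^*_{n,\lambda})(A,W)+\int(\phi_a-\phi_{a,n})(a')\,Q_{W,0}(\bar Q^*_{n,\lambda}-\bar Q_0)(a',\cdot)\,\omega(a')\,da'+\ldots
\]
Using positivity $g_n>\delta$ and Cauchy--Schwarz, this is bounded by $\|\phi_a-\phi_{a,n}\|_\infty\|\bar Q^*_{n,\lambda}-\bar Q_0\|_{P_0}$. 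By the assumed bias condition, that is $o_P((n(\lambda)/n)^{1/2})$. A residual $\Psi(P^*_{n,\lambda})-\Psi(P_0)$ term from the $Q_W$-part must be absorbed in the same way, since it is again linear in $\bar Q^*_{n,\lambda}-\bar Q_0$.

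For the empirical process term, the function $D^*_{\phi_a-\phi_{a,n},P^*_{n,\lambda}}$ lies in a class generated by products of $D^{(k)}$ splines, $1/g_n$ and $\bar Q^*_{n,\lambda}$. The entropy of this class is controlled by $D^{(\min(\bar k,m))}_{M_n}([0,1]^{\bar d})$, and its $L^2(P_0)$ norm is $O(\delta_n)$ by the sup-norm bound and positivity. A maximal inequality over a ball of radius $\delta_n$ then gives $n^{-1/2}J_\infty(\delta_n,\cdot)$, which is $o_P((n(\lambda)/n)^{1/2})$ by assumption.

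I expect the main obstacle to be this step, because $\phi_{a,n}$ depends on the data through ${\cal R}_{\lambda,n}$. The remedy is to take the supremum over the whole class $\{\phi-\Pi(\phi\mid D^{(k)}({\cal R}))\}$ with $\phi$ in the sieve-indexed ball. This is legitimate because the sup-norm bound in the assumption is uniform over $D^{(k)}_M$, and the projection residuals stay in a bounded-variation class; I would argue the latter holds since the sieves are spline spans. Combining the two bounds yields the claim.
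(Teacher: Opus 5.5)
Your skeleton matches the paper's proof. The TMLE solves the score in every direction of $D^{(k)}({\cal R}_{\lambda,n})$; you subtract the projection onto that span and split the residual score into an empirical-process term and a bias term, the latter controlled by positivity and Cauchy--Schwarz. Packaging the sum over $j$ into one kernel direction is a basis-free version of the paper's termwise residuals $f_{j,n}$. Two bookkeeping points need correcting. First, the kernel is $\phi_a=\sum_j\phi_j(a)\tilde\phi_j=\phi_\lambda(a)^{\top}\Sigma_{\phi_\lambda}^{-1}\phi_\lambda$. It carries no extra factor $\omega$, because $\phi\mapsto D^*_{\phi,P}$ already contains $\omega(A)$, and it is not $\sum_j\tilde\phi_j(a)\tilde\phi_j$, which equals $\phi_\lambda(a)^{\top}\Sigma_{\phi_\lambda}^{-2}\phi_\lambda$. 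Second, discard the $Q_W$-part $D^*_{e,Q_W,P}=\int e(a')\{\bar Q_P(a',W)-\Psi(P)(a')\}\omega(a')\,da'$ before splitting, as the paper does. Since $\Psi(P_{n,\lambda}^*)$ is built from $Q_{W,n}$, $P_nD^*_{e,Q_W,P_{n,\lambda}^*}=0$ for every direction $e$, not only those in the span. Carrying it into both pieces led you to claim the leftover $\Psi(P_{n,\lambda}^*)-\Psi(P_0)$ is linear in $\bar Q^*_{n,\lambda}-\bar Q_0$. That is false: it contains $(Q_{W,n}-Q_{W,0})\bar Q^*_{n,\lambda}$, and in fact $P_0D^*_{e,Q_W,P_{n,\lambda}^*}=-(P_n-P_0)\int e(a')\omega(a')\bar Q^*_{n,\lambda}(a',W)\,da'$. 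The bias condition does not control this sampling term; it is harmless only because it cancels exactly against the $Q_W$-part of your empirical-process term.

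The real gap is the $n(\lambda)$ scaling you defer. By your own count $\|\phi_a-\phi_{a,n}\|_\infty=O^+_P(n(\lambda)^{1-k^*})$. Your bias term is then of order $n(\lambda)^{1-k^*}\|\bar Q^*_{n,\lambda}-\bar Q_0\|_{P_0}$, and your empirical-process term, after pulling out $n(\lambda)$, is of order $n(\lambda)\,n^{-1/2}J_\infty(\delta_n,\cdot)$. Each is a factor $n(\lambda)$ larger than what the lemma's hypotheses make $o_P((n(\lambda)/n)^{1/2})$, so ``by the assumed bias condition'' does not follow. Worse, the premise that $\phi_a/n(\lambda)$ lies in a fixed ball of $D^{(k)}_M([0,1])$ fails for $k\geq 1$. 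Take a quasi-uniform spline sieve (knot spacing of order $1/n(\lambda)$, $\omega$ bounded away from $0$ and $\infty$). Then $\phi_a$ has height of order $n(\lambda)$ and is localized at scale $1/n(\lambda)$, so its $k$-th order sectional variation norm is of order $n(\lambda)^{k+1}=n(\lambda)^{k^*}$. By homogeneity, the projection-approximation hypothesis then only yields $\|\phi_a-\phi_{a,n}\|_\infty=O^+_P(1)$. The paper's termwise argument does not escape this. It bounds $\sum_j\phi_j(a)P_0f_{j,n}(\cdots)$ by $O^+_P(n(\lambda)^{-k^*})\|\bar Q^*_{n,\lambda}-\bar Q_0\|_{P_0}$ without tracking $\|\tilde\phi_j\|_\infty$ or $\sum_j|\phi_j(a)|$. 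Yet $\sum_j\phi_j(a)f_{j,n}=\phi_a-\Pi(\phi_a\mid D^{(k)}({\cal R}_{\lambda,n}))$ for a linear projection in any basis, so the same factor is simply suppressed there. Your aggregation exposes the difficulty rather than creating it. To make it a proof, you need hypotheses stated for the directions actually used, for example a bound on $\sup_a\|\phi_a-\Pi(\phi_a\mid D^{(k)}({\cal R}_{\lambda,n}))\|$ itself (in effect, that the data-adaptive sieve nearly contains the fixed one), with the bias and entropy conditions calibrated to that radius. The uniform-class argument you sketch for the data-dependent residuals, including the unproved claim that they stay in a bounded-variation class, is not needed: the lemma assumes the empirical-process bound outright.
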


\begin{proof}
Let $P_{n,\lambda}^*$ be the TMLE of $P_0$ targeting $\alpha_{\lambda,n}(P_0)$, which solves $P_n D^*_{\alpha_{\lambda,n}(),P_{n,\lambda}^*}=0$ and therefore $P_n D^*_{\Psi_{\lambda,n}(),a,P_{n,\lambda}^*}=0$ for all $a$. This is the score equation of the working model it targets, but the asymptotic normality argument of the previous subsection requires the score equation of the fixed working model, $P_n D^*_{\Psi_\lambda(),a,P_{n,\lambda}^*}=0$, which the TMLE does not target directly. We decompose this fixed working model score into its $Q_W$- and $\bar{Q}$-components. We note that we have $P_n D^*_{\alpha_\lambda,Q_W,P_{n,\lambda}^*}=0$ due to using the empirical distribution of $W$. So it remains to establish that
\[
P_n D^*_{\Psi_\lambda(),a,P_{n,\lambda}^*}= o_P((n(\lambda)/n)^{1/2}),
\]
that is, that the TMLE built from the data-adaptive sieve solves the score equations of the fixed sieve up to a remainder of order $o_P((n(\lambda)/n)^{1/2})$, which is negligible relative to the rate of convergence of the TMLE.

Recall
\[
D^*_{\Psi_\lambda(),\bar{Q},P} =\phi_\lambda^{\top}(a)\omega(A)\Sigma_{\phi_\lambda}^{-1}{\phi}_\lambda(A)\frac{1}{g(A\mid W)}(Y-\bar{Q}_P(A,W)).
\]
Let $\tilde{\phi}_\lambda=\Sigma_{\phi_\lambda}^{-1}{\phi}_\lambda$, and, for $\phi\in D^{(k)}([0,1])$, let $D^*_{\phi,\bar{Q},P}\equiv \phi(A)\omega(A)/g(A\mid W)(Y-\bar{Q}_P(A,W))$ denote the $\bar Q$-tangent-space score in direction $\phi$ (i.e., the first term of $D^*_{\phi,P}$ from Lemma~\ref{lemma1}, dropping the $Q_W$-correction term, which we have already shown vanishes after applying $P_n$ above). Then, $D^*_{\alpha_\lambda(),\bar{Q},P}=D^*_{\tilde{\phi}_\lambda,\bar{Q},P}$, and $D^*_{\phi,\bar Q,P}$ is linear in $\phi$.
Thus, $D^*_{\Psi_\lambda(),\bar{Q},P}=\sum_{j\in \cal R_\lambda}\phi_j(a)D^*_{\tilde{\phi}_j,\bar{Q},P}$.
We know that $P_n D^*_{\tilde{\phi}_{\lambda,n},\bar{Q},P_{n,\lambda}^*}=0$. Recalling the projection $\Pi(\phi\mid D^{(k)}({\cal R}_{\lambda,n}))$ defined above, note that $P_n D^*_{\Pi(\phi\mid D^{(k)}({\cal R}_{\lambda,n})),\bar{Q},P_{n,\lambda}^*}=0$ for all $\phi\in D^{(k)}([0,1])$.
By the projection-approximation rate assumed above,
\[
\sup_{\phi\in D^{(k)}_M([0,1])} \left\lvert \phi-\Pi(\phi\mid D^{(k)}({\cal R}_{\lambda,n}))\right\rvert_{\infty}=O^+(n(\lambda)^{-k^*}),\]
where we know that if ${\cal R}_{\lambda,n}$ was a fully effective working model then this could hold with $n(\lambda,n)=n(\lambda)$, but data adaptively selected working models might have ineffective basis functions while the overall method still controls overfitting.
We can bound the desired score equation by subtracting the projection on the solved scores; decomposing it in an empirical process term and a second order bias term, as follows:
\[
\begin{array}{l}
\sum_{j\in {\cal R}_\lambda}\phi_j(a) P_n D^*_{\tilde{\phi}_j,\bar{Q},P_{n,\lambda}^*}=
\sum_{j\in {\cal R}_\lambda}\phi_j(a)P_n \omega(A)\tilde{\phi}_j(A)/g_n(A\mid W)(Y-\bar{Q}_{n,\lambda}^*(A,W))\\
=
\sum_{j\in {\cal R}_\lambda}\phi_j(a) P_n \left(\tilde{\phi}_j-\Pi(\tilde{\phi}_j\mid D^{(k)}({\cal R}_{\lambda,n}))\right) \omega(A)/g_n(A\mid W)(Y-\bar{Q}_{n,\lambda}^*(A,W))\\
\equiv
\sum_{j\in {\cal R}_\lambda}\phi_j(a)
P_n f_{j,n}(A) \omega(A)/g_n(A\mid W)(Y-\bar{Q}_{n,\lambda}^*(A,W))\\
=\sum_{j\in {\cal R}_\lambda}\phi_j(a)
(P_n-P_0)f_{j,n}(A)\omega(A)/g_n(A\mid W)(Y-\bar{Q}_{n,\lambda}^*)\\
+\sum_{j\in {\cal R}_\lambda}\phi_j(a)
P_0 f_{j,n}(A)\omega(A)/g_n(A\mid W)(\bar{Q}_0-\bar{Q}_{n,\lambda}^*)\\
=o_P((n(\lambda)/n)^{1/2})+O^+_P(n(\lambda)^{-k^*})\pl \bar{Q}_{n,\lambda}^*-\bar{Q}_0\pl_{P_0}.
\end{array}
\]
The first term is $o_P((n(\lambda)/n)^{1/2})$ since the empirical-process term is negligible, and the second is $o_P((n(\lambda)/n)^{1/2})$ by the negligible-bias condition together with the Cauchy-Schwarz inequality.
\end{proof}
{\bf Remark (verifying the empirical-process condition):}
The empirical process term can be bounded by the entropy integral $J_{\infty}(\delta_n)$ implied by $D^{(\bar{k})}_M([0,1]^{\bar{d}})$ containing $\bar{Q}_{n,\lambda}^*$ and
$D^{(m)}_M([0,1]^{\bar{d}})$ containing $g_n$, using the positivity condition.
Thus, this yields a bound $J_{\infty}(\delta_n,D_{M_n}^{\min(\bar{k},m)}([0,1]^{\bar{d}}))$, where
we can set $\delta_n=O^+(n(\lambda)^{-k^*})$, by assumption on the rate of convergence of $f_{j,n}$. $M_n$ represents the sectional variation norm (i.e. $L_1$-norm if we use $\bar{k}$-th order HAL as initial esitmator $Q_n$) associated with the fit $g_n$ and $\bar{Q}_{n,\lambda}^*$. So the empirical process term is then bounded by
$n^{-1/2} O_P(J_{\infty}(\delta_n,D_{M_n}^{\min(\bar{k},m)}([0,1]^{\bar{d}}))$.
Given $\min(\bar{k},m)\geq 1$, and the rate we will choose for $n(\lambda)$, it follows that the latter is $o_P((n(\lambda)/n)^{1/2})$, as required by the empirical-process condition, if $M_n$ converges slowly enough to infinity. However, note that the targeting only adds a function of $A$ to the initial $Q_n$ so that as long as we control the $L_1$-norm of $\epsilon$ in the targeting step, the sectional variation norm of $Q_{n,\lambda}^*$ is controlled. Fortunately this can be explicitly verified: one could enforce a bound on $n(\lambda,n)$ that guarantees this $L_1$-norm increase is limited. Besides, in our DRC example, we can just use an equally spaced grid so that $n(\lambda,n)/n(\lambda)=O_P(1)$.
But these arguments show that, in general, too large $n(\lambda,n)/n(\lambda)$ could hurt the sieve TMLE of the target function.

{\bf Remark (on the smoothness condition $k\geq\bar{k}$):}
Since the target function is smoother than $\bar{Q}_0$: i.e., $k\geq \bar{k}$, the fluctuation model of the initial estimator $\bar{Q}_n$ includes basis functions $\phi_j$, $j\in {\cal R}_{\lambda,n}$ that are at least as smooth as the $\bar{k}$-th order splines in the initial HAL $\bar{Q}_n$. Therefore, the targeting step will not deteriorate the rate of convergence of $\bar{Q}_n$ based on smoothness assumption $\bar{Q}_0\in D^{(\bar{k})}([0,1]^{\bar{d}})$.

The size $n(\lambda)$ of the oracle fixed sieve can be of smaller size than the actual working model $D^{(k)}({\cal R}_{\lambda,n})$ used in the sieve TMLE, as long as the conditions of the theorem are still met (i.e. not too much overfit so that sectional variation norms are controlled). However, if $n(\lambda,n)$ grows too fast then the targeting step in the sieve TMLE will make the fits $Q_{n,\lambda}^*$ too erratic so that the rate of convergence of the initial $Q_{n,\lambda}$ might deteriorate affecting the second order remainder conditions in the theorem, and the empirical process conditions may be violated as well. On the other hand, if we replace the targeting step by a LASSO step with bounded $L_1$-norm of $\epsilon$, as in next section, then this issue can be avoided.

By application of this lemma, we can now apply the above Theorem \ref{chtargetfunctiontmleI}
for the sieve TMLE for the non-random sieve \ $D^{(k)}({\cal R}_\lambda)$.

\begin{theorem}[Asymptotic normality of the data-adaptive sieve TMLE of the DRC]\label{chtargetfunctiontmleIadaptive}
Adopt the setting and assumptions of Theorem~\ref{chtargetfunctiontmleI}. Let ${\cal R}_\lambda$ be a fixed companion sieve of size $n(\lambda)$, where the second-order remainder and empirical-process terms are $o_P((n(\lambda)/n)^{1/2})$.

Let ${\cal R}_{\lambda,n}$ be the data-adaptive sieve of size $n(\lambda,n)\geq n(\lambda)$, and let $P_{n,\lambda}^*=(\bar{Q}_{n,\lambda}^*,g_{n,\lambda})$ be the corresponding TMLE targeting the data-adaptive coefficient $\alpha_{\lambda,n}(P_0)$, i.e.\ solving $P_n D^*_{\alpha_{\lambda,n}(),P_{n,\lambda}^*}=0$, with plug-in estimator $\Psi_{\lambda,n}(P_{n,\lambda}^*)(a)=\alpha_{\lambda,n}(P_{n,\lambda}^*)^{\top}{\bf \phi}_{\lambda,n}^*(a)$. Suppose the conditions of Lemma~\ref{lem:sieve rates} hold, together with the nuisance rate $\pl \bar{Q}_{n,\lambda}^*-\bar{Q}_0\pl_{P_0}=O^+_P(n^{-\bar{k}^*/(2\bar{k}^*+1)})$.

Then, the conclusions of Theorem~\ref{chtargetfunctiontmleI} carry over to the data-adaptive sieve TMLE, with respect to the projection and with respect to the target function:

{\bf (i) Asymptotic normality with respect to the projection $\Psi_{\lambda}$.}
Uniformly in $a$,
\[
(n/n(\lambda))^{1/2}(\Psi_{\lambda}(P_{n,\lambda}^*)-\Psi_{\lambda}(P_0))(a)=
(n/n(\lambda))^{1/2}(P_n-P_0)D^*_{\Psi_{\lambda}(),a,P_0}+o_P(1),
\]
where $D^*_{\Psi_{\lambda}(),a,P_0}$ is the canonical gradient of the fixed-sieve projection $\Psi_{\lambda}$ from Lemma~\ref{lem:canonical gradient}. Consequently, for each $a$,
\[
\sigma_{\lambda,P_0}^{-1}(a)(n/n(\lambda))^{1/2}(\Psi_{\lambda}(P_{n,\lambda}^*)-\Psi_{\lambda}(P_0))(a)\Rightarrow_d N(0,1),
\]
with $\sigma^2_{\lambda,P_0}(a)=P_0\{D^*_{\Psi_{\lambda}(),a,P_0}\}^2$, and the one-dimensional class $\{D^*_{\Psi_{\lambda}(),a,P_0}:a\}$ yields $0.95$-simultaneous confidence bands of width $O(\log n\,(n(\lambda)/n)^{1/2})$.

{\bf (ii) Asymptotic normality with respect to the target function $\psi_0$.}
Balancing the approximation bias $O^+(n(\lambda)^{-k^*})$ against the standard error by undersmoothing, i.e.\ taking $n(\lambda)\sim n^{1/(2k^*+1)}$ up to $\log n$-factors with $c$ large enough that $n(\lambda)^{-k^*}(\log n)^{c}=o((n(\lambda)/n)^{1/2})$, the same result holds for the data-adaptive plug-in estimator $\Psi_{\lambda,n}(P_{n,\lambda}^*)$ relative to the true target function: uniformly in $a$,
\[
(n/n(\lambda))^{1/2}(\Psi_{\lambda,n}(P_{n,\lambda}^*)-\Psi(P_0))(a)=
(n/n(\lambda))^{1/2}(P_n-P_0)D^*_{\Psi_{\lambda}(),a,P_0}+r_n(a),
\]
with $\sup_a\mid r_n(a)\mid=o_P(1)$. Consequently, for each $a$,
\[
\sigma_{\lambda,P_0}^{-1}(a)(n/n(\lambda))^{1/2}(\Psi_{\lambda,n}(P_{n,\lambda}^*)-\Psi(P_0))(a)\Rightarrow_d N(0,1),
\]
and the same one-dimensional class $\{D^*_{\Psi_{\lambda}(),a,P_0}:a\}$ yields $0.95$-simultaneous confidence bands of width $O(\log n\,(n(\lambda)/n)^{1/2})$.
\end{theorem}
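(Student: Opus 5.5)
The plan is to treat $P_{n,\lambda}^*$ as if it were a sieve TMLE for the fixed companion sieve ${\cal R}_\lambda$, and to rerun the proof of Theorem~\ref{chtargetfunctiontmleI}. The one new ingredient is that the score equation of the fixed working model is no longer solved exactly, only approximately, and Lemma~\ref{lem:sieve rates} supplies that approximation. Concretely, for every $a$ I would write the exact expansion for the fixed-sieve parameter $\Psi_\lambda$ evaluated at the data-adaptive TMLE:
\[
\Psi_\lambda(P_{n,\lambda}^*)(a)-\Psi_\lambda(P_0)(a)=P_nD^*_{\Psi_\lambda(),a,P_{n,\lambda}^*}+(P_n-P_0)D^*_{\Psi_\lambda(),a,P_0}+R_{n,a}+R_{\Psi_\lambda,a}(P_{n,\lambda}^*,P_0).
\]
Here $R_{n,a}=(P_n-P_0)\{D^*_{\Psi_\lambda(),a,P_{n,\lambda}^*}-D^*_{\Psi_\lambda(),a,P_0}\}$, and $R_{\Psi_\lambda,a}$ is the exact remainder defined in the proof of Theorem~\ref{chtargetfunctiontmleI}.

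The terms are handled one at a time.

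\emph{Score term.} The first term is $o_P((n(\lambda)/n)^{1/2})$ by Lemma~\ref{lem:sieve rates}. I would check that the bound there is uniform in $a$. It is a finite combination $\sum_j\phi_j(a)(\cdot)$ with $\sup_a|\phi_j(a)|$ bounded, so taking suprema over the class $\{f_{j,n}\}$ in the empirical-process condition should give uniformity.

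\emph{Second-order remainder.} The double-robust remainder $R_{\Psi_\lambda,a}$ depends only on the fixed parameter $\Psi_\lambda$ and on the nuisance fits. Its bound $O(\|g_n-g_0\|\,\|\bar Q^*_{n,\lambda}-\bar Q_0\|\,n(\lambda))$ therefore carries over verbatim. The assumed nuisance rate for $\bar Q^*_{n,\lambda}$, combined with the smoothness condition of Theorem~\ref{chtargetfunctiontmleI}, makes it $o_P((n(\lambda)/n)^{1/2})$.

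\emph{Empirical-process term.} $R_{n,a}$ involves only fixed-sieve gradients evaluated at $P^*_{n,\lambda}$. The same Donsker and entropy argument as before applies, after dividing by $n(\lambda)$ and using that $\bar Q^*_{n,\lambda}$ and $g_n$ stay in bounded sectional-variation classes; the latter is part of the Lemma~\ref{lem:sieve rates} assumptions on $M_n$. The conclusion is again $o_P((n(\lambda)/n)^{1/2})$ uniformly in $a$.

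Since $D^*_{\Psi_\lambda(),a,P_0}$ is a fixed, non-random function, the CLT and the one-dimensional-class simultaneous band argument apply unchanged. This yields (i).

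For (ii), I would decompose
\[
\Psi_{\lambda,n}(P^*_{n,\lambda})-\Psi(P_0)=\{\Psi_{\lambda,n}(P^*_{n,\lambda})-\Psi_\lambda(P^*_{n,\lambda})\}+\{\Psi_\lambda(P^*_{n,\lambda})-\Psi_\lambda(P_0)\}+\{\Psi_\lambda(P_0)-\Psi(P_0)\}.
\]
The middle term is handled by (i). The last term is $O^+(n(\lambda)^{-k^*})$ by the uniform approximation property, and with the undersmoothed choice of $n(\lambda)$ it is $o((n(\lambda)/n)^{1/2})$.

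The first term is the crux. Both $\Psi_{\lambda,n}(P)$ and $\Psi_\lambda(P)$ are weighted $L^2$-projections of the same curve $\Psi(P^*_{n,\lambda})=Q_{W,n}\bar Q^*_{n,\lambda}$, onto $D^{(k)}({\cal R}_{\lambda,n})$ and $D^{(k)}({\cal R}_\lambda)$ respectively. I would bound each projection's distance to $\Psi(P^*_{n,\lambda})$ in sup norm by $O^+_P(n(\lambda)^{-k^*})$. For the data-adaptive sieve this uses the projection-approximation assumption of Lemma~\ref{lem:sieve rates}; for the fixed sieve it uses the uniform approximation assumption. This requires $\Psi(P^*_{n,\lambda})\in D^{(k)}_{M'}([0,1])$ for a bounded $M'$.

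The step I expect to be the main obstacle is making that conversion work. I would have to pass from the $L^2(\omega)$ projections to sup-norm control, for instance via the near-optimality of $L^2$ projections onto spline spaces in sup norm, and to verify the smoothness of $\Psi(P^*_{n,\lambda})$. For the latter I would argue that the targeting step adds only a $k$-th order spline in $A$ with controlled $L_1$-norm of $\epsilon$. The fallback is to write $\Psi(P^*_{n,\lambda})=\psi_0+(\Psi(P^*_{n,\lambda})-\psi_0)$. The first piece is approximated at rate $n(\lambda)^{-k^*}$ by both sieves. The second piece has projection differences bounded by a constant times $\|\Psi(P^*_{n,\lambda})-\psi_0\|$, which multiplies the approximation error only at second order. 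That is the same device as the bias term in the proof of Lemma~\ref{lem:sieve rates}, and it makes the first term $o_P((n(\lambda)/n)^{1/2})$ uniformly in $a$, completing (ii).
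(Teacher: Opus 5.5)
Your part (i) is the paper's proof: the same exact expansion of the fixed-sieve $\Psi_\lambda$ at $P^*_{n,\lambda}$, Lemma~\ref{lem:sieve rates} for the score term, the remainder and entropy bounds carried over, and the CLT for the non-random $D^*_{\Psi_\lambda(),a,P_0}$. One small correction: the score term enters as $-P_nD^*_{\Psi_\lambda(),a,P^*_{n,\lambda}}$, since $-P_0D=(P_n-P_0)D-P_nD$, but this is immaterial because the term is $o_P$. Your three-term decomposition for (ii) is also the paper's. You are right that $\sup_a|\Psi_{\lambda,n}(P^*_{n,\lambda})-\Psi_\lambda(P^*_{n,\lambda})|(a)=o_P((n(\lambda)/n)^{1/2})$ is the crux; the paper merely asserts it ``under this choice'' of $n(\lambda)$.

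Neither of your routes proves this step. Your fallback is circular rather than second order. Let $h=\Psi(P^*_{n,\lambda})-\psi_0$, and let $\Pi_{\lambda,n},\Pi_\lambda$ be the two weighted $L^2(\omega)$ projections, which are linear. Then $\Pi_\lambda h=\Psi_\lambda(P^*_{n,\lambda})-\Psi_\lambda(P_0)$ and $\Pi_{\lambda,n}h=\Psi_{\lambda,n}(P^*_{n,\lambda})-\Psi_{\lambda,n}(P_0)$. So $(\Pi_{\lambda,n}-\Pi_\lambda)h$ is the difference between the estimation errors of two different projection estimators. The fixed-sieve one, $\Pi_\lambda h$, is exactly the leading term of (i), of order $(n(\lambda)/n)^{1/2}$. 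A bound of the form ``constant times $\|h\|$'' is therefore first order. The approximation error can only enter through $\|h-\Pi h\|_\infty\lesssim\pl h\pl^*_{v,k}\,n(\lambda)^{-k^*}$ (up to $\log n$ factors), that is, through the $k$-th order variation norm of $h$, not its size. In Lemma~\ref{lem:sieve rates} the product arose because $f_{j,n}$ was integrated against $\bar Q_0-\bar Q^*_{n,\lambda}$ under $P_0$; nothing analogous pairs the two factors here. Showing $(\Pi_{\lambda,n}-\Pi_\lambda)h$ is negligible is just (ii) restated.

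So everything rests on your primary route. Its requirement is $\Psi(P^*_{n,\lambda})\in D^{(k)}_{M'}([0,1])$ with $M'n(\lambda)^{-k^*}(\log n)^c=o((n(\lambda)/n)^{1/2})$. This is not among the hypotheses, and your verification does not supply it. Even granting that the targeting increment is a $k$-th order spline in $a$, for a $\bar k$-th order HAL initial fit $Q_{W,n}\bar Q_n(a,\cdot)$ is only a $\bar k$-th order spline in $a$, and the setting allows $k>\bar k$.

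More fundamentally, if such a bound held, both projections would lie within $o_P((n(\lambda)/n)^{1/2})$ of $\Psi(P^*_{n,\lambda})$, forcing the two first-order errors above to coincide. That typically fails when $\mathcal{R}_{\lambda,n}$ is genuinely richer than $\mathcal{R}_\lambda$ (e.g.\ nested with $n(\lambda,n)/n(\lambda)\to c>1$, which the paper explicitly permits), because the extra directions carry noise of the same order. The approximation-rate condition of Lemma~\ref{lem:sieve rates} does not exclude this.

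What your argument actually establishes, as does the paper's, is (ii) for $\Psi_\lambda(P^*_{n,\lambda})$, from (i) together with $\sup_a|\Psi_\lambda(P_0)-\Psi(P_0)|(a)=O^+(n(\lambda)^{-k^*})$. To obtain (ii) for the data-adaptive plug-in $\Psi_{\lambda,n}(P^*_{n,\lambda})$ with the fixed-sieve influence function, you need an additional hypothesis. One example: that the two sieves asymptotically span the same space, so that their projection kernels agree up to a negligible error.
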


\begin{proof}
    Apply the exact first-order expansion of Theorem~\ref{chtargetfunctiontmleI} to the fixed-sieve projection $\Psi_{\lambda}$ evaluated at $P_{n,\lambda}^*$. With the exact remainder $R_{\Psi_{\lambda},a}(P,P_0)=\Psi_{\lambda}(P)-\Psi_{\lambda}(P_0)+P_0 D^*_{\Psi_{\lambda}(),a,P}$,
    \[
    \Psi_{\lambda}(P_{n,\lambda}^*)(a)-\Psi_{\lambda}(P_0)(a)=
    (P_n-P_0)D^*_{\Psi_{\lambda}(),a,P_{n,\lambda}^*}
    - P_n D^*_{\Psi_{\lambda}(),a,P_{n,\lambda}^*}
    +R_{\Psi_{\lambda},a}(P_{n,\lambda}^*,P_0).
    \]

    By Lemma~\ref{lem:sieve rates}, the data-adaptive TMLE $P_{n,\lambda}^*$ solves the scores of the fixed sieve up to a negligible order, uniformly in $a$:
    \[
    P_n D^*_{\Psi_{\lambda}(),a,P_{n,\lambda}^*}=o_P((n(\lambda)/n)^{1/2}).
    \]

    By assumptions on the nuisance rates, as in Lemma~\ref{lem:sieve rates}, the remainder is the product
    \[
    R_{\Psi_{\lambda},a}(P_{n,\lambda}^*,P_0)=O_P^+\!\bigl(\pl \bar{Q}_{n,\lambda}^*-\bar{Q}_0\pl_{P_0}\,\pl g_{n,\lambda}-g_0\pl_{P_0}\,n(\lambda)\bigr)=O_P^+\!\bigl(n^{-\bar{k}^*/(2\bar{k}^*+1)}\,n^{-m^*/(2m^*+1)}\,n(\lambda)\bigr),  
    \]
    so at $n(\lambda)\sim n^{1/(2k^*+1)}$ it is $o_P((n(\lambda)/n)^{1/2})$ by the smoothness condition $n^{-\bar{k}^*/(2\bar{k}^*+1)}n^{-m^*/(2m^*+1)}=o^-(n^{-(k^*+1)/(2k^*+1)})$. The empirical-process term is bounded by $n(\lambda)\,n^{-1/2}J_{\infty}(\delta_n)$, where the radius $\delta_n$ is the $L^2(P_0)$-rate of the normalized gradient difference $\{D^*_{\Psi_{\lambda}(),a,P_{n,\lambda}^*}-D^*_{\Psi_{\lambda}(),a,P_0}\}/n(\lambda)$ and is therefore set by the nuisance rates above. This too is $o_P((n(\lambda)/n)^{1/2})$. 
    
    Both the remainder and the empirical process term are thus $o_P((n(\lambda)/n)^{1/2})$ uniformly in $a$, and so
    \[
    \Psi_{\lambda}(P_{n,\lambda}^*)(a)-\Psi_{\lambda}(P_0)(a)=
    (P_n-P_0)D^*_{\Psi_{\lambda}(),a,P_0}+o_P((n(\lambda)/n)^{1/2}),
    \]
    uniformly in $a$.   
    Since ${\cal R}_\lambda$ is non-random, $D^*_{\Psi_{\lambda}(),a,P_0}$ is a fixed function of $O$, so the leading term is a normalized sum of i.i.d.\ mean-zero variables and the CLT applies.
    We thus obtain asymptotic normality with respect to the projection $\Psi_{\lambda}$: for each $a$,
    \[
    \sigma_{\lambda,P_0}^{-1}(a)(n/n(\lambda))^{1/2}(\Psi_{\lambda}(P_{n,\lambda}^*)-\Psi_{\lambda}(P_0))(a)\Rightarrow_d N(0,1),\qquad \sigma^2_{\lambda,P_0}(a)=P_0\{D^*_{\Psi_{\lambda}(),a,P_0}\}^2.
    \]
    and, uniformly in $a$, empirical process theory with the one-dimensional class $\{D^*_{\Psi_{\lambda}(),a,P_0}:a\}$ implies $0.95$-simultaneous confidence bands of width $O(\log n\,(n(\lambda)/n)^{1/2})$. This establishes part (i).
    
    At $n(\lambda)\sim n^{1/(2k^*+1)}$ the approximation order $O^+(n(\lambda)^{-k^*})$ equals the standard-error rate $(n(\lambda)/n)^{1/2}$, so, exactly as in Theorem~\ref{chtargetfunctiontmleI}, we undersmooth by $\log n$-factors, taking $c$ large enough that $n(\lambda)^{-k^*}(\log n)^{c}=o((n(\lambda)/n)^{1/2})$. Under this choice $\sup_a\mid \Psi_{\lambda}(P_0)-\Psi(P_0)\mid(a)$ and $\sup_a\mid \Psi_{\lambda,n}(P_{n,\lambda}^*)-\Psi_{\lambda}(P_{n,\lambda}^*)\mid(a)$ are $o_P((n(\lambda)/n)^{1/2})$, giving
    \[
    (n/n(\lambda))^{1/2}(\Psi_{\lambda,n}(P_{n,\lambda}^*)-\Psi(P_0))(a)=
    (n/n(\lambda))^{1/2}(P_n-P_0)D^*_{\Psi_{\lambda}(),a,P_0}+r_n(a),
    \]
    with $\sup_a\mid r_n(a)\mid=o_P(1)$, and pointwise normality and simultaneous bands follow exactly as in Theorem~\ref{chtargetfunctiontmleI}. This establishes part (ii).
\end{proof}

\section{Targeted HAL-MLE, avoiding construction of sieve}\label{section4b}

The sieve TMLE of Section~\ref{Chtargetfunction3} required the user to construct an appropriate fixed sieve $D^{(k)}({\cal R}_\lambda)$ with the uniform approximation error $O^+(n(\lambda)^{-k^*})$ w.r.t.\ the target function space. The data-adaptive sieve TMLE of Section~\ref{Chtargetfunction4} relaxed this requirement, allowing the sieve $D^{(k)}({\cal R}_{\lambda,n})$ to be somewhat ineffective in approximating the whole target function space relative to the optimal $O^+(n(\lambda)^{-k^*})$ error, as long as the resulting $Q_{n,\lambda}^*$
is not too much of an overfit so that the empirical process and remainder conditions are still met. In either case, we need to construct a careful sieve to carry out these TMLEs. This is trivial in the univariate DRC example, but becomes challenging for higher dimensional target functions.

As in the HAL literature, in our proposed Targeted HAL-MLE, we first choose a large initial basis ${\cal R}_N$ of $N$ basis functions so that $D^{(k)}({\cal R}_N)\approx D^{(k)}({\cal R}_1)=D^{(k)}([0,1]^{d})$.
Consider the same TMLE as above but with this high dimensional working model $D^{(k)}({\cal R}_N)$ instead of $D^{(k)}({\cal R}_{\lambda,n})$. This will now involve a targeting step with a high dimensional (i.e., $N$) parametric least favorable path corresponding with the $N$-dimensional canonical gradient $D_{\alpha_{{\cal R}_N}(),P}$.
In particular, in the DRC example, the targeting step involved least squares regression of the residual $Y-\bar{Q}_n$ onto $\sum_{j\in {\cal R}_N}\epsilon(j)\phi_j$ with weights $1/g_n(A\mid W)$. Fitting $\epsilon$ with the MLE as in standard TMLE would be a bad overfit of $Q_0$ since it would be like an MLE over a high dimensional linear model. In addition, it would be ill defined if the number of basis functions is larger than sample size $n$. However, instead of doing the unpenalized MLE step, we use LASSO to fit $\epsilon$: that is, we maximize the log-likelihood or minimize empirical risk over $\epsilon$ under a constraint $L_1$-norm $\sum_{j\in {\cal R}_N}\mid \epsilon(j)\mid \leq \lambda$. This $L_1$-norm corresponds with the increase in $k$-th order sectional variation norm of the plug-in TMLE of the target function relative to plug-in using $Q_n$ as well as the increase in sectional variation norm of $Q_{n,\lambda}^*$ relative to $Q_n$. So by controlling the $L_1$-norm we completely control the conditions of the theorem: we preserve rates of convergence of the nuisance estimators and we control empirical process conditions.
We now define $D^{(k)}({\cal R}_{\lambda,n})$ as the working model for each $\lambda$, where ${\cal R}_{\lambda,n}$ is the set of non-zero coefficients $\epsilon_n(j)$ in the LASSO fit and $n(\lambda,n)$ represents the size of this working model.
One could select $\lambda$ with a cross-validation selector and we discuss such a selector in further detail in the next section.

If the resulting working model is sparse, as one expects for LASSO, one could consider a relaxed LASSO step instead in which we approximate an actual MLE over the $\epsilon$ restricted to its non-zero coefficients. At this step one could relax the $L_1$-norm till $P_n D^*_{\Psi_{{\cal R}_{\lambda,n}}(),a,P_{n,\lambda}^*}=o_P((n(\lambda,n)/n)^{1/2})$ as we typically recommend for standard TMLE, thereby not relaxing more than needed. In other words, this would be like the sieve TMLE with this sieve of working models, which thus simply involves refitting the non-zero coefficients $\epsilon(j)$ in the LASSO fit with an unpenalized or weakly penalized MLE step.
The combined procedure for this targeting step would then be nothing else than relaxed LASSO. The final TMLE $P_{n,\lambda}^*$ can then be viewed as a data adaptive sieve TMLE of the projection $\Psi_{{\cal R}_{\lambda,n}}(P_0)$ of $\psi_0$ onto $D^{(k)}({\cal R}_{\lambda,n})$ and influence curve based inference could be carried out. In other words, we can apply the above theorem for the data adaptive sieve TMLE including its suggested inference method.

However, it is not necessary to do the relaxed LASSO, one could simply accept the LASSO as the targeting step defining the Targeted HAL-MLE. 
The reason why we can do this is that the proof fully relies on the linear span of the scores solved by the targeted $Q_{n,\lambda}^*$ approximating a sparse effective (unknown) working model $D^{(k)}({\cal R}_\lambda)$ of size $n(\lambda)$. The relaxed LASSO would solve the non-zero coefficient scores equations $P_n D^*_{\alpha_{\lambda,n}(),P_{n,\lambda}^*,j=0}$, $j\in {\cal R}_{n,\lambda}$, exactly while the LASSO would lose only one degree of freedom and solve $n(\lambda,n)-1$ linear combinations of these $n(\lambda,n)$ scores, already well approximating the full $n(\lambda,n)$-dimensionsal score equation space. Moreover, by not using relaxed LASSO, one can select a larger $\lambda$ and thereby end up selecting larger working model, and thus solve extra scores. 
In addition, by using LASSO we control the sectional variation norms and preserve the rate of convergence of the initial $Q_n$.
So from a pure pointwise asymptotic normality point of view, the relaxation is not important and might not even be beneficial.

For inference based on the Targeted HAL-MLE one could still use influence curve based inference treating it as a TMLE of the projection of $\psi_0$ onto $D^{(k)}({\cal R}_{\lambda,n})$, or one could use a targeted bootstrap for inference.

\subsection{Selecting the $L_1$-norm in the LASSO of the targeting step in T-HAL-MLE}
\label{Sec:L1norm}
The above T-HAL-MLE is indexed by a choice of $L_1$-norm in the LASSO targeting step. 
The larger $\lambda$ the larger the resulting $D^{(k)}({\cal R}_{\lambda,n})$, thereby ranging from a TMLE targeting a low dimensional projection of $\psi_0$ towards a close approximmation of the whole $\psi_0$. For each of these $\lambda$-specific TMLE we have a variance estimator based on the efficient influence curve of $\Psi_{{\cal R}_{\lambda,n}}(P)$ which will be increasing in $\lambda$, while the asymptotic bias will be reduced as $\lambda$ increases. A cross-validation selector of $\lambda$ might represent the minimal choice since it is only concerned about the fitting of $Q_0$, so that, just as with HAL, some undersmoothing of $\lambda$ might be needed. For that purpose, Lepski's method is very sensible in which we look for the plateau in the $\lambda$-specific confidence intervals, going from small $\lambda$ to large $\lambda$. This essentially corresponds with starting at the cross-validation selector and keep increasing $\lambda$ stepwise till the change in estimate is dominated by change in standard error estimator. We refer to \citep{zhang2025constructingconfidenceintervalsinfinitedimensional,JonathanLevykernelsmoothingcdfCATE} and Chapter 25 of \citep{vanderLaan&Rose18} by van der Laan, Bibaut, Luedtke.

Let $P_{n,\lambda}^*$ represent the TMLE, which in the DRC example corresponds with $Q_{W,n},\bar{Q}_{n,\lambda}^*$, where $\bar{Q}_{n,\lambda}^*$ is the result of the relaxed LASSO targeting step applied to initial estimator $\bar{Q}_n$.
Note that honest cross-validation would require refitting $P_n^0$ or $\bar{Q}_n$ on each training sample; computing the TMLE update using $\lambda$ as $L_1$-norm in the LASSO based on this same training sample; and evaluating the cross-validated risk on the corresponding validation sample; average across the $V$ sample splits in a $V$-fold cross-validation scheme; and selecting $\lambda_{n,cv}$ as the choice of $\lambda$ that optimizes the $V$-fold cross-validated risk over all these $\lambda$-specific T-HAL-MLEs. If $\bar{Q}_n$ was already a discrete Super-Learner, then one could set up the $V$-fold sample splitting so that we already have the fit $\bar{Q}_{n,v}$ on the training sample, thereby avoiding having to redo the discrete Super-Learner on the training sample. In that case, it would not be a completely honest cross-validation selector since the selection of the algorithm in the discrete Super-Learner was already based on the validation samples. Nonetheless, this would be an appropriate and theoretically grounded cross-validation selector. 

{\bf Rate-optimal selector for the $L_1$-norm:}\label{par:rate-optimal-selector}
A computationally efficient alternative to Lepski's approach is to select $\lambda$ by focusing on $n(\lambda,n)$ directly. From Theorem~\ref{chtargetfunctiontmleI}, we know that setting $n(\lambda,n)\sim n^{1/(2k^*+1)}$ achieves the optimal rate $n^{-k^*/(2k^*+1)}$ up to $\log n$-factors. This motivates constraining $\lambda$ from below
\[
\lambda_n^* = \min\left\{ \lambda \geq 0 : n(\lambda,n) \geq \left\lceil c \cdot n^{1/(2k^*+1)} \right\rceil \right\}
\]
for some constant $c$. This can be implemented by fixing the initial estimator $\bar{Q}_n$, rerunning only the LASSO targeting step with $\lambda$ increasing stepwise, and stopping at the first $\lambda$ where the working model size $n(\lambda,n)$ meets the threshold. 
Unlike Lepski's method, this requires no cross-fitting of $\bar{Q}_n$, making it substantially more computationally efficient. 

We also propose a flexible data-adaptive version by truncating the cross-validated sieve size from below and from above, constraining it to a theoretically motivated interval. We can choose to select the cross-validation-chosen penalty if the resulting sieve size falls within the interval $\mathcal{I}_n = [\lceil c_1 \cdot n^{1/(2k^*+1)} \rceil,\, \lfloor c_2 \cdot n^{1/(2k^*+1)} \rfloor]$, otherwise, we truncate to the nearest boundary, choosing a sieve of size $\lceil c_1 \cdot n^{1/(2k^*+1)} \rceil$ if the cross-validated size is too small, and $\lfloor c_2 \cdot n^{1/(2k^*+1)} \rfloor$ if it is too large.
\begin{equation*}
\lambda^*_{n,cv} = \begin{cases}
\inf\left\{\lambda : n(\lambda,n) \geq \lceil c_1 \cdot n^{1/(2k^*+1)} \rceil\right\}
  & \text{if } n(\lambda_{n,cv}) < \lceil c_1 \cdot n^{1/(2k^*+1)} \rceil \\[6pt]
\sup\left\{\lambda : n(\lambda,n) \leq \lfloor c_2 \cdot n^{1/(2k^*+1)} \rfloor\right\}
  & \text{if } n(\lambda_{n,cv}) > \lfloor c_2 \cdot n^{1/(2k^*+1)} \rfloor \\[6pt]
\lambda_{n,cv}
  & \text{otherwise}
\end{cases}
\end{equation*}
Sensible choices of $c$, $c_1$, and $c_2$ can be found via simulation and are further discussed in Section~\ref{sec:simulation}.


\section{General asymptotic normality of TMLE of projection of target function on spline working model}\label{Chtargetfunction5}

In this section, we give a general formulation and proof for asymptotic normality results beyond the DRC example. We set out the framework for a fixed sieve ${\cal R}_\lambda$ and determine the canonical gradient of the projection and its variance. We then state the asymptotic normality of the fixed-sieve TMLE with respect to both the projection $\Psi_\lambda(P_0)$ and the true target $\Psi(P_0)$ (Theorem~\ref{chtargetfunctiontmleIgen}). Finally, following Section~\ref{Chtargetfunction4}, we introduce the data-adaptive sieve ${\cal R}_{\lambda,n}$ and show that it approximately solves the score equations of the fixed companion sieve ${\cal R}_\lambda$, and conclude that the asymptotic normality statement carries over to the data-adaptive sieve TMLE (Theorem~\ref{chtargetfunctiontmleIgenadaptive}).

\subsection{General framework of TMLE for target functionals}\label{Chtargetfunction5framework}

We observe $n$ i.i.d. copies of $O\sim P_0\in {\cal M}$ and our target parameter is $\Psi:{\cal M}\rightarrow D^{(k)}([0,1]^{d})$ so that
$\Psi(P_0)$ represents our target function.
As in Sections~\ref{Chtargetfunction3} and~\ref{Chtargetfunction4}, we first develop the framework for a fixed sieve ${\cal R}_{\lambda}$.

Let ${\bf \Psi}_{\lambda}=D^{(k)}({\cal R}_{\lambda})$ be an $n(\lambda)$-dimensional sieve of $D^{(k)}([0,1]^{d})$ satisfying:
\[
\sup_{\psi\in D^{(k)}([0,1]^{d})}\inf_{\psi_{\lambda}\in D^{(k)}({\cal R}_{\lambda})}\pl \psi-\psi_{\lambda}\pl_{\infty}=O^+(n(\lambda)^{-k^*}).\]
Given an inner product $\langle \cdot,\cdot\rangle_{\psi}$ on $D^{(k)}([0,1]^{d})$,
define
\[
\alpha_{\lambda}(P)\equiv \arg\min_{\alpha} \left\lvert \Psi(P)-\sum_{j\in {\cal R}_{\lambda}}\alpha(j)\phi_j^*\right\rvert_{\psi}.\]
We then define $\Psi_{\lambda}:{\cal M}\rightarrow D^{(k)}({\cal R}_{\lambda})$ as
$\Psi_{\lambda}(P)=\alpha_{\lambda}(P)^{\top}{\bf \phi}^*_{\lambda}$ representing the projection of $\Psi(P)$ onto $D^{(k)}({\cal R}_{\lambda})$.

We assume $\alpha_{\lambda}$ is pathwise differentiable at any $P\in {\cal M}$ and let $D^*_{\alpha_{\lambda}(),P}$ be its canonical gradient.
We note that $\Psi_{\lambda}$ is pathwise differentiable at $P$ with canonical gradient $D^*_{\Psi_{\lambda}(),P}={\bf \phi}^{*,\top}_{\lambda}D^{*}_{\alpha_{\lambda}(),P}$ and, in particular, $\Psi_{\lambda,x}(P)=\Psi_{\lambda}(P)(x)$ is pathwise differentiable at $P$ with canonical gradient $D^*_{\Psi_{\lambda}(),x,P}={\bf \phi}^{*,\top}_{\lambda}(x)D^{*}_{\alpha_{\lambda}(),P}$. See Appendix~\ref{app:canonicalgradient} for guidance on how to construct canonical gradients for pathwise-differentiable parameters $\alpha_{\lambda}$ .

Let $R_{\alpha_{\lambda}}(P,P_0)\equiv \alpha_{\lambda}(P)-\alpha_{\lambda}(P_0)+P_0 D^*_{\alpha_{\lambda}(),P}$ be the exact second order remainder for $\alpha_{\lambda}()$.
Then, $R_{\Psi_{\lambda},x}(P,P_0)\equiv \Psi_{\lambda}(P)(x)-\Psi_{\lambda}(P_0)(x)+P_0 D^*_{\Psi_{\lambda}(),x,P}$ is given by
$R_{\Psi_{\lambda},x}(P,P_0)=R_{\alpha_{\lambda}}(P,P_0)^{\top}{\bf \phi}^*_{\lambda}(x)$, or, we could denote that globally,
$R_{\Psi_{\lambda}}(P,P_0)=R_{\alpha_{\lambda}}(P,P_0)^{\top}{\bf \phi}^*_{\lambda}$ as a function in $x$.

Let $P_n^0$ be an initial estimator of $P_0$, and let $P_{n,\lambda}^*$ be a TMLE targeting $\alpha_{\lambda}(P_0)$ satisfying $P_n D^*_{\alpha_{\lambda}(),P_{n,\lambda}^*}=0$. Since $D^*_{\Psi_{\lambda}(),x,P}$ is linear in $D^*_{\alpha_{\lambda}(),P}$, this implies $P_n D^*_{\Psi_{\lambda}(),x,P_{n,\lambda}^*}=0$ uniformly in $x$.

\subsection{Canonical gradient of the projection and its variance}

The framework above shows that $\Psi_{\lambda}$ is pathwise differentiable with canonical gradient $D^*_{\Psi_{\lambda}(),x,P}={\bf \phi}^{*,\top}_{\lambda}(x)D^{*}_{\alpha_{\lambda}(),P}$, a linear combination of the $n(\lambda)$ canonical gradients of the coefficients $\alpha_{\lambda}(P)$. We now show that the variance of this canonical gradient grows with $n(\lambda)$ as $n(\lambda)$.

\begin{lemma}\label{lem:variancegen}
We have $\Psi_{\lambda}(P)(x)=\sum_{j\in {\cal R}_\lambda}\alpha_{\lambda,P}(j)\phi_j^*(x)$, so that its canonical gradient $D^*_{\Psi_{\lambda}(),x,P}=\sum_{j\in {\cal R}_\lambda}D_{\alpha_{\lambda,j}(),P}\phi_j^*(x)$ is a linear combination of the $n(\lambda)$ canonical gradients $D_{\alpha_{\lambda,j}(),P}=D^*_{\phi_j^*,P}$ of the coefficients $\alpha_{\lambda,P}(j)$, where $\phi\rightarrow D^*_{\phi,P}$ extends linearly to any direction $\phi\in D^{(k)}({\cal R}_\lambda)$.

Define the inner product on $D^{(k)}({\cal R}_\lambda)$
\[
\langle \phi_1,\phi_2\rangle_{\psi,P}\equiv E_P D^*_{\phi_1,P}D^*_{\phi_2,P}.\]
If $\phi_j^*$, $j\in {\cal R}_{\lambda}$, is an orthonormal basis of $D^{(k)}({\cal R}_\lambda)$ w.r.t. this inner product, then
\[
D^*_{\Psi_{\lambda}(),x,P}=\sum_{j\in {\cal R}_{\lambda}}\phi_j^*(x)D^*_{\phi_j^*,P}\]
is a sum of uncorrelated random variables, so that
\[
\mbox{VAR}_P D^*_{\Psi_{\lambda}(),x,P}=\sum_{j\in {\cal R}_{\lambda}}\{\phi_j^*(x)\}^2.\]

For a general basis ${\bf \phi}_\lambda^*$, let ${\bf \phi}_\lambda^*(x)=(\phi_j^*(x): j\in {\cal R}_\lambda)$ and let $\Sigma_{\lambda}$ be the covariance matrix of the $n(\lambda)$-dimensional $D^*_{\alpha_{\lambda}(),P}$. Then
\[
\mbox{VAR}_P D^*_{\Psi_{\lambda}(),x,P}={\bf \phi}_\lambda^{*,\top}(x)\Sigma_{\lambda}{\bf \phi}_\lambda^*(x).\]
Writing the eigenvalue decomposition $\Sigma_{\lambda}=S_{\lambda}D_{\lambda}S^{\top}_{\lambda}$, where $D_{\lambda}$ is diagonal with eigenvalues $\gamma(j)$, $j\in {\cal R}_\lambda$, this variance is given by
\[
\sigma^2_{\lambda}(x)n(\lambda)= \sum_{j\in {\cal R}_\lambda} \{\phi_j^*(x)\}^2\gamma(j).\]
If we assume the maximal eigenvalue $\max_j \gamma(j)=O(1)$, then it follows that $\sigma^2_{\lambda}(x)n(\lambda)\sim \sum_{j\in {\cal R}_\lambda}\{\phi_j^*(x)\}^2=O(n(\lambda))$, so that the per-dimension variance
\[
\sigma^2_{\lambda}(x)\equiv \mbox{VAR}_P D^*_{\Psi_{\lambda}(),x,P}/n(\lambda)=O(1).\]
\end{lemma}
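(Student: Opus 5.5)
The proof is linear algebra applied to the finite-dimensional coefficient map, followed by a bound on the eigenvalues. First, the gradient representation. Since $\Psi_\lambda(P)(x)=\sum_{j\in{\cal R}_\lambda}\alpha_{\lambda,P}(j)\phi_j^*(x)$ and $x$ is fixed, the map $P\mapsto \Psi_\lambda(P)(x)$ is a fixed linear combination of the coordinates of $\alpha_\lambda(P)$. Pathwise differentiability of $\alpha_\lambda$, which is assumed in the framework of Section~\ref{Chtargetfunction5framework}, then gives $D^*_{\Psi_\lambda(),x,P}=\sum_j \phi_j^*(x)D_{\alpha_{\lambda,j}(),P}$ by linearity of pathwise derivatives. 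The gradient is canonical because each $D_{\alpha_{\lambda,j}(),P}$ lies in the tangent space, and the tangent space is a linear space.

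Next, identify $D_{\alpha_{\lambda,j}(),P}$ with $D^*_{\phi_j^*,P}$. The projection normal equations are $\langle \Psi(P)-\alpha^\top\phi^*_\lambda,\phi_j^*\rangle_\psi=0$. Differentiating along a path gives $\Sigma_{\phi^*_\lambda}\dot\alpha=\frac{d}{d\epsilon}\langle\Psi(P_\epsilon),\phi^*_\lambda\rangle_\psi$, exactly as in the proof of Lemma~\ref{lem:canonical gradient}. I would therefore define $D^*_{\phi,P}$ as the canonical gradient of $P\mapsto\langle\Psi(P),\tilde\phi\rangle_\psi$, with $\tilde\phi=\Sigma_{\phi^*_\lambda}^{-1}\phi$ absorbed into the basis choice. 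With that definition, $\phi\mapsto D^*_{\phi,P}$ is linear on $D^{(k)}({\cal R}_\lambda)$. This step is definitional, in the same way Lemma~\ref{lemma1} does it for the DRC.

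For the orthonormal case, linearity of $\phi\mapsto D^*_{\phi,P}$ makes $\langle\phi_1,\phi_2\rangle_{\psi,P}=E_PD^*_{\phi_1,P}D^*_{\phi_2,P}$ bilinear and positive semidefinite. It is definite on $D^{(k)}({\cal R}_\lambda)$ provided $D^*_{\phi,P}=0$ implies $\phi=0$; I would state this nondegeneracy as a regularity assumption, as Lemma~\ref{lemma1} implicitly does. Gram--Schmidt then produces an orthonormal basis. Because canonical gradients have mean zero, $\mathrm{Cov}(D^*_{\phi_i^*},D^*_{\phi_j^*})=\delta_{ij}$. Hence $\mathrm{VAR}_P\sum_j\phi_j^*(x)D^*_{\phi_j^*,P}=\sum_j\{\phi_j^*(x)\}^2$.

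For a general basis, the variance is the quadratic form ${\bf\phi}^{*,\top}_\lambda(x)\Sigma_\lambda{\bf\phi}^*_\lambda(x)$. Substituting the spectral decomposition and rotating the coefficient vector, $u=S_\lambda^\top{\bf\phi}^*_\lambda(x)$, gives $\sum_j\gamma(j)u_j^2$. This is the stated sum once the basis is taken aligned with the eigenvectors, and in all cases it is bounded by $\max_j\gamma(j)\,\|{\bf\phi}^*_\lambda(x)\|^2$.

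The remaining claim, $\sum_j\{\phi_j^*(x)\}^2=O(n(\lambda))$, is the main obstacle, because it needs a uniform bound on the basis functions. My plan is to use spline bases with local support, such as B-splines on an approximately uniform grid, normalized in $L^2(\omega)$. Each such function has sup-norm $O(n(\lambda)^{1/2})$ in one dimension, and only $O(1)$ of them are nonzero at any given $x$. The Christoffel-type function $\sum_j\phi_j^2(x)$ is invariant under orthonormal change of basis. Comparing the $\langle\cdot,\cdot\rangle_{\psi,P}$ and $L^2(\omega)$ inner products through eigenvalue bounds on $\Sigma_\lambda$, which amounts to $\gamma(j)=O(1)$ and bounded away from zero under positivity, therefore yields $\sum_j\{\phi_j^*(x)\}^2\asymp n(\lambda)$. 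Dividing by $n(\lambda)$ gives $\sigma^2_\lambda(x)=O(1)$. Under the maximal-eigenvalue assumption as stated in the lemma, only the upper bound is needed, so this final step reduces to the sup-norm and local-support property of the normalized spline basis.
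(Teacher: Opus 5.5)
Your route matches the argument the paper has in mind. The paper gives no separate proof of Lemma~\ref{lem:variancegen}: its argument is the chain of assertions inside the statement, and $\sum_j\{\phi_j^*(x)\}^2=O(n(\lambda))$ is simply asserted. Your remark that the eigenvalue formula holds only in an eigen-aligned basis is correct. Two of your steps, however, do not go through as written.

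\textbf{Gram--Schmidt does not deliver the orthonormal case.} The lemma's own wording is loose in the same way. Your map $\phi\mapsto D^*_{\phi,P}$, built from $\tilde\phi=\Sigma_{\phi^*_\lambda}^{-1}\phi$ so that $D^*_{\phi_j^*,P}=D_{\alpha_{\lambda,j}(),P}$, depends on the basis. Replacing ${\bf \phi}^*_\lambda$ by $T{\bf \phi}^*_\lambda$ turns the coefficients into $T^{-\top}\alpha_\lambda(P)$. The coefficient gradient becomes $T^{-\top}D^*_{\alpha_\lambda(),P}$, with covariance $T^{-\top}\Sigma_\lambda T^{-1}$.
\begin{itemize}
\item Gram--Schmidt for $\langle\cdot,\cdot\rangle_{\psi,P}$, computed in the starting basis (whose Gram matrix there is $\Sigma_\lambda$), yields $T\Sigma_\lambda T^{\top}=I$, i.e.\ $T^{\top}T=\Sigma_\lambda^{-1}$.
\item Uncorrelated unit-variance coefficient gradients in the new basis instead require $T^{\top}T=\Sigma_\lambda$.
\item The two agree only if $\Sigma_\lambda=I$.
\end{itemize}
For example, take $L^2(\omega)$-orthonormal $\phi_1,\phi_2$ with $\Sigma_\lambda=\mbox{diag}(1,4)$. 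Gram--Schmidt returns $(\phi_1,\phi_2/2)$ and predicts $\phi_1(x)^2+\phi_2(x)^2/4$, while the actual variance is $\phi_1(x)^2+4\phi_2(x)^2$.

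Using the basis-free gradient of $P\mapsto\langle\Psi(P),\phi\rangle_\psi$ does not help either. Then $D_{\alpha_{\lambda,j}(),P}=D^*_{\phi_j^*,P}$ needs $\langle\cdot,\cdot\rangle_\psi$-orthonormality, which generally conflicts with $\langle\cdot,\cdot\rangle_{\psi,P}$-orthonormality.

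The repair is to derive the orthonormal case from the quadratic form. The basis $\Sigma_\lambda^{1/2}{\bf \phi}^*_\lambda$ has orthonormal coefficient gradients; it is orthonormal for the inner product computed in itself. Then $\mbox{VAR}_P D^*_{\Psi_\lambda(),x,P}=\|\Sigma_\lambda^{1/2}{\bf \phi}^*_\lambda(x)\|^2$.

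\textbf{The final bound needs more than local support.} Because $\max_j\gamma(j)$ depends on the normalization of the basis, impose it in a $\langle\cdot,\cdot\rangle_\psi$-orthonormal basis $e$ (for the DRC, $L^2(\omega)$-orthonormal). Then $\mbox{VAR}_P D^*_{\Psi_\lambda(),x,P}\le\max_j\gamma(j)\sum_j e_j(x)^2$, and no lower eigenvalue bound is needed.

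Normalized B-splines $B$ are not orthonormal, so the invariance you invoke does not apply to them directly. You need $\sum_j e_j(x)^2=B(x)^{\top}G_B^{-1}B(x)\le\|B(x)\|^2/\mu_{\min}(G_B)$, where $\mu_{\min}(G_B)$ is the smallest eigenvalue of the B-spline Gram matrix. This is bounded away from zero by de Boor's stability theorem, provided $\omega$ is also bounded above and below.

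Moreover, $\sum_j e_j(x)^2$ depends only on the space $D^{(k)}({\cal R}_\lambda)$, not on the basis you pick. Your argument therefore adds the assumption that ${\cal R}_\lambda$ is a quasi-uniform one-dimensional knot set, which holds for an equally spaced DRC grid. It does not cover the lemma's general $d$ with HAL-type sieves:
\begin{itemize}
\item For $L^2(\omega)$, only the average identity $\int\sum_j e_j(x)^2\,\omega(x)\,dx=n(\lambda)$ comes for free.
\item Full tensor-product grids, where the local-support argument works, give only the rate $n(\lambda)^{-k^*/d}$, not the dimension-free rate in (A1).
\end{itemize}
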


Informally, this shows that the variance of a TMLE of the projection curve $\Psi_{\lambda}(P_0)$ in $D^{(k)}({\cal R}_{\lambda})$ behaves as $O(n(\lambda)/n)$, corresponding with a rate of convergence $(n(\lambda)/n)^{1/2}$. Combined with the bias being $O(n(\lambda)^{-k^*})$, this yields the desired rate of convergence for the TMLE w.r.t. the true target function.

\subsection{Asymptotic normality of the fixed sieve}

\begin{assumption}\label{ass:targetfunctiongen}
We assume the following:
\begin{enumerate}
\item[{\rm (A1)}] {\bf Sieve approximation error:}
\[
\sup_{\psi\in D^{(k)}([0,1]^{d})}\inf_{\psi_{\lambda}\in D^{(k)}({\cal R}_{\lambda})}\pl \psi-\psi_{\lambda}\pl_{\infty}=O^+(n(\lambda)^{-k^*}).\]
\item[{\rm (A2)}] {\bf Pathwise differentiability:} $\alpha_{\lambda}$ is pathwise differentiable at every $P\in {\cal M}$ with canonical gradient $D^*_{\alpha_{\lambda}(),P}$.
\item[{\rm (A3)}] {\bf Sup-norm bias of the projection:} the approximation error is achieved by the projection (see \citet{van2023higher} for the result that the optimal approximation rate is attained by $L^2$-projections, among others):
\[
\sup_{P\in {\cal M}}\pl \Psi_{\lambda}(P)-\Psi(P)\pl_{\infty}=O^+(n(\lambda)^{-k^*}).\]
\item[{\rm (A4)}] {\bf Negligible second-order remainder:} $\sup_x \left| R_{\Psi_{\lambda},x}(P_{n,\lambda}^*,P_0)\right| =o_P((n(\lambda)/n)^{1/2})$.
\item[{\rm (A5)}] {\bf Negligible empirical-process term:} with $R_{n,x}\equiv (P_n-P_0)\{D^*_{\Psi_{\lambda}(),x,P_{n,\lambda}^*}-D^*_{\Psi_{\lambda}(),x,P_0}\}$, $\sup_x \mid R_{n,x}\mid =o_P((n(\lambda)/n)^{1/2})$.
\end{enumerate}
\end{assumption}

For (A4), we typically have that $R_{\Psi_{\lambda},x}(P,P_0)=O(\pl p_{n,\lambda}^*-p_0\pl_{\infty}^2 n(\lambda))$ by using Cauchy-Schwarz inequality and if we use $\bar{k}$-th order HAL-MLEs for $p_n^0$ with $k\geq 1$, then we obtain the square of the sup-norm rate of HAL-MLE $O^+(n^{-2\bar{k}^*/(2\bar{k}^*+1)})$. So then the assumption (A4) holds if
\[
n^{-2\bar{k}^*/(2\bar{k}^*+1)}n(\lambda)=o^-((n(\lambda)/n)^{1/2}),
\]
while we set $n(\lambda)\sim^+ n^{1/(2k^*+1)}$.
As shown above in the DRC example, (A5) is easily shown to hold using the rates of convergence of $p_{n,\lambda}^*-p_0$; writing $R_{n,x}=n(\lambda) R_{n,x}/n(\lambda)$; applying the entropy integral to this class of functions
$\{D^*_{\Psi_{\lambda}(),x,P}/n(\lambda): P\in {\cal M}\}$.

\begin{theorem}\label{chtargetfunctiontmleIgen}
Consider the framework for the target function, sieve approximation and corresponding TMLE set out above in Section~\ref{Chtargetfunction5framework} for the fixed sieve ${\cal R}_{\lambda}$, and suppose that Assumptions~\ref{ass:targetfunctiongen} hold. Then the conclusions hold in two stages, with respect to the projection and with respect to the target function.

{\bf (i) Asymptotic normality with respect to the projection $\Psi_{\lambda}$.}
The sieve TMLE $\Psi_{\lambda}(P_{n,\lambda}^*)$ is asymptotically linear with respect to the projection $\Psi_{\lambda}(P_0)$, uniformly in $x$,
\[
(n/n(\lambda))^{1/2}(\Psi_{\lambda}(P_{n,\lambda}^*)-\Psi_{\lambda}(P_0))(x)=
(n/n(\lambda))^{1/2}(P_n-P_0)D^*_{\Psi_{\lambda}(),x,P_0}+o_P(1),\]
so that, by Lemma~\ref{lem:variancegen} and the CLT, the right-hand side divided by $\sigma_{\lambda}(x)$ converges to $N(0,1)$ for each $x$.

{\bf (ii) Asymptotic normality with respect to the target function $\Psi(P_0)$.}
Selecting $n(\lambda)\sim^+ n^{1/(2k^*+1)}$ so that the bias $O^+(n(\lambda)^{-k^*})$ is $o((n(\lambda)/n)^{1/2})$, the same statement holds with respect to the true target function $\Psi(P_0)$:
\[
(n/n(\lambda))^{1/2}(\Psi_{\lambda}(P_{n,\lambda}^*)-\Psi(P_0))(x)=
(n/n(\lambda))^{1/2}(P_n-P_0)D^*_{\Psi_{\lambda}(),x,P_0}+r_n(x),\]
with $\sup_x\mid r_n(x)\mid=o_P(1)$, so that $(n/n(\lambda))^{1/2}\sigma_{\lambda}(x)^{-1}(\Psi_{\lambda}(P_{n,\lambda}^*)-\Psi(P_0))(x)\Rightarrow_d N(0,1)$ for each $x$. Moreover, since the bounds (A4) and (A5) are uniform in $x$, the process converges uniformly at rate $(n(\lambda)/n)^{1/2}\log n$, yielding simultaneous confidence bands of width $O(\log n\,(n(\lambda)/n)^{1/2})$.

\end{theorem}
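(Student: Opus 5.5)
The proof follows the same three-step template as Theorem~\ref{chtargetfunctiontmleI}, now stated abstractly under Assumption~\ref{ass:targetfunctiongen}.

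\textbf{Exact first-order expansion.} Evaluate the exact remainder $R_{\Psi_{\lambda},x}(P,P_0)=\Psi_{\lambda}(P)(x)-\Psi_{\lambda}(P_0)(x)+P_0D^*_{\Psi_{\lambda}(),x,P}$ at $P=P_{n,\lambda}^*$. The TMLE solves $P_nD^*_{\alpha_{\lambda}(),P_{n,\lambda}^*}=0$. Since $D^*_{\Psi_{\lambda}(),x,P}={\bf\phi}^{*,\top}_\lambda(x)D^*_{\alpha_\lambda(),P}$ is linear in the coefficient gradient, this gives $P_nD^*_{\Psi_\lambda(),x,P_{n,\lambda}^*}=0$ for every $x$ simultaneously. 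Adding and subtracting $P_0D^*_{\Psi_\lambda(),x,P_0}=0$ yields, for all $x$,
\[
\Psi_{\lambda}(P_{n,\lambda}^*)(x)-\Psi_{\lambda}(P_0)(x)=(P_n-P_0)D^*_{\Psi_{\lambda}(),x,P_0}+R_{n,x}+R_{\Psi_{\lambda},x}(P_{n,\lambda}^*,P_0),
\]
with $R_{n,x}$ the empirical-process term of (A5). Assumptions (A4) and (A5) make the last two terms $o_P((n(\lambda)/n)^{1/2})$ uniformly in $x$. Multiplying by $(n/n(\lambda))^{1/2}$ gives the asymptotic linearity in (i) with $\sup_x|r_n(x)|=o_P(1)$.

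\textbf{Central limit step.} Because ${\cal R}_\lambda$ is fixed, $D^*_{\Psi_\lambda(),x,P_0}$ is a non-random function of $O$, so the leading term is a normalized sum of i.i.d.\ mean-zero variables. By Lemma~\ref{lem:variancegen}, its variance after scaling by $n(\lambda)^{-1}$ is $\sigma^2_\lambda(x)=O(1)$. Since $n(\lambda)$ changes with $n$, the variables form a triangular array, and a plain CLT is not enough; I would use Lindeberg--Feller. The Lindeberg condition holds if $\sup_o|D^*_{\Psi_\lambda(),x,P_0}(o)|=O(n(\lambda))$. This bound follows from bounded basis functions, positivity, and bounded outcomes, which hold in the examples. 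It then suffices that $n(\lambda)/n\to0$, because $\max_i|D(O_i)|/(n\,n(\lambda))^{1/2}\lesssim(n(\lambda)/n)^{1/2}\to0$. I also need $\sigma_\lambda(x)$ bounded away from zero, which I would either assume or deduce from the lower eigenvalues in Lemma~\ref{lem:variancegen}. Under these conditions, Slutsky gives the $N(0,1)$ limit in (i).

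\textbf{Passing to $\Psi(P_0)$.} Write $\Psi_\lambda(P_{n,\lambda}^*)-\Psi(P_0)=\{\Psi_\lambda(P_{n,\lambda}^*)-\Psi_\lambda(P_0)\}+\{\Psi_\lambda(P_0)-\Psi(P_0)\}$. By (A3), the second term is $O^+(n(\lambda)^{-k^*})$ uniformly in $x$. Choose $n(\lambda)=n^{1/(2k^*+1)}(\log n)^{c'}$ with $c'$ large enough to absorb the unspecified log power in $O^+$. Then $n(\lambda)^{-k^*}(n/n(\lambda))^{1/2}\to0$, and this bias folds into $r_n(x)$, giving (ii).

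\textbf{Simultaneous bands.} Consider the class $\{D^*_{\Psi_\lambda(),x,P_0}/n(\lambda):x\}$. It is indexed by $x$ through the bounded spline vector ${\bf\phi}^*_\lambda(x)$, so its covering numbers are polynomial in $1/\varepsilon$, with the exponent set by $d$ and the Lipschitz/variation control of ${\bf\phi}^*_\lambda$. A maximal inequality then gives $\sup_x(n/n(\lambda))^{1/2}|(P_n-P_0)D^*_{\Psi_\lambda(),x,P_0}|=O_P(\log n)$. Together with the uniform $o_P(1)$ remainder, this yields bands of width $O(\log n\,(n(\lambda)/n)^{1/2})$.

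\textbf{Main obstacle.} The delicate step is the CLT along a growing sieve. One must check Lindeberg under the $n(\lambda)$ normalization and keep $\sigma_\lambda(x)$ non-degenerate, since Lemma~\ref{lem:variancegen} only supplies an upper bound. If the sup-norm bound on the gradient fails, I would instead verify a Lyapunov third-moment condition using $\sum_j\phi^{*2}_j(x)=O(n(\lambda))$.
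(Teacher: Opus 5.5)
Your proposal is correct and follows the paper's proof. It uses the same exact first-order expansion built on $P_n D^*_{\Psi_\lambda(),x,P^*_{n,\lambda}}=0$, invokes (A4) and (A5) for the remainder and empirical-process terms, handles the bias with (A3) plus log-factor undersmoothing of $n(\lambda)$, and applies a CLT to the fixed-sieve leading term. Your Lindeberg--Feller treatment of the triangular array ($\sup_o|D^*_{\Psi_\lambda(),x,P_0}(o)|=O(n(\lambda))$, $\sigma_\lambda(x)$ bounded away from zero, $n(\lambda)/n\to0$) and your explicit maximal inequality for the bands are refinements the paper leaves implicit when it simply cites Lemma~\ref{lem:variancegen} and ``the CLT''; since these conditions are not among (A1)--(A5), state them as additional regularity assumptions.
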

\begin{proof}
We have:
\begin{eqnarray*}
\Psi_{\lambda}(P_{n,\lambda}^*)(x)-\Psi(P_0)(x)&=&\Psi_{\lambda}(P_{n,\lambda}^*)(x)-\Psi_{\lambda}(P_0)(x)+(\Psi_{\lambda}(P_0)-\Psi(P_0))(x)\\
&=&\Psi_{\lambda}(P_{n,\lambda}^*)(x)-\Psi_{\lambda}(P_0)(x)+O^+(n(\lambda)^{-k^*}),
\end{eqnarray*}
where, by (A3), the bias term has this rate uniformly in $x$: $\sup_{x}\mid \Psi_{\lambda}(P_0)-\Psi(P_0)\mid (x)=O^+(n(\lambda)^{-k^*})$.
We select the rate $n(\lambda)\sim^+ n^{1/(2k^*+1)}$ optimally so that $n(\lambda)^{-k^*}=o((n(\lambda)/n)^{1/2})$, making the bias term asymptotically negligible.

We have the exact expansion of the estimation term:
\begin{eqnarray*}
\Psi_{\lambda}(P_{n,\lambda}^*)(x)-\Psi_{\lambda}(P_0)(x)&=&
\alpha_{\lambda,P_{n,\lambda}^*}^{\top}{\bf \phi}_{\lambda}^*(x)-\alpha_{\lambda,P_0}^{\top}{\bf \phi}_{\lambda}^*(x)\\
&=&
(P_n-P_0)D^*_{\Psi_{\lambda}(),x,P_{n,\lambda}^*}+R_{\Psi_{\lambda},x}(P_{n,\lambda}^*,P_0),
\end{eqnarray*}
where the term $P_n D^*_{\Psi_{\lambda}(),x,P_{n,\lambda}^*}$ is absent since the TMLE step solves $P_n D^*_{\alpha_{\lambda}(),P_{n,\lambda}^*}=0$, so $P_n D^*_{\Psi_{\lambda}(),x,P_{n,\lambda}^*}=0$ for all $x$.
By (A4), $\sup_x\mid R_{\Psi_{\lambda},x}(P_{n,\lambda}^*,P_0)\mid=o_P((n(\lambda)/n)^{1/2})$.

Decomposing the empirical-process term,
\[
(P_n-P_0)D^*_{\Psi_{\lambda}(),x,P_{n,\lambda}^*}=(P_n-P_0)D^*_{\Psi_{\lambda}(),x,P_0}+R_{n,x},\]
with $R_{n,x}$ as in (A5). By (A5), $\sup_x\mid R_{n,x}\mid=o_P((n(\lambda)/n)^{1/2})$.

We therefore conclude, uniformly in $x$,
\[
\Psi_{\lambda}(P_{n,\lambda}^*)(x)-\Psi_{\lambda}(P_0)(x)=
(P_n-P_0)D^*_{\Psi_{\lambda}(),x,P_0}+o_P((n(\lambda)/n)^{1/2}),\]
which establishes part (i); adding the negligible bias term gives the same expansion with $\Psi_{\lambda}(P_0)$ replaced by $\Psi(P_0)$.

The leading term is a normalized sum of independent mean-zero random variables, so by Lemma~\ref{lem:variancegen} and the CLT, $(n/n(\lambda))^{1/2}\sigma_{\lambda}(x)^{-1}(\Psi_{\lambda}(P_{n,\lambda}^*)-\Psi(P_0))(x)\Rightarrow_d N(0,1)$.
Finally, since the bounds (A4) and (A5) are uniform in $x$, the empirical-mean approximation of $\Psi_{\lambda}(P_{n,\lambda}^*)-\Psi(P_0)$ by independent mean-zero random variables holds uniformly at rate $(n(\lambda)/n)^{1/2}\log n$, yielding simultaneous confidence bands of width $O(\log n\,(n(\lambda)/n)^{1/2})$. This establishes part (ii).

\end{proof}

\subsection{Sieve TMLE with a data-adaptive sieve}

Suppose now that the sieve ${\cal R}_{\lambda,n}$ of size $n(\lambda,n)$ is data dependent, and was based on outcome data, and let ${\cal R}_\lambda$ be the fixed companion sieve of size $n(\lambda)\leq n(\lambda,n)$ satisfying the $O^+(n(\lambda)^{-k^*})$ uniform approximation error (A1). The data-adaptive sieve TMLE $P_{n,\lambda}^*$ is constructed to solve the score equations of its own working model $D^{(k)}({\cal R}_{\lambda,n})$, $P_n D^*_{\alpha_{\lambda,n}(),P_{n,\lambda}^*}=0$, and hence $P_n D^*_{\Psi_{\lambda,n}(),x,P_{n,\lambda}^*}=0$ uniformly in $x$, where $\Psi_{\lambda,n}(P)=\alpha_{\lambda,n}(P)^{\top}{\bf \phi}_{\lambda,n}^*$ is the projection of $\Psi(P)$ onto $D^{(k)}({\cal R}_{\lambda,n})$. As in Section~\ref{Chtargetfunction4}, our strategy is nonetheless to analyze $P_{n,\lambda}^*$ as though it were a sieve TMLE for the fixed sieve ${\cal R}_\lambda$. This requires that $P_{n,\lambda}^*$ also solves the efficient influence curve equation of the fixed working model up to a negligible remainder, $P_n D^*_{\Psi_{\lambda}(),x,P_{n,\lambda}^*}=o_P((n(\lambda)/n)^{1/2})$, which we establish in the following lemma. Then, the asymptotic normality argument for a fixed sieve TMLE in Theorem~\ref{chtargetfunctiontmleIgen} carries over to the data-adaptive TMLE $P_{n,\lambda}^*$.

\begin{lemma}\label{lem:sieverategen}
Suppose ${\cal R}_{\lambda,n}$ is a data dependent sieve, and let ${\cal R}_{\lambda}$ be the fixed companion sieve of size $n(\lambda)\leq n(\lambda,n)$. Assume the data dependent sieve $D^{(k)}({\cal R}_{\lambda,n})$ satisfies the projection-approximation bound
\[
\sup_{\phi\in D^{(k)}_M([0,1]^{d})} \left\lvert \phi-\Pi(\phi\mid D^{(k)}({\cal R}_{\lambda,n}))\right\rvert_{\infty}=O_P^+(n(\lambda)^{-k^*}),\]
where $\Pi(\cdot\mid D^{(k)}({\cal R}_{\lambda,n}))$ denotes a projection onto the working model $D^{(k)}({\cal R}_{\lambda,n})$, together with the positivity, entropy, and bias-rate conditions of Lemma~\ref{lem:sieve rates}. Then, uniformly in $x$,
\[
P_n D^*_{\Psi_\lambda(),x,P_{n,\lambda}^*}= o_P((n(\lambda)/n)^{1/2}).\]
\end{lemma}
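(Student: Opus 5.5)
The plan is to carry out, in the general framework, the argument of Lemma~\ref{lem:sieve rates}: subtract off scores that $P_{n,\lambda}^*$ solves exactly, then bound what remains. First I will split the fixed-sieve score into the components of $D^*_{\alpha_{\lambda}(),P}$. By Lemma~\ref{lem:variancegen}, $D^*_{\Psi_\lambda(),x,P}=\sum_{j\in{\cal R}_\lambda}\phi_j(x)D^*_{\tilde\phi_j,P}$, where $\tilde\phi_\lambda=\Sigma_{\phi_\lambda}^{-1}\phi_\lambda$ and $\phi\mapsto D^*_{\phi,P}$ is linear. Write $D^*_{\phi,P}=D^*_{\phi,Q,P}+D^*_{\phi,Q_W,P}$, with $D^*_{\phi,Q,P}$ the score in the tangent space of the targeted factor and $D^*_{\phi,Q_W,P}$ the part due to components estimated empirically. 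I will assume, as in the DRC example, that the latter are plug-ins of empirical distributions, so $P_nD^*_{\phi,Q_W,P_{n,\lambda}^*}=0$ for every $\phi$. This reduces the claim to bounding $\sum_j\phi_j(x)P_nD^*_{\tilde\phi_j,Q,P_{n,\lambda}^*}$.

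Next comes the projection step. Because $P_{n,\lambda}^*$ solves $P_nD^*_{\alpha_{\lambda,n}(),P_{n,\lambda}^*}=0$, and by linearity in $\phi$, it solves $P_nD^*_{\Pi(\phi\mid D^{(k)}({\cal R}_{\lambda,n})),Q,P_{n,\lambda}^*}=0$ for every $\phi$. Set $f_{j,n}=\tilde\phi_j-\Pi(\tilde\phi_j\mid D^{(k)}({\cal R}_{\lambda,n}))$. Then
\[
P_nD^*_{\Psi_\lambda(),x,P_{n,\lambda}^*}=\sum_j\phi_j(x)(P_n-P_0)D^*_{f_{j,n},Q,P_{n,\lambda}^*}+\sum_j\phi_j(x)P_0D^*_{f_{j,n},Q,P_{n,\lambda}^*}.
\]
For the second sum, I will use that the $Q$-score has mean zero at $P_0$, so $P_0D^*_{f,Q,P_0}=0$. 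That makes the sum a linear functional of $f_{j,n}$ applied to $Q_{n,\lambda}^*-Q_0$; in the DRC case it is $P_0 f\,\omega/g_n(\bar Q_0-\bar Q_{n,\lambda}^*)$. Positivity bounds the weight. The projection-approximation bound gives $\|f_{j,n}\|_\infty=O_P^+(n(\lambda)^{-k^*})$, because $\tilde\phi_j\in D^{(k)}_M$ after the normalization used in Lemma~\ref{lem:variancegen}. Cauchy--Schwarz together with the bias-rate condition then yields $o_P((n(\lambda)/n)^{1/2})$.

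For the first sum, I will view $\{D^*_{f,Q,P}\}$ as a class indexed by $f$ with sup-norm at most $\delta_n=O^+(n(\lambda)^{-k^*})$ and by $(Q,g)$ of bounded sectional variation norm $M_n$. The entropy condition of Lemma~\ref{lem:sieve rates} bounds its supremum empirical process by $n^{-1/2}J_\infty(\delta_n,D^{\min(\bar k,m)}_{M_n})=o_P((n(\lambda)/n)^{1/2})$.

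I expect the main obstacle to be the uniformity over $x$ together with the sum over $j$. Bounding each term separately picks up a factor $\sum_j|\phi_j(x)|$, which could cost as much as $n(\lambda)$. To avoid this, I will not bound term by term. Instead I will note that $\sum_j\phi_j(x)\tilde\phi_j(\cdot)$ is itself an element of $D^{(k)}({\cal R}_\lambda)$: it is the projection kernel, and with the normalization of Lemma~\ref{lem:variancegen} its norm is controlled. Its residual $f_{x,n}=\sum_j\phi_j(x)f_{j,n}$ is therefore a single element of the class $\{\phi-\Pi(\phi\mid\cdot)\}$, and both terms can be bounded once, uniformly in $x$, since $\{f_{x,n}:x\}$ is a subset of that class. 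If the kernel's sectional variation norm grows with $n(\lambda)$, I would absorb that growth into $M_n$, dividing by $n(\lambda)$ as in the proof of Theorem~\ref{chtargetfunctiontmleI}. The rates required would then be exactly those the assumed entropy condition already covers.
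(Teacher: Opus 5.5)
Your route is the paper's. From $P_n D^*_{\alpha_{\lambda,n}(),P_{n,\lambda}^*}=0$ and linearity of $\phi\mapsto D^*_{\phi,P}$, the score is solved in every direction of $D^{(k)}({\cal R}_{\lambda,n})$. You replace the fixed-sieve directions by their projection residuals, then split into a $(P_n-P_0)$ term and a $P_0$ term controlled by the entropy and bias-rate conditions. The paper works with the full $D^*_{\phi,P}$, so your restriction to non-targeted components that are empirical plug-ins is unnecessary.

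The gap is in the step you add to handle the weights $\phi_j(x)$. You claim that $K_x=\sum_{j\in{\cal R}_\lambda}\phi_j(x)\tilde\phi_j$ has controlled norm, so that $f_{x,n}=K_x-\Pi(K_x\mid D^{(k)}({\cal R}_{\lambda,n}))$ is $O^+_P(n(\lambda)^{-k^*})$. The projection-approximation hypothesis concerns the fixed ball $D^{(k)}_M$. It transfers to a function of sectional variation norm $M'$ only with the factor $M'/M$, and $K_x$ is far outside that ball. Indeed, $K_x(x)=\phi_\lambda(x)^{\top}\Sigma_{\phi_\lambda}^{-1}\phi_\lambda(x)$ is of order $n(\lambda)$, which is exactly the variance growth in Lemma~\ref{lem:variancegen}. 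For, e.g., an equally spaced spline sieve in $d=1$, the kernel is concentrated on a window of width of order $n(\lambda)^{-1}$, so $\|K_x\|^*_{v,k}\asymp n(\lambda)^{k^*}$. The hypothesis then only gives $\sup_x\|f_{x,n}\|_\infty=O^+_P(n(\lambda)^{k^*}n(\lambda)^{-k^*})=O^+_P(1)$. Dividing by $n(\lambda)$ before absorbing the growth into $M_n$ leaves this product unchanged. Two consequences follow:
\begin{itemize}
\item The $P_0$ term is bounded only by $O^+_P(\|\bar Q^*_{n,\lambda}-\bar Q_0\|_{P_0})$. The bias-rate condition $n(\lambda)^{-k^*}\|\bar Q^*_{n,\lambda}-\bar Q_0\|_{P_0}=o_P((n(\lambda)/n)^{1/2})$ does not control this. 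The HAL rate $n^{-\bar k^*/(2\bar k^*+1)}$ is not $o((n(\lambda)/n)^{1/2})$ when $\bar k\le k$.
\item The empirical-process term involves a class of sup-norm radius $O(1)$ and variation norm of order $n(\lambda)^{k^*}$. That is not the radius $\delta_n=O^+(n(\lambda)^{-k^*})$ covered by the entropy condition of Lemma~\ref{lem:sieve rates}.
\end{itemize}
So your closing claim, that the required rates are exactly those already assumed, does not hold.

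The same obstruction affects your per-direction claim that $\tilde\phi_j\in D^{(k)}_M$ gives $\|f_{j,n}\|_\infty=O^+_P(n(\lambda)^{-k^*})$, and it does not depend on the basis. Any representation $K_x=\sum_j c_j\psi_j$ with $\psi_j\in D^{(k)}_M$ has $\sum_j|c_j|\ge\|K_x\|^*_{v,k}/M\asymp n(\lambda)^{k^*}$. Hence any term-by-term bound built on the sup-norm hypothesis is at best of order one.

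You correctly spotted the $\sum_j\phi_j(x)$ issue, but the paper's proof does not resolve it either. It asserts $e_{n,j}=O^+_P(n(\lambda)^{-k^*})$ for each $j$ and sums against $\phi_j(x)$ without tracking either effect. To close the argument you need an extra hypothesis aimed at $K_x$ itself. One option is ${\cal R}_\lambda\subseteq{\cal R}_{\lambda,n}$: then $K_x\in D^{(k)}({\cal R}_{\lambda,n})$ and the fixed-sieve score is solved exactly. Another is a direct assumption that $\sup_x\|K_x-\Pi(K_x\mid D^{(k)}({\cal R}_{\lambda,n}))\|_{P_0}\,\|\bar Q^*_{n,\lambda}-\bar Q_0\|_{P_0}=o_P((n(\lambda)/n)^{1/2})$, together with an entropy condition for $\{f_{x,n}:x\}$ at that $L^2(P_0)$ radius. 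A localization argument, showing that the residual of $K_x$ lives on an $O(n(\lambda)^{-1})$ window, would be the natural way to check the latter. The stated sup-norm approximation over $D^{(k)}_M$ does not deliver it.
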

\begin{proof}
The TMLE $P_{n,\lambda}^*$ solves $P_n D^*_{\alpha_{\lambda,n}(),P_{n,\lambda}^*}=0$, hence $P_n D^*_{\Psi_{\lambda,n}(),x,P_{n,\lambda}^*}=0$ for all $x$, but the asymptotic normality argument requires the score equation of the fixed working model, which the TMLE does not target directly.
Since the canonical gradient $D^*_{\alpha_{\lambda,n}(),P}$ is linear in ${\bf \phi}_{\lambda,n}$, the score equation gives $P_n D^*_{\phi,P_{n,\lambda}^*}=0$ for every $\phi\in D^{(k)}({\cal R}_{\lambda,n})$, where $\phi\rightarrow D^*_{\phi,P}$ is the gradient in direction $\phi$ as in Lemma~\ref{lem:variancegen}. Recalling the projection $\Pi(\phi\mid D^{(k)}({\cal R}_{\lambda,n}))$, it follows that $P_n D^*_{\Pi(\phi\mid D^{(k)}({\cal R}_{\lambda,n})),P_{n,\lambda}^*}=0$ for all $\phi\in D^{(k)}([0,1]^{d})$.
For $j\in {\cal R}_\lambda$, let
\[
e_{n,j}\equiv \phi_j-\Pi(\phi_j\mid D^{(k)}({\cal R}_{\lambda,n})),\]
so that, writing $D^*_{\Psi_\lambda(),x,P}=\sum_{j\in {\cal R}_\lambda}\phi_j(x)D^*_{\phi_j,P}$,
\[
P_n D^*_{\Psi_\lambda(),x,P_{n,\lambda}^*}=\sum_{j\in {\cal R}_\lambda}\phi_j(x)P_n D^*_{\phi_j,P_{n,\lambda}^*}=\sum_{j\in {\cal R}_\lambda}\phi_j(x)P_n D^*_{e_{n,j},P_{n,\lambda}^*}.\]
We decompose $P_n D^*_{e_{n,j},P_{n,\lambda}^*}=(P_n-P_0)D^*_{e_{n,j},P_{n,\lambda}^*}+P_0 D^*_{e_{n,j},P_{n,\lambda}^*}$. The empirical process term divided by $n(\lambda)$ can be bounded in terms of the entropy integral of the class implied by the HAL-MLE $P_{n,\lambda}$, together with the rate $e_{n,j}=O_P^+(n(\lambda)^{-k^*})$; the term $P_0 D^*_{e_{n,j},P_{n,\lambda}^*}$ can be further written as $P_0 \{D^*_{e_{n,j},P_{n,\lambda}^*}-D^*_{e_{n,j},P_0}\}$, showing it can be bounded in terms of a product of $\pl e_{n,j}\pl$ and $\pl p_{n,\lambda}^*-p_0\pl$. By the positivity, entropy, and bias-rate conditions both terms are $o_P((n(\lambda)/n)^{1/2})$ uniformly in $x$.
See also our proof for the DRC example (Lemma~\ref{lem:sieve rates}) and Appendix~\ref{Chtargetfunction6} for an example of the causal effect of a binary treatment on survival outcomes conditional on covariates.
\end{proof}

\begin{theorem}\label{chtargetfunctiontmleIgenadaptive}
Suppose the conditions of Theorem~\ref{chtargetfunctiontmleIgen} hold for the fixed companion sieve $D^{(k)}({\cal R}_\lambda)$ of size $n(\lambda)\sim^+ n^{1/(2k^*+1)}$, and the conditions of Lemma~\ref{lem:sieverategen} hold for the data-adaptive sieve $D^{(k)}({\cal R}_{\lambda,n})$ of size $n(\lambda,n)\geq n(\lambda)$. Then $P_{n,\lambda}^*$ solves the fixed-sieve efficient influence-curve equation up to negligible order, $P_n D^*_{\Psi_{\lambda}(),x,P_{n,\lambda}^*}=o_P((n(\lambda)/n)^{1/2})$ uniformly in $x$, and the conclusion of Theorem~\ref{chtargetfunctiontmleIgen} carries over to the data-adaptive sieve TMLE in the same two stages, with respect to the projection and with respect to the target function.

{\bf (i) Asymptotic normality with respect to the projection $\Psi_{\lambda}$.}
Uniformly in $x$,
\[
(n/n(\lambda))^{1/2}(\Psi_{\lambda}(P_{n,\lambda}^*)-\Psi_{\lambda}(P_0))(x)=
(n/n(\lambda))^{1/2}(P_n-P_0)D^*_{\Psi_{\lambda}(),x,P_0}+o_P(1),\]
where $D^*_{\Psi_{\lambda}(),x,P_0}$ is the canonical gradient of the fixed-sieve projection $\Psi_{\lambda}$. Consequently, for each $x$,
\[
(n/n(\lambda))^{1/2}\sigma_{\lambda}(x)^{-1}(\Psi_{\lambda}(P_{n,\lambda}^*)-\Psi_{\lambda}(P_0))(x)\Rightarrow_d N(0,1),\]
with $\sigma^2_{\lambda}(x)=\mbox{VAR}_{P_0} D^*_{\Psi_{\lambda}(),x,P_0}/n(\lambda)$ as in Lemma~\ref{lem:variancegen}, and the one-dimensional class $\{D^*_{\Psi_{\lambda}(),x,P_0}:x\}$ yields simultaneous confidence bands of width $O(\log n\,(n(\lambda)/n)^{1/2})$.

{\bf (ii) Asymptotic normality with respect to the target function $\Psi(P_0)$.}
Undersmoothing $n(\lambda)\sim^+ n^{1/(2k^*+1)}$ by $\log n$-factors to render the approximation bias negligible, the same result holds for the data-adaptive plug-in estimator $\Psi_{\lambda,n}(P_{n,\lambda}^*)$ relative to the true target function: uniformly in $x$,
\[
(n/n(\lambda))^{1/2}(\Psi_{\lambda,n}(P_{n,\lambda}^*)-\Psi(P_0))(x)=
(n/n(\lambda))^{1/2}(P_n-P_0)D^*_{\Psi_{\lambda}(),x,P_0}+r_n(x),\]
with $\sup_x\mid r_n(x)\mid=o_P(1)$. Consequently, for each $x$,
\[
(n/n(\lambda))^{1/2}\sigma_{\lambda}(x)^{-1}(\Psi_{\lambda,n}(P_{n,\lambda}^*)-\Psi(P_0))(x)\Rightarrow_d N(0,1),\]
and the same one-dimensional class $\{D^*_{\Psi_{\lambda}(),x,P_0}:x\}$ yields simultaneous confidence bands of width $O(\log n\,(n(\lambda)/n)^{1/2})$.
\end{theorem}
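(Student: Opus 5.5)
The plan follows the proof of Theorem~\ref{chtargetfunctiontmleIadaptive}, now with the general objects of Section~\ref{Chtargetfunction5framework}. The idea is to analyse the data-adaptive TMLE $P_{n,\lambda}^*$ through the fixed companion projection $\Psi_\lambda$, not through its own projection $\Psi_{\lambda,n}$. For every $x$ the definition of the exact remainder $R_{\Psi_\lambda,x}$ gives the identity
\[
\Psi_{\lambda}(P_{n,\lambda}^*)(x)-\Psi_{\lambda}(P_0)(x)=(P_n-P_0)D^*_{\Psi_{\lambda}(),x,P_0}+R_{n,x}-P_nD^*_{\Psi_{\lambda}(),x,P_{n,\lambda}^*}+R_{\Psi_{\lambda},x}(P_{n,\lambda}^*,P_0),
\]
with $R_{n,x}$ as in (A5). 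In Theorem~\ref{chtargetfunctiontmleIgen} the score term $P_nD^*_{\Psi_{\lambda}(),x,P_{n,\lambda}^*}$ was identically zero. Here it is not, and Lemma~\ref{lem:sieverategen} supplies exactly the replacement bound $\sup_x|P_nD^*_{\Psi_{\lambda}(),x,P_{n,\lambda}^*}|=o_P((n(\lambda)/n)^{1/2})$. This also proves the first assertion of the theorem.

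Next I bound the remaining terms uniformly in $x$. Assumption (A4) is assumed for the fixed sieve, evaluated at $P_{n,\lambda}^*$. It is therefore a statement about the estimator only through its nuisance rates, which the data-adaptive TMLE shares as long as the LASSO/$L_1$ control keeps $\bar Q^*_{n,\lambda}$ at the HAL rate. This gives $\sup_x|R_{\Psi_\lambda,x}|=o_P((n(\lambda)/n)^{1/2})$. The same reasoning gives (A5) for $R_{n,x}$: I rescale by $n(\lambda)$, use the Donsker class implied by the bounded sectional variation norms, and take the radius from the nuisance rates. Multiplying by $(n/n(\lambda))^{1/2}$ then gives the expansion in (i) with $o_P(1)$ remainder uniformly in $x$. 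The leading term involves only the non-random function $D^*_{\Psi_\lambda(),x,P_0}$, because ${\cal R}_\lambda$ is fixed. So for each $x$ the Lindeberg CLT applies, with variance $n(\lambda)\sigma^2_\lambda(x)$ controlled by Lemma~\ref{lem:variancegen}; the bounded-eigenvalue assumption gives the Lindeberg condition after normalization. The simultaneous bands follow from the entropy bound for the one-dimensional class $\{D^*_{\Psi_\lambda(),x,P_0}/n(\lambda):x\}$, exactly as in Theorem~\ref{chtargetfunctiontmleI}.

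For (ii) I write
\[
\Psi_{\lambda,n}(P^*_{n,\lambda})-\Psi(P_0)=\{\Psi_{\lambda,n}(P^*_{n,\lambda})-\Psi(P^*_{n,\lambda})\}+\{\Psi(P^*_{n,\lambda})-\Psi_\lambda(P^*_{n,\lambda})\}+\{\Psi_\lambda(P^*_{n,\lambda})-\Psi_\lambda(P_0)\}+\{\Psi_\lambda(P_0)-\Psi(P_0)\}.
\]
The last bracket is $O^+(n(\lambda)^{-k^*})$ by (A3). The second bracket has the same order by (A3) applied at $P^*_{n,\lambda}\in{\cal M}$; this needs $\Psi(P^*_{n,\lambda})$ to stay in $D^{(k)}_M$, which the $L_1$ control of the targeting step provides. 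The first bracket is bounded by the projection-approximation assumption of Lemma~\ref{lem:sieverategen}, which is uniform over $D^{(k)}_M$ and applied to $\phi=\Psi(P^*_{n,\lambda})$, giving $O_P^+(n(\lambda)^{-k^*})$. With $n(\lambda)\sim n^{1/(2k^*+1)}$ inflated by a $(\log n)^c$ factor, all three brackets are $o((n(\lambda)/n)^{1/2})$ uniformly in $x$. Combining with (i) gives the stated expansion, pointwise normality and bands.

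The main obstacle is the first bracket in (ii). The projection $\Pi$ in Lemma~\ref{lem:sieverategen} may be taken in an arbitrary norm, while $\Psi_{\lambda,n}$ is the $\langle\cdot,\cdot\rangle_\psi$ projection, so I must check that the $\psi$-projection inherits the sup-norm rate. I would first try to transfer the $L^2$ bound to sup norm using the result of \citet{van2023higher} that $L^2$-projections attain the optimal approximation rate, as in (A3). If that fails for the random sieve, I would state this bound as an extra condition. A secondary point is checking that (A4)–(A5) remain valid when the estimator is built from the random sieve; I would handle this through the sectional-variation-norm control $M_n$ already assumed in Lemma~\ref{lem:sieve rates}.
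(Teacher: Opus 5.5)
Your proposal is correct and follows the paper's proof. The paper uses the same exact expansion of $\Psi_{\lambda}(P_{n,\lambda}^*)-\Psi_{\lambda}(P_0)$ around the fixed companion sieve, Lemma~\ref{lem:sieverategen} to control the now non-vanishing score term $P_nD^*_{\Psi_{\lambda}(),x,P_{n,\lambda}^*}$, (A4)--(A5) for the remainder and empirical-process terms, the CLT for the non-random gradient $D^*_{\Psi_{\lambda}(),x,P_0}$, and a triangle inequality through $\Psi_\lambda$ for part (ii).

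Your four-bracket split in (ii) makes explicit what the paper compresses into ``by (A3)'' for $\sup_x|\Psi_{\lambda,n}(P_{n,\lambda}^*)-\Psi_{\lambda}(P_{n,\lambda}^*)|(x)$: the $\langle\cdot,\cdot\rangle_\psi$-projection onto the random sieve must itself attain the sup-norm rate, and you are right to flag that this may need an extra assumption. One small correction: the bounded maximal eigenvalue only bounds the variance from above, so it does not give the Lindeberg condition. For that you need a sup-norm bound on $D^*_{\Psi_\lambda(),x,P_0}$ together with a lower bound of order $n(\lambda)$ on its variance.
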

\begin{proof}
Apply the exact first-order expansion of Theorem~\ref{chtargetfunctiontmleIgen} to the fixed-sieve projection $\Psi_{\lambda}$ evaluated at $P_{n,\lambda}^*$:
\[
\Psi_{\lambda}(P_{n,\lambda}^*)(x)-\Psi_{\lambda}(P_0)(x)=
(P_n-P_0)D^*_{\Psi_{\lambda}(),x,P_{n,\lambda}^*}
- P_n D^*_{\Psi_{\lambda}(),x,P_{n,\lambda}^*}
+R_{\Psi_{\lambda},x}(P_{n,\lambda}^*,P_0).\]
By Lemma~\ref{lem:sieverategen}, $P_n D^*_{\Psi_{\lambda}(),x,P_{n,\lambda}^*}=o_P((n(\lambda)/n)^{1/2})$ uniformly in $x$. By (A4) the remainder $R_{\Psi_{\lambda},x}(P_{n,\lambda}^*,P_0)$, and by (A5) the empirical-process term $(P_n-P_0)\{D^*_{\Psi_{\lambda}(),x,P_{n,\lambda}^*}-D^*_{\Psi_{\lambda}(),x,P_0}\}$, are $o_P((n(\lambda)/n)^{1/2})$ uniformly in $x$. Hence, uniformly in $x$,
\[
\Psi_{\lambda}(P_{n,\lambda}^*)(x)-\Psi_{\lambda}(P_0)(x)=
(P_n-P_0)D^*_{\Psi_{\lambda}(),x,P_0}+o_P((n(\lambda)/n)^{1/2}).\]
Since ${\cal R}_\lambda$ is non-random, $D^*_{\Psi_{\lambda}(),x,P_0}$ is a fixed function of $O$, so the leading term is a normalized sum of i.i.d.\ mean-zero variables and the CLT applies, giving asymptotic normality with respect to the projection $\Psi_{\lambda}$. This establishes part (i).
Finally, at $n(\lambda)\sim^+ n^{1/(2k^*+1)}$ we undersmooth by $\log n$-factors exactly as in Theorem~\ref{chtargetfunctiontmleIgen}, so that $\sup_x\mid \Psi_{\lambda}(P_0)-\Psi(P_0)\mid(x)$ and $\sup_x\mid \Psi_{\lambda,n}(P_{n,\lambda}^*)-\Psi_{\lambda}(P_{n,\lambda}^*)\mid(x)$ are $o_P((n(\lambda)/n)^{1/2})$ by (A3), giving the stated result with respect to $\Psi(P_0)$, with pointwise normality and simultaneous bands following exactly as in Theorem~\ref{chtargetfunctiontmleIgen}. This establishes part (ii).
\end{proof}

The application of Theorem~\ref{chtargetfunctiontmleIgenadaptive} to T-HAL-MLE, with data dependent sieve given by ${\cal R}_{\lambda,n}$ indexed by the $L_1$-norm used in the targeted LASSO step, yields the theorem for T-HAL-MLE.

\section{Simulation Study}\label{sec:simulation}

We evaluate the finite-sample performance of T-HAL-MLE through a Monte Carlo study that varies the data-generating process, treatment distribution, and sample size.
We compare three types of estimators of $\Psi(P)$:
\begin{enumerate}[leftmargin=2em]
 \item \textbf{HAL-MLE}: The plug-in estimator $\hat{\psi}^{\text{HAL}}(a)$ obtained by marginalizing the HAL outcome fit over the empirical distribution of $W$. No targeting step is applied. This estimator serves as our baseline for comparison.
 \item \textbf{T-HAL-MLE}: The proposed targeted estimator $\hat{\psi}^{\text{T-HAL}}(a)$, where the targeted outcomes are marginalized over the empirical distribution of $W$, and projected onto the basis functions chosen via the LASSO at the targeting stage. An algorithm for this estimator is given in Algorithm \ref{alg:thal-mle}.
 \item \textbf{T-HAL-MLE (Plugin)}: An alternative $\hat{\psi}^{\text{T-HAL,pl}}(a)$ that evaluates the targeted outcome regression directly at each test point, i.e. the plug-in marginal before projecting onto the chosen basis (see also Algorithm \ref{alg:thal-mle}). Since the working model selected by HAL can be rich, the plug-in marginal may already approximate the projection closely, we include this estimator to quantify how much the final projection step contributes in practice.
\end{enumerate}
As a benchmark we include the \textbf{Oracle Projection} $\hat{\psi}^{\text{oracle}}(a)$, the least-squares projection of the true DRC $\Psi(P)$ onto the HAL active basis selected in each replication.
By comparing the Oracle Projection to the truth we can investigate the basis approximation error, a reflection of how well the working model can approximate the true curve, reflecting the best the estimator could possibly do given the chosen basis.

\begin{algorithm}
  \caption{Targeted HAL-MLE (T-HAL-MLE) of the Dose Response Curve}
  \label{alg:thal-mle}
  \begin{algorithmic}[1]
    \REQUIRE $(O_i)_{i=1}^n = (W_i,A_i,Y_i)_{i=1}^n$, smoothness order $k_1$, basis dimension $N$, rate constants $c_1, c_2 > 0$
    \STATE \textbf{Initial estimate.} Fit $\bar{Q}_n$ via $k$-th order HAL-MLE (or Super Learner) of $\bar{Q}_0 = \EE(Y\mid A,W)$.
    \STATE \textbf{Propensity.} Fit $g_n(A\mid W)$ via HAL density estimation; estimate the marginal density $\omega(a)$ via kernel density estimation.
    \STATE \textbf{Large basis.} Form $N$-dimensional basis $\boldsymbol{\phi}_{1,N}(a)$ spanning $D^{(k_1)}(\mathcal{R}_{1,N})$. Compute $\Sigma = P_n\bigl[\boldsymbol{\phi}_{1,N}(A)\boldsymbol{\phi}_{1,N}^{\top}(A)\bigr]$, then construct clever covariate
      \[
        C_{g_n}(A,W) = \frac{\omega(A)}{g_n(A\mid W)}\,\Sigma^{-1}\boldsymbol{\phi}_{1,N}(A).
      \]
    \STATE \textbf{LASSO targeting.} With $\bar{Q}_n$ as offset, for each candidate $\lambda \geq 0$ solve
      \[
        \hat{\varepsilon}_n(\lambda) = \argmin_{\pl\varepsilon\pl_1\leq\lambda} P_n\,L\!\bigl(\bar{Q}_n + C_{g_n}\,\varepsilon\bigr).
      \]
      Set $\bar{Q}^*_{n,\lambda} = \bar{Q}_n + C_{g_n}\,\hat{\varepsilon}_n(\lambda)$ and record $n(\lambda) = \bigl|\{j:\hat{\varepsilon}_n(\lambda)(j)\neq 0\}\bigr|$.
        \STATE \textbf{Rate-optimal $\lambda$ selection.} Let $\lambda_{n,cv}$ be the cross-validated selector (minimiser of $V$-fold cross-validated risk), and find $\lambda^*_{n,cv}$:
    \[
    \lambda^*_{n,cv} = \begin{cases}
    \inf\left\{\lambda : n(\lambda) \leq \lceil c_1 \cdot n^{1/(2k_1^*+1)} \rceil\right\}
      & \text{if } n(\lambda_{n,cv}) < \lceil c_1 \cdot n^{1/(2k_1^*+1)} \rceil \\[6pt]
    \sup\left\{\lambda : n(\lambda) \geq \lfloor c_2 \cdot n^{1/(2k_1^*+1)} \rfloor\right\}
      & \text{if } n(\lambda_{n,cv}) > \lfloor c_2 \cdot n^{1/(2k_1^*+1)} \rfloor \\[6pt]
    \lambda_{n,cv}
      & \text{otherwise.}
    \end{cases}
    \]
    \STATE \textbf{Plug-in T-HAL-MLE.} At $\lambda^* = \lambda^*_{n,cv}$ let $\boldsymbol{\phi}_{1,\lambda^*}$ be the basis functions with $\hat{\varepsilon}_n(\lambda^*)(j)\neq 0$. Define the plug-in marginal
      \[
        \hat{\psi}^{\text{T-HAL,pl}}(a) = \frac{1}{n}\sum_{i=1}^n \bar{Q}^*_{n,\lambda^*}(a,W_i).
      \]
    \STATE \textbf{Projection onto selected working model.} Project $\hat{\psi}^{\text{T-HAL,pl}}$ onto the span of $\boldsymbol{\phi}_{1,\lambda^*}$ by solving
      \[
        \hat{\alpha}_n = \argmin_{\alpha}\; P_n\!\left[\bigl(\hat{\psi}^{\text{T-HAL,pl}}(A) - \alpha^{\top}\boldsymbol{\phi}_{1,\lambda^*}(A)\bigr)^2\right].
      \]
      This has the closed-form OLS solution $\hat{\alpha}_n = \bigl(P_n[\boldsymbol{\phi}_{1,\lambda^*}\boldsymbol{\phi}_{1,\lambda^*}^{\top}]\bigr)^{-1} P_n[\boldsymbol{\phi}_{1,\lambda^*}(A)\,\hat{\psi}^{\text{T-HAL,pl}}(A)]$.
    \STATE \textbf{T-HAL-MLE.} The final estimator is
      \[
        \hat{\psi}^{\mathrm{T\text{-}HAL}}(a) = \hat{\alpha}_n^{\top}\boldsymbol{\phi}_{1,\lambda^*}(a).
      \]
  \end{algorithmic}
\end{algorithm}
 
\subsection{Data-generating process}\label{sec:sim-design}

We consider three data-generating processes (DGPs), each specifying a joint distribution $P_0$ over $(W, A, Y) \in \mathbb{R}^3$ with $W \sim \mathcal{N}(0,1)$ and $Y = \bar{Q}_0(A, W) + \varepsilon$, $\varepsilon \sim \mathcal{N}(0,1)$. We restrict to a one-dimensional covariate $W$ that favours the HAL-MLE and is thus particularly suitable for benchmarking the T-HAL-MLE against the HAL-MLE under conditions that most favour the latter. 

In all DGPs the true conditional mean takes the form $\bar{Q}_0(a,w) = 3\sin(1.5a) + 0.5a + g(w) + (a-5)g(w)$, where the component $g(w)$ varies across DGPs and governs the complexity of the relationship between $W$ and $Y$:
\begin{enumerate}
\item \textbf{Single discontinuity:} $g(w) = 5\cdot\mathds{1}[w > 0]$, introducing a single step change in the dose-response curve at $w = 0$ and an interaction between $a$ and $w$.
\item \textbf{Multiple discontinuities:} $g(w) = 4\bigl(\mathds{1}[w > -1] - \mathds{1}[w > 0] + \mathds{1}[w > 1]\bigr)$, introducing three sign-alternating jumps.
\item \textbf{High frequency:} $g(w) = 3\sin(7\pi w)$, a smooth but rapidly oscillating function that makes estimation particularly challenging.
\end{enumerate}
Each DGP is paired with two treatment distributions: a normal distribution $A \mid W \sim \mathcal{N}\bigl(\mu = 5 + 1.5W,\, \sigma^2 = 2^2\bigr)$ truncated to $[0,10]$, and a uniform distribution $A \sim \mathrm{Uniform}(0,10)$, which provides equal coverage across the dose range, equivalent to randomization of treatment. For each replication, all estimators are evaluated at the same $m = 25$ equally-spaced dose test points $\{a_1, \ldots, a_{25}\}$ in $[1,9]$. We run $B = 1{,}000$ replications at $n \in \{500, 1000, 2000\}$, for a total of 18 simulation scenarios (3 DGPs $\times$ 2 treatment distributions $\times$ 3 sample sizes).

\subsection{Implementation}\label{sec:sim-implementation}

In the first two DGPs, we estimate the outcome model using HAL \citep{hal9001_manual} with zero-order spline basis functions due to the discontinuity of the outcome functions, while for the high-frequency DGP, the outcome is better approximated using first-order splines. The conditional treatment density $g_0(a \mid w)$ is estimated via the R package \texttt{haldensify} \citep{hejazi_haldensify_pkg_2022}. We stabilize the weights by multiplying by the marginal density of $A$, $\omega(a) = E_0[g_0(a \mid W)]$, which we find using kernal density estimation \citep{wand_kernel_1994, sheather_reliable_1991}. While our DGPs do not suffer from large inverse propensity weights (IPW) weights, we recommend following \citep{gruber_data-adaptive_2022} who suggest truncating IPW weights from above at $\sqrt{n}\log(n/5)$ in order to prevent extreme weights from dominating the targeting step and leading to bias in the estimator.

Since $\psi_0$ is smooth in $a$, we assume $\psi_0 \in D^{(1)}_M([0,10])$, i.e., the true DRC belongs to the first-order HAL function class with $k_1 = 1$. Following the rate-optimal selector described in Section~\ref{par:rate-optimal-selector}, we choose $\lambda^*_{n,cv}$ such that we have a minimum working model size of at least $\lceil 6 \cdot n^{1/5}\rceil$ and a maximum of $\lfloor 9 \cdot n^{1/5} \rfloor$. The constants $c_1 = 6$ and $c_2 = 9$ were selected by evaluating a grid of candidate values in a separate set of simulations and choosing those that minimized RMSE across settings.

We also considered a fixed selector that chooses the smallest $\lambda$ such that $n(\lambda) \geq \lceil c \cdot n^{1/5} \rceil$ with $c = 6$. To assess the two approaches at the oracle level, Table~\ref{tab:oracleproj} reports bias and RMSE of the oracle projection onto the working model selected by each method, averaged over dose points. The fixed selector yields consistently higher RMSE across all DGPs and sample sizes, with differences ranging from roughly 0.02-0.04, while bias differences are negligible, indicating the benefit of allowing the cross-validated $L_1$-norm $\lambda_{n,cv}$ to choose an appropriate number of basis functions most relevant for the targeting, as long as it selects enough basis functions. Given that the adaptive selector outperformed consistently at the oracle level, we adopt it as the default and do not include the fixed selector among the main comparisons.

One could consider replacing the HAL-based adaptive selector with a prespecified fixed knot basis spanning the support of $A$, which would be a natural choice in the univariate setting considered here and thus avoiding the use of LASSO for selecting basis functions. However, such an approach does not scale to high dimensional multivariate functions, where specifying a reasonable fixed basis becomes statistically and computationally prohibitive. Since a key motivation for the HAL-based approach is its applicability beyond univariate target functionals, we do not include a fixed-knot alternative in our comparisons.

We estimate the variance of $\hat{\psi}^{\text{HAL}}(a)$ and $\hat{\psi}^{\text{T-HAL}}(a)$ using their respective canonical gradients (\citep{shi2025halbasedpluginestimationpointwise} and Equation~\eqref{canonicalgradient}) and the Delta Method. For the $\hat{\psi}^{\text{T-HAL}}(a)$ in particular, we cross-fit the outcome model using $K=2$ folds to ensure that the outcomes $Y$ are independent of the data used to fit $\bar{Q}_n$, which creates true residuals when constructing the canonical gradient. Wald-type confidence intervals are then constructed as $\hat \psi (a_j)\pm 1.96\times \cdot \hat{\sigma}(a_j)$, where $\hat \sigma(a_j) = n^{-1/2}\bigl(P_n\{D^*_{\Psi_\lambda(),a_j,P_{n,\lambda}^*}\}^2\bigr)^{1/2}$ is the estimated standard error at $a_j$. In practice, we invert $\Sigma_{\phi_{1,\lambda}}$ using the Moore-Penrose pseudo-inverse \citep{Penrose_1955}, as the Gram matrix can be rank-deficient when the HAL active set contains many nearly collinear basis functions. This is a recognized computational limitation of HAL \citep{li2025regularizedtmle}.

\subsection{Evaluation Metrics}
We evaluate the estimators on the accuracy of the point estimates, the size of any bias relative to the estimator's sampling variability, and the validity of the resulting confidence intervals. All quantities are computed pointwise at each of the $J=25$ dose evaluation points, for each estimator, DGP, sample size, and treatment distribution, and summarised across the $B$ replications.

We report the pointwise RMSE,
\[
\mathrm{RMSE}(a_j) = \left\{\frac{1}{B}\sum_{b=1}^{B} \bigl[\hat{\psi}^{(b)}(a_j) - \psi_0(a_j)\bigr]^2\right\}^{1/2},
\]
and the pointwise bias,
\[ B(a_j) = \frac{1}{B}\sum_{b=1}^{B} \bigl[\hat{\psi}^{(b)}(a_j) - \psi_0(a_j)\bigr]. \] 

To assess the potential for valid statistical inference, we examine the ratio of pointwise bias to Monte Carlo standard error (MC-SE), $B(a_j)/\hat{\sigma}_{\text{MC}}(a_j)$ where $\hat{\sigma}_{\text{MC}}(a_j)$ is the Monte Carlo standard deviation across replications. Valid Wald inference requires the bias to be negligible relative to the sampling variability of the estimator. Test points with a ratio within $\pm 1/\log(n)$ are considered to have negligible bias relative to the variability of the estimator.

To assess the validity of the interval estimates we report, at each dose point, the empirical coverage and mean width of two types of $95\%$ confidence interval. The first uses the Monte Carlo standard deviation in place of an estimated standard error, with empirical coverage
\[ \frac{1}{B}\sum_{b=1}^{B} \mathds{1}\!\left[\psi_0(a_j) \in \hat{\psi}^{(b)}(a_j) \pm 1.96\,\hat{\sigma}_{\text{MC}}(a_j)\right], \]
and mean width $\frac{1}{B}\sum_{b=1}^{B} 3.92\,\hat{\sigma}_{\text{MC}}(a_j)$. Because they are constructed from the Monte Carlo standard deviation, these MC-SE intervals isolate the sampling behaviour of the estimator itself from the quality of any variance estimator. The second is the Wald-type interval, with empirical coverage
\[ \frac{1}{B}\sum_{b=1}^{B} \mathds{1} \left[\psi_0(a_j) \in \hat{\psi}^{(b)}(a_j) \pm 1.96\,\hat{\sigma}^{(b)}(a_j)\right], \]
and mean width $\frac{1}{B}\sum_{b=1}^{B} 3.92\,\hat{\sigma}^{(b)}(a_j)$, constructed using the standard error $\hat{\sigma}^{(b)}(a_j)$ estimated from the canonical gradient within each replication. The T-HAL-MLE (Plugin) is excluded from the Wald comparison as it does not constitute a valid TMLE in the sense of projecting onto a finite-dimensional working model, and no valid variance estimator is proposed for it.

\subsection{Results}

Figure~\ref{fig:bc_rmse} presents boxplots of the pointwise RMSE.
In the single and multiple discontinuity settings, the T-HAL-MLE had smaller RMSE than the T-HAL-MLE (Plugin), which in turn had smaller RMSE than the HAL-MLE.
All estimators perform worse in the high-frequency DGP, which represents the most challenging setting due to the rapid oscillation of the outcome model in $W$, with HAL-MLE achieving the largest RMSE. In this particular setting, the two T-HAL-MLE methods performed comparably. In all settings, RMSE decreases consistently with increasing sample size, as expected. The oracle projection has the smallest RMSE, indicating that the chosen working model is able to approximate the truth well.

Figure~\ref{fig:bc_bias} displays the pointwise bias, 
across the same scenarios. Under the uniform treatment distribution, the HAL-MLE exhibits bias in the single and multiple discontinuity settings that persists across sample sizes, while the T-HAL-MLE (Plugin) shows a smaller but non-negligible bias. The T-HAL-MLE remains closest to zero throughout, demonstrating that the TMLE targeting step and projection successfully reduces bias. In the single discontinuity setting with normal treatment distribution, the HAL-MLE exhibits considerably wider spread and larger outliers than the two TMLE variants, which remain well-centred near zero. All three estimators exhibit substantial bias in the multiple discontinuities setting under normal treatment distribution, with the HAL-MLE and T-HAL-MLE (Plugin) most severely affected. Even the T-HAL-MLE shows non-negligible bias at $n=500$ and $n=1000$ in this setting, though it remains the best-performing estimator. Notably, the HAL-MLE bias in this setting increases at $n=2000$. Across all other settings, bias decreases with increasing sample size for all estimators, as expected. The bias of the oracle projection is the smallest of all the estimators and is centered around zero, indicating that the chosen working model is able to approximate the true curve without bias. 

\begin{figure}
 \centering
 \includegraphics[height=0.4\textheight]{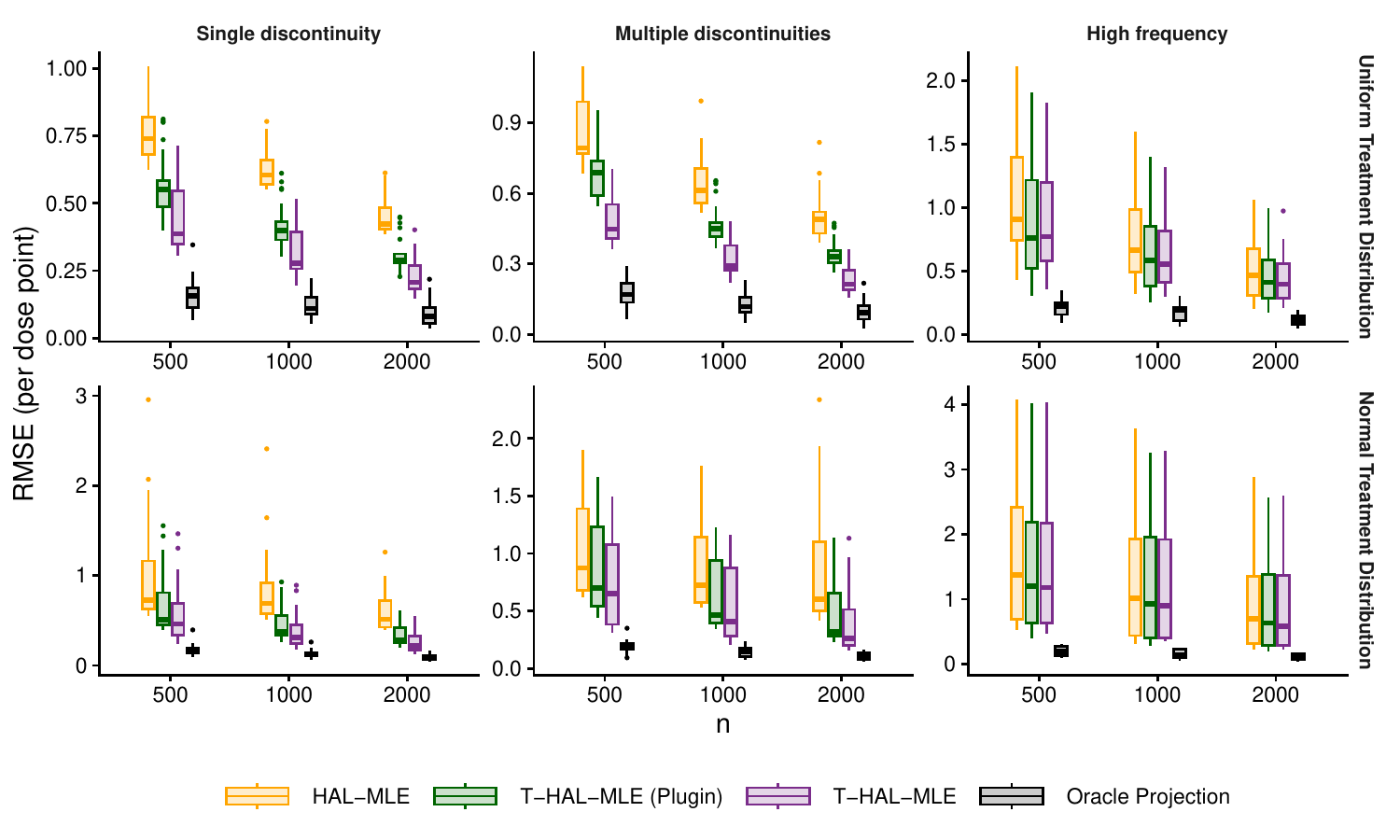}
 \caption{Boxplots of the RMSE across the $m=25$ dose evaluation points $a_j\in[1,9]$, for each estimator, DGP, sample size $n\in\{500,1000,2000\}$, and treatment distribution.
 Each observation in a boxplot is the average RMSE at one test point.}
 \label{fig:bc_rmse}
\end{figure}

\begin{figure}
 \centering
 \includegraphics[height=0.4\textheight]{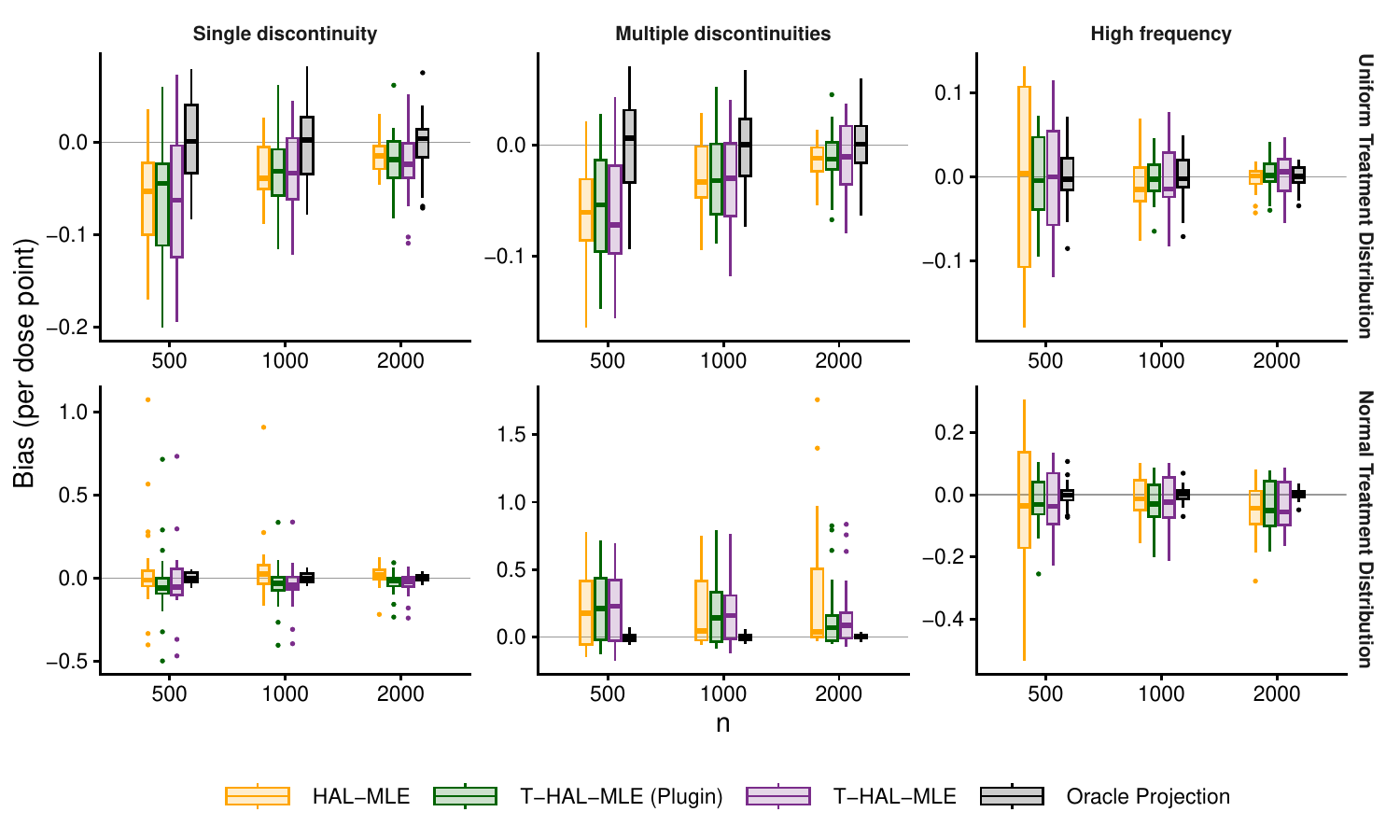}
 \caption{Boxplots of the bias across the $m=25$ dose evaluation points $a_j\in[1,9]$, for each estimator, DGP, sample size $n\in\{500,1000,2000\}$, and treatment distribution.
 Each observation in a boxplot is the average bias at one test point.}
 \label{fig:bc_bias}
\end{figure}

We present the ratio of pointwise bias to Monte Carlo standard error (MC-SE) in Figure~\ref{fig:bc_ratio}. Under the uniform treatment distribution, the majority of test points fall within this band for all estimators and DGPs, with ratios shrinking toward zero as $n$ increases. The normal treatment distribution is considerably more challenging, particularly in the multiple discontinuities DGP where the T-HAL-MLE (Plugin) exhibits large positive bias-to-MC-SE ratios that persist even at $n=2000$. The T-HAL-MLE also exhibits wider spread in the bias-to-MC-SE ratio in some settings, most notably the multiple discontinuities DGP with normal treatment distribution. The greater efficiency of the T-HAL-MLE results in a smaller MC-SE, so that even modest residual bias at individual dose points produces a larger ratio. Nonetheless, the spread of the T-HAL-MLE ratios decreases with $n$ across all settings, consistent with the bias becoming negligible relative to the estimator's variability at larger sample sizes.

We assess coverage of MC-SE confidence intervals in Figure~\ref{fig:bc_oracle_cov}. Under the uniform treatment distribution, all three estimators achieve close to nominal 95\% coverage across all DGPs and sample sizes, with the exception of slight overcoverage for the HAL-MLE in the high-frequency setting. Under the normal treatment distribution, the single discontinuity and high-frequency settings yield broadly nominal coverage for all estimators. The multiple discontinuities DGP is again the most challenging: both the T-HAL-MLE (Plugin) and T-HAL-MLE exhibit substantial undercoverage, with some dose points falling as low as 75--80\%, and this does not fully resolve at $n=2000$. We investigate this specific DGP further in Figure \ref{fig:oracle_n2000} and find that the outliers with undercoverage occur at the tails of the treatment where data is more sparse due to the normal treatment distribution.
Figure~\ref{fig:bc_oracle_width} shows the width of the MC-SE intervals. The HAL-MLE produces the widest intervals across all DGPs and treatment distributions, while the T-HAL-MLE produces the narrowest, with the T-HAL-MLE (Plugin) lying between the two. This ordering is consistent across all settings and sample sizes, and all three estimators exhibit decreasing interval width with increasing $n$.

\begin{figure}
 \centering
 \includegraphics[height=0.4\textheight]{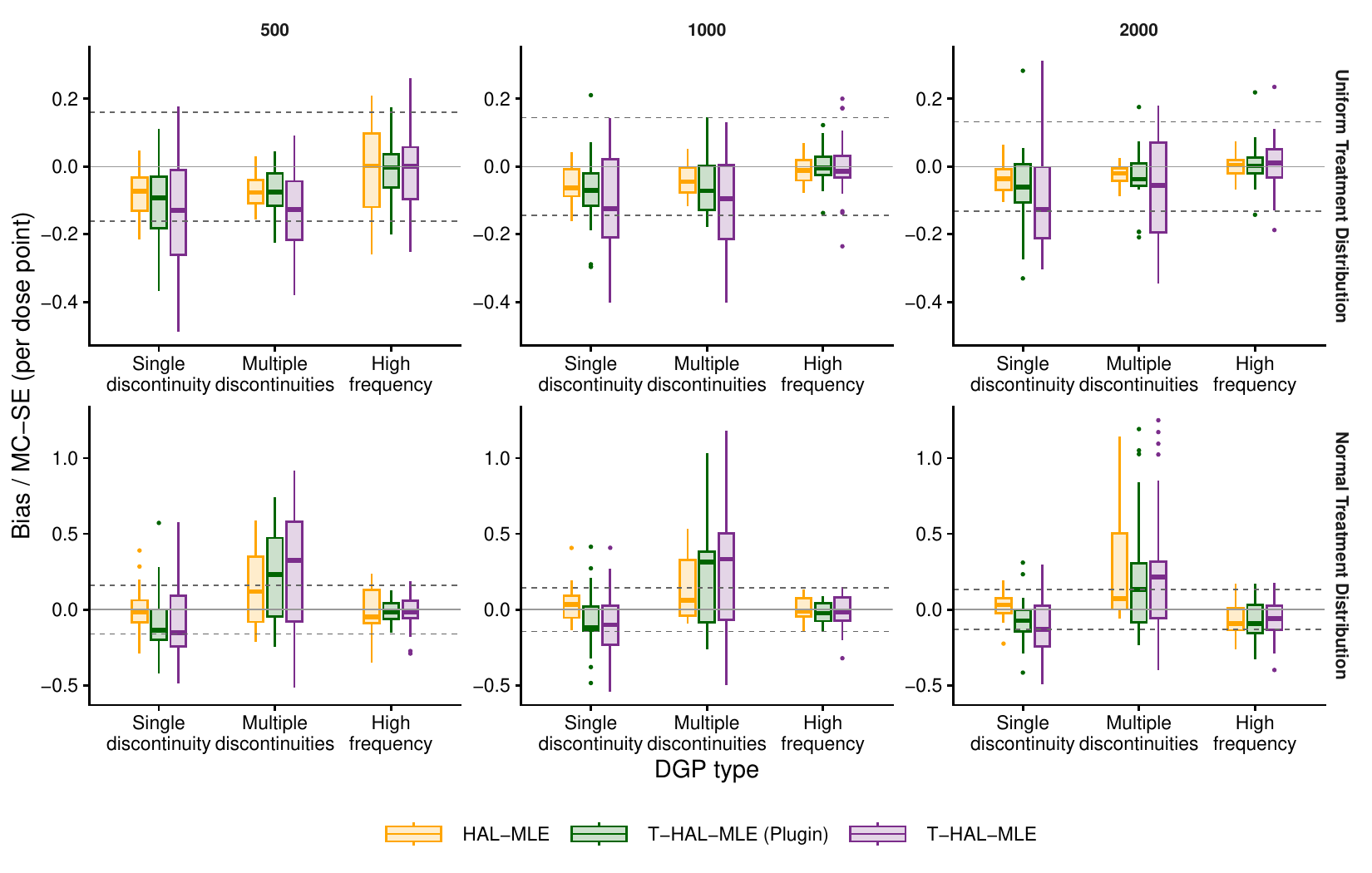}
 \caption{Boxplots of the Bias/MC-SE ratio across the $m=25$ dose evaluation points, for each estimator, DGP, sample size, and treatment distribution.
 Each observation is the ratio at one test point.
 Dashed lines mark $\pm 1/\log(n)$.}
 \label{fig:bc_ratio}
\end{figure}

\begin{figure}
 \centering
 \includegraphics[height=0.4\textheight]{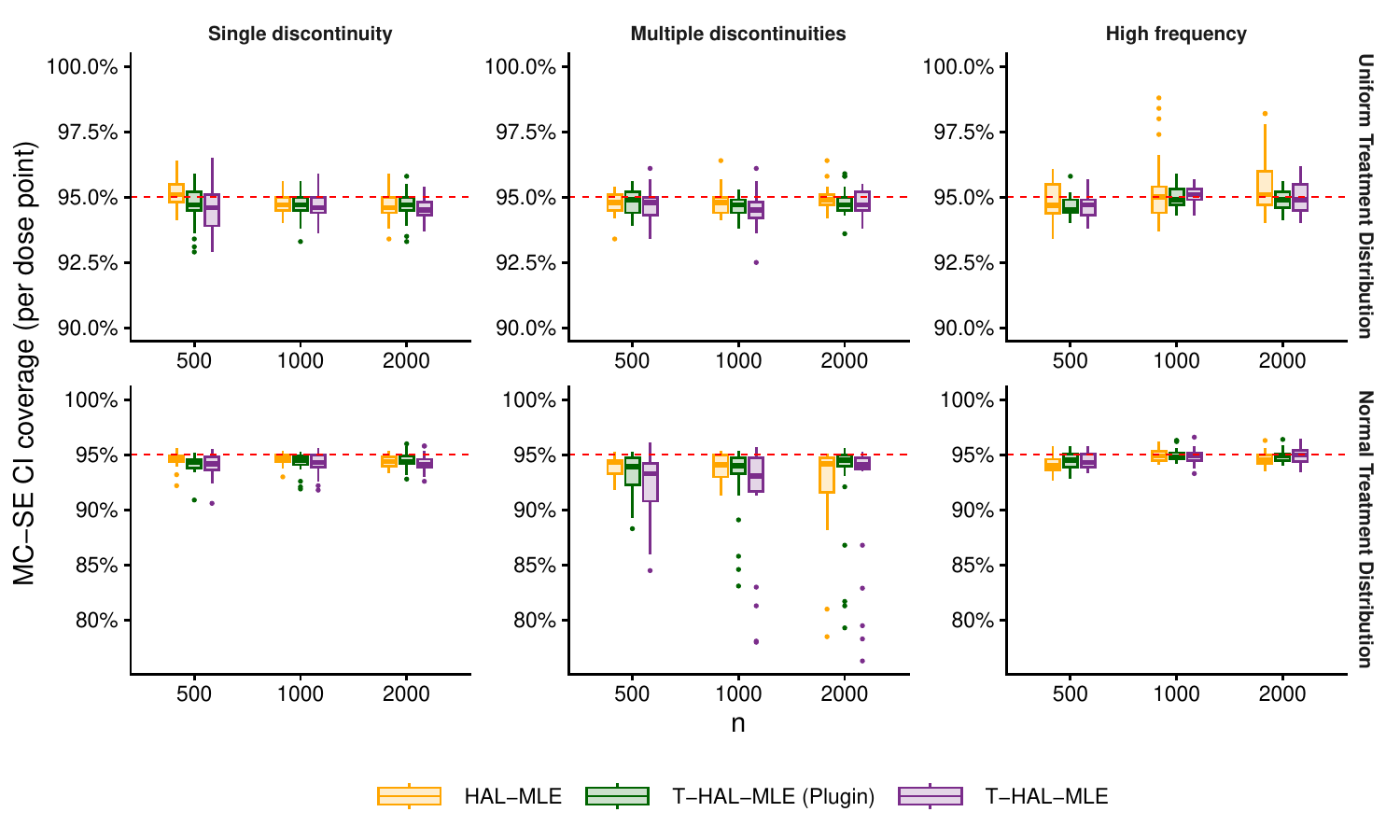}
 \caption{Boxplots of the MC-SE coverage of intervals, using the Monte Carlo standard deviation, across the $m=25$ dose evaluation points, for each estimator, DGP, sample size, and treatment distribution.
 Each observation is the empirical coverage at one test point.
 The dashed horizontal line marks the 95\% nominal level.}
 \label{fig:bc_oracle_cov}
\end{figure}

\begin{figure}
    \centering
    \includegraphics[height=0.4\textheight]{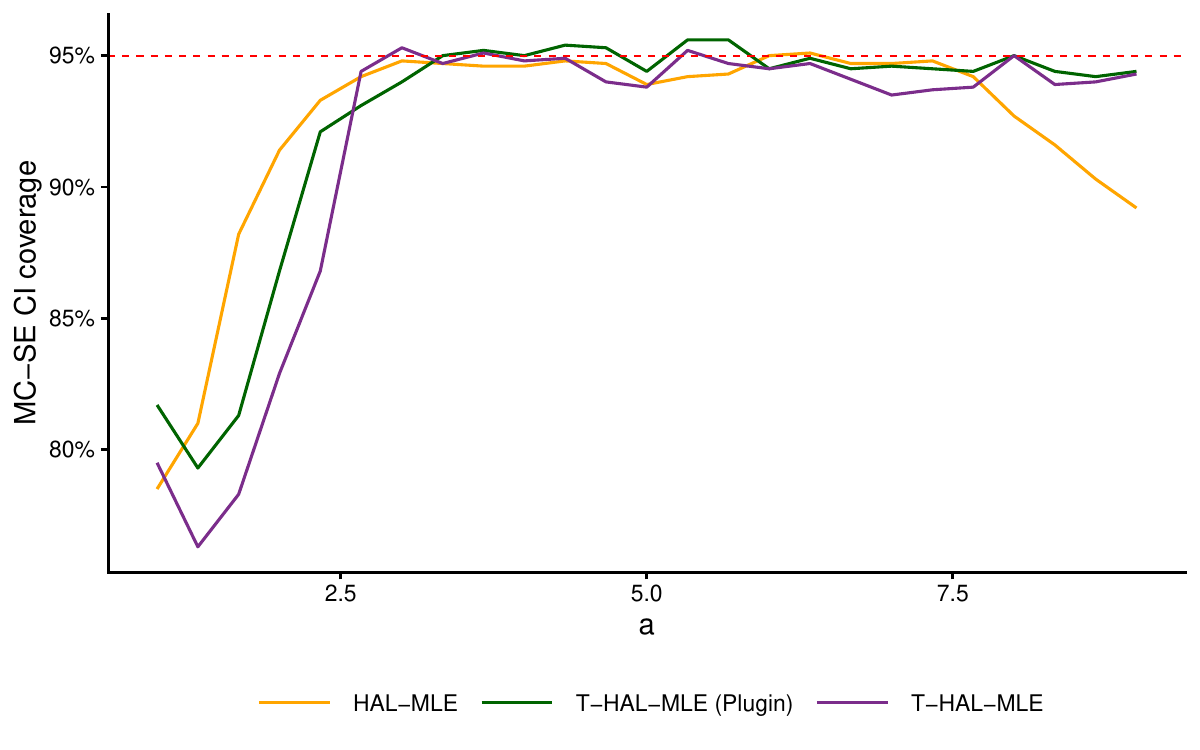}
    \caption{Oracle coverage across test points $a$ for the Multiple Discontinuity DGP and normal treatment distribution with $n=2000$.}
    \label{fig:oracle_n2000}
\end{figure}

\begin{figure}
 \centering
 \includegraphics[height=0.4\textheight]{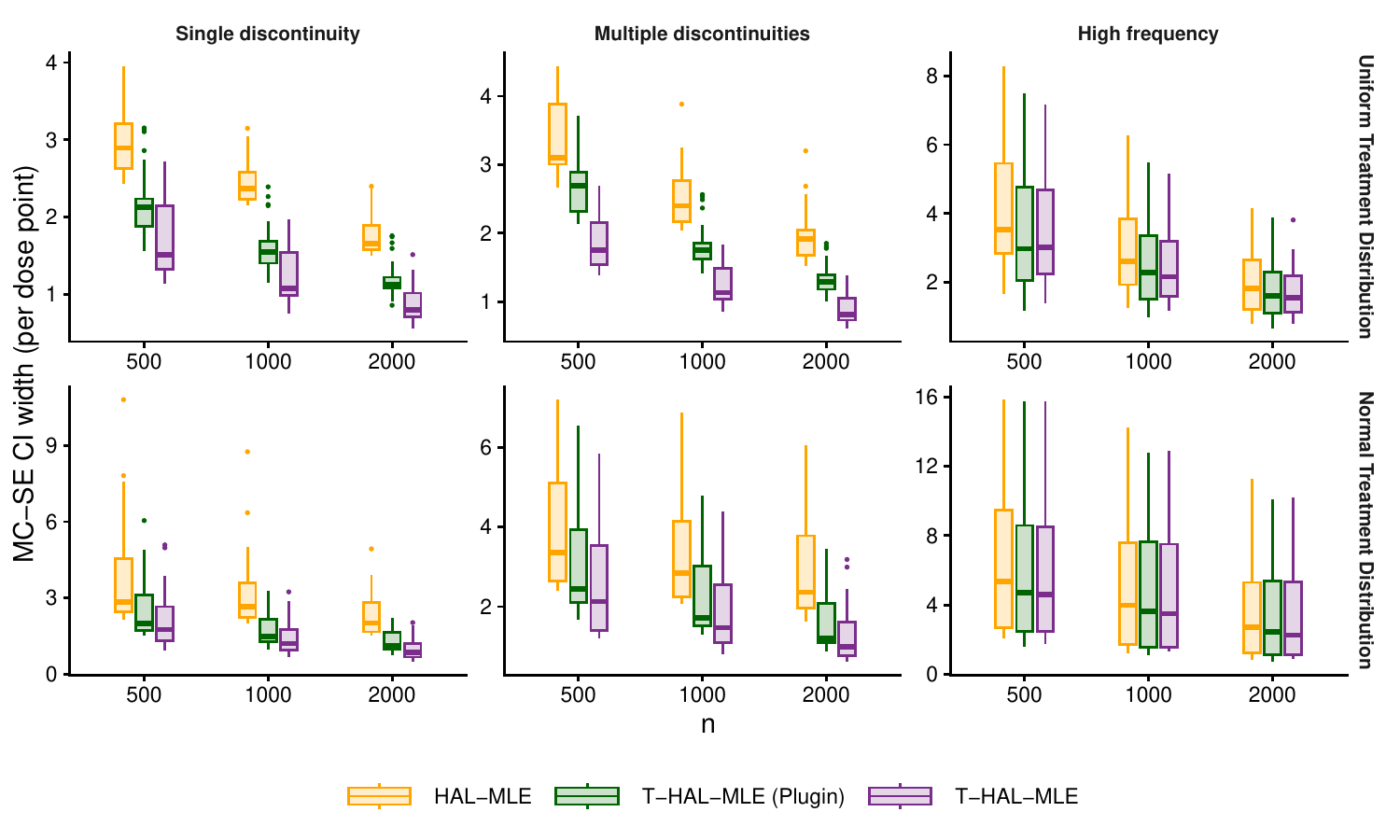}
 \caption{Boxplots of the width of MC-SE intervals, using the Monte Carlo standard deviation, across the $m=25$ dose evaluation points, for each estimator, DGP, sample size, and treatment distribution.
 Each observation is the mean width at one test point.
 }
 \label{fig:bc_oracle_width}
\end{figure}

Figure~\ref{fig:bc_ci_cov} presents coverage of the Wald-type confidence intervals. Under the uniform treatment distribution, the T-HAL-MLE approaches nominal 95\% coverage across all DGPs, with coverage improving toward the nominal level as $n$ increases. The HAL-MLE substantially undercovers in all settings, with coverage starting around 75--80\% at $n=500$ and improving only modestly with $n$. Under the normal treatment distribution, both estimators exhibit considerable undercoverage across all DGPs, with the HAL-MLE most severely affected.

Figure~\ref{fig:bc_ci_width} shows the width of the Wald-type intervals. Although the HAL-MLE undercovers, it still produces the widest confidence intervals, while the T-HAL-MLE produces narrower intervals than the HAL-MLE, even though both methods rely on the same underlying outcome model.

\subsection{Summary of findings}

Across the accuracy metrics, the T-HAL-MLE is consistently the best-performing estimator, achieving the smallest RMSE and the bias closest to zero. Its advantage over the T-HAL-MLE (Plugin) and the HAL-MLE demonstrates that the TMLE targeting step and projection successfully reduces the bias of the plug-in estimator. The oracle projection attains the smallest RMSE overall and a bias centred around zero, indicating that the chosen working model is able to approximate the true curve well and without meaningful bias.

The multiple discontinuities DGP under the normal treatment distribution is the most challenging setting throughout. The residual bias of the T-HAL-MLE there is concentrated at the tails of the treatment distribution, where the data are sparse (Figure~\ref{fig:oracle_n2000}), and the resulting undercoverage of the MC-SE intervals directly reflects that the bias remains non-negligible relative to the estimator's variability at those dose points, consistent with the bias-to-MC-SE ratios in Figure~\ref{fig:bc_ratio}. The greater efficiency of the T-HAL-MLE also contributes: its smaller MC-SE means that even modest residual bias at individual dose points produces a larger ratio, and hence the wider spread seen in Figure~\ref{fig:bc_ratio}. The HAL-MLE is comparatively less affected in this setting, though this is largely attributable to its larger variance producing wider intervals that are more likely to capture the true value despite the bias.

When the true Monte Carlo variance is used, the T-HAL-MLE achieves close to nominal coverage across almost all settings, showing that the estimator itself is well calibrated; the undercoverage of its Wald-type intervals therefore reflects the performance of the variance estimator rather than the estimator itself. The HAL-MLE produces the widest intervals because its variance estimator reflects the full complexity of the fitted outcome model, which includes a large number of basis functions, whereas the T-HAL-MLE targets a finite-dimensional working model that is considerably lower dimensional than the fitted outcome model, and as such its variance estimator reflects only the complexity of this lower dimensional working model. The resulting ordering, with the T-HAL-MLE narrowest and the HAL-MLE widest and the T-HAL-MLE (Plugin) in between, is consistent across all settings and sample sizes. This reduction in variance comes at the cost of some approximation error introduced by the working model projection, but this approximation error does not translate into meaningful undercoverage, as the MC-SE coverage results confirm. Alternative variance estimation approaches, such as the targeted bootstrap, may yield improved coverage.

\begin{figure}
 \centering
 \includegraphics[height=0.4\textheight]{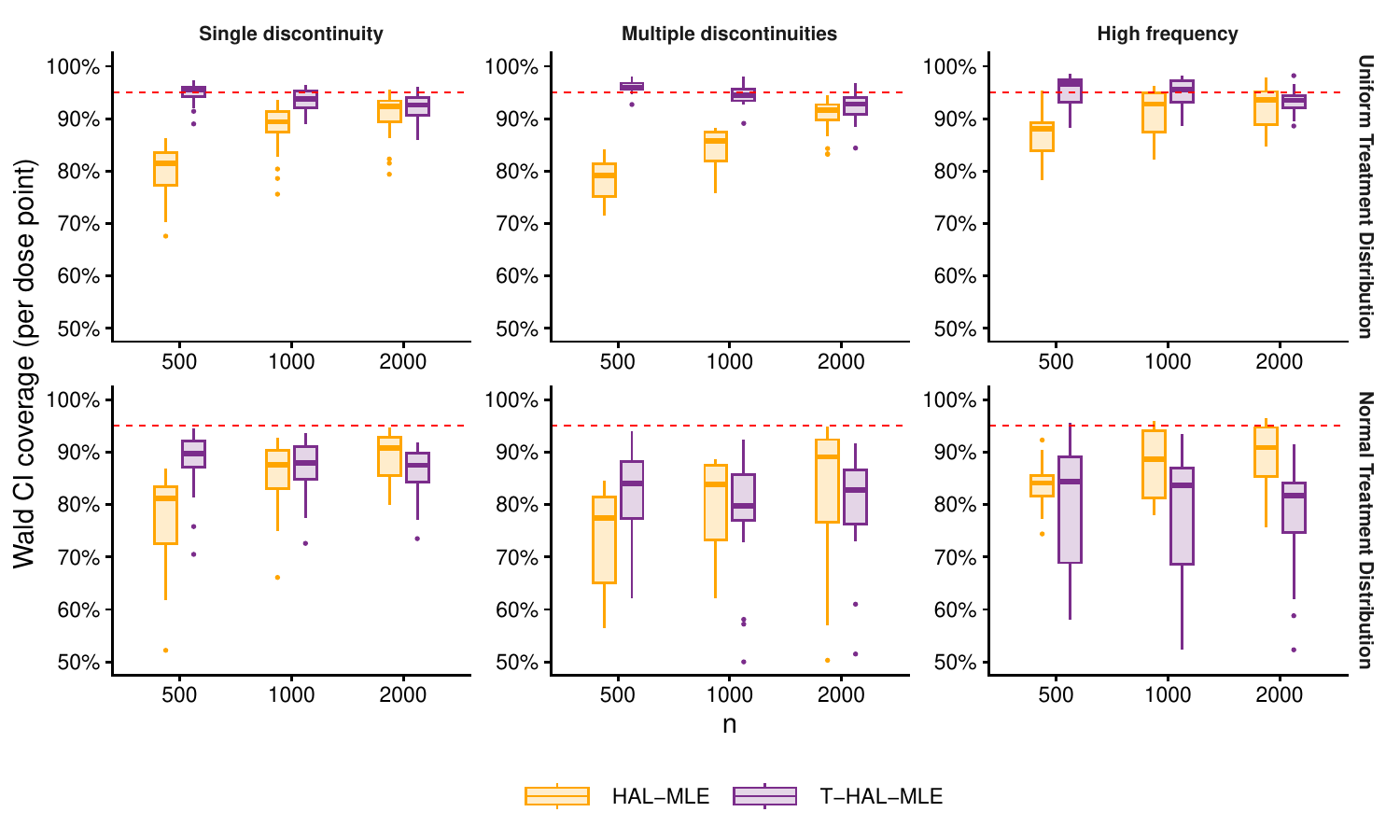}
 \caption{Boxplots of the Wald coverage across the $m=25$ dose evaluation points, for each estimator, DGP, sample size, and treatment distribution.
 Each observation is the empirical coverage at one test point.
 The dashed horizontal line marks the 95\% nominal level.}
 \label{fig:bc_ci_cov}
\end{figure}

\begin{figure}
 \centering
 \includegraphics[height=0.4\textheight]{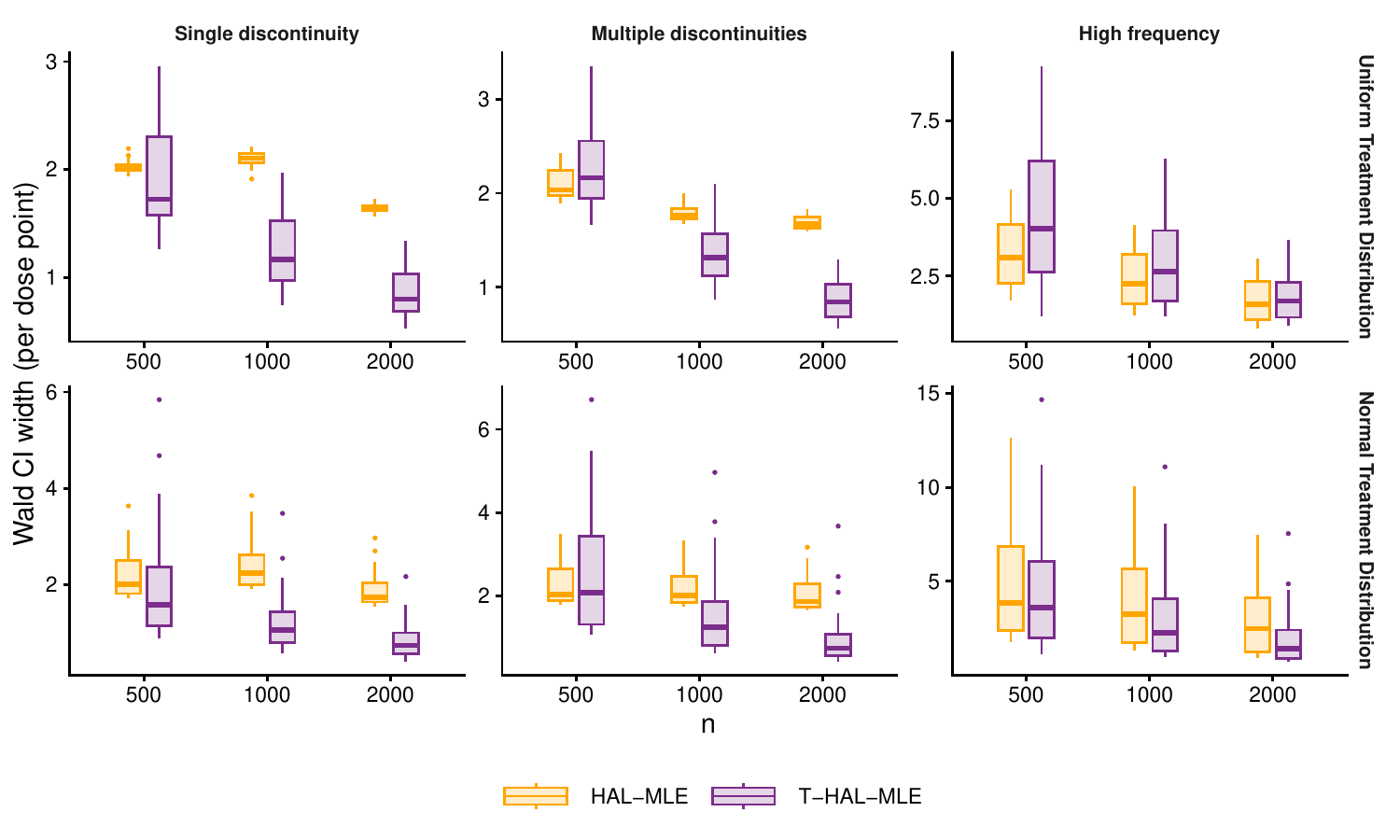}
 \caption{Boxplots of the average Wald-type CI width across the $m=25$ dose evaluation points, for each estimator, DGP, sample size, and treatment distribution. Each observation is the average interval width at one test point.}
 \label{fig:bc_ci_width}
\end{figure}

\section{Discussion}\label{Chtargetfunction7}
In this article we first demonstrated that plug-in HAL-MLE of lower dimensional target functions is sub-optimal, in spite of its inheritance of all the properties of the HAL plugged in. The issue is that its bias is driven by the bias of the HAL-MLE itself, and it fundamentally relies on undersmoothing to remove bias for the plug-in HAL-MLE. However, undersmoothing is ineffective in achieving its goal and comes with deteriorating performance of HAL itself. Theoretically, the $\log n$-factors in the rate of this plug-in will be driven by the dimension ${\bar{d}}$ of the $Q_0$-function, while we can obtain HAL-rates for the lower dimensional $D^{(k)}([0,1]^{d})$ of the target function. 
In fact, beyond losing $\log n$-factors in the rate, the plug-in HAL can fail to achieve the desired optimal rate for the smoothness and dimension of the target function space when the HAL-MLE estimates a less smooth and higher dimensional function: this can easily happen since the target function is obtained from the $Q_0$ function by integrating most of the variables. This motivated applying TMLE to estimate projections of the target function onto a HAL-sieve in target function space. In essence we are replacing the global undersmoothing in HAL-MLE by a targeted undersmoothing using the targeting step of the TMLE, where the targeting step preserves the smoothness of the initial $Q_n$. 

There are various important properties of this TMLE:
\begin{itemize}
\item The bias is driven by the approximation error of the working model in the lower dimensional target function space.
\item The TMLE step corresponds with undersmoothing the initial density estimator but now it selects the right basis functions that matter for bias reduction of the plug-in estimator.
\item The LASSO version of the TMLE step applied to a large working model can be used to data adaptively learn a working model.
\item The LASSO version of the TMLE preserves the sectional variation norm of the initial estimator $Q_n$, thereby fully controlling the conditions of the asymptotic normality theorem, while it is still effective in solving the key score equations.
\item We obtain influence curve based inference based on pointwise asymptotic normality and can also obtain simultaneous confidence bands based on the influence curve across a collection of points.
\end{itemize}

Our framework is general in the sense that the HAL-based sieve used to approximate the target function could in principle be replaced by any other sieve. However, the dimension-free convergence rates established for HAL are a key theoretical advantage, and most alternative approximation schemes do not attain the rates we present in this paper, making such substitutions theoretically unattractive without further justification.

A limitation of this TMLE-based approach is that, relative to a plug-in HAL-MLE, it will be more sensitive to inverse weighting by $1/g(A\mid W)$, which might hurt its finite sample robustness. If need be, one can use A-TMLE \citep{laan_adaptive-tmle_2025} to further stabilize the TMLE by developing the TMLE presented here for data adaptively selected models ${\cal M}_n$ for $P_0$ or, equivalently, $D^{(k)}({\cal R}_n({\bar{d}}))\subset D^{(k)}({\cal R}({\bar{d}}))$ for $\bar{Q}_0$, limiting the influence of extreme weights due to positivity violations.

A question remains on the impact of the choice of projection or equivalently the choice of inner product $\langle\cdot,\cdot\rangle_{\psi}$.
Is there a projection on the working model that would result in the most stable, least variable, TMLE among all projection specific TMLEs?
This will be addressed in a next article in which we define the projection in model space by projecting the data distribution onto a submodel ${\cal M}(\lambda)=\{P\in {\cal M}:\Psi(P)\in D^{(k)}({\cal R}_{\lambda,n})\}$ for which the target function values are all in our working model $D^{(k)}({\cal R}_{\lambda,n})$.
We will show that this represents a norm free projection, and might therefore be called a canonical projection implied by the statistical model and the working model. In addition, we will show that it will not affect the bias in a meaningful way, while it substantially decreases the variance of the TMLE, especially in situations in which there are practical positivity violations for $\Psi(P)$.

\textbf{Research funding:}
V.R. acknowledges the receipt of studentship awards from the Health Data Research UK-The Alan Turing Institute Wellcome PhD Programme in Health Data Science (Grant Ref: 218529/Z/19/Z). 
M.J.L. was supported by Novo Nordisk under the project gift Joint Initiative Causal Inference (JICI-48853).

\bibliography{references_used}


\processdelayedfloats

\appendix

\label{Chtargetfunctionappendix}

\section*{Appendix}

\section{Robust version of $\Psi(P)$ to deal with positivity issues}

We can often associate with $\Psi(P)(x)$ a positivity assumption. One could restrict the support of $P$ to a set $A(\delta,x)$ so that $\Psi(P_{\delta})(x)$ is well supported and the positivity assumption is controlled by a $\delta>0$. Let $P_{\delta,x}$ be the conditional distribution of $O$, given $O\in A(\delta,x)$. We could then create a robust version $\Psi_{\delta}(P)$ defined by $\Psi_{\delta}(P)(x)=\Psi(P_{\delta,x})(x)$. This could correspond with a $\Psi^F(Q_{\delta,x})(x)$. 
For example, consider $\Psi(P)(a)=E_W Q_P(a,W)$. This relies on $g_P(a\mid w)>0$ for $P_W$ almost every $w$. Therefore, we could define a set $A(\delta,a)=\{w: g(a\mid w)>\delta\}$ and define the $\delta$-robust version $\Psi_{\delta}(P)(a)=\int_{w\in A(\delta,a)} Q_P(a,w)dP(w)/P_W(A(\delta,a))$.
We could also define a robust version that uses extrapolation as follows. 
Let $P_{\delta}$ be the conditional distribution of $(W,A,Y)$, given $g(A\mid W)>\delta$. 
Define $Q_{\delta,P}=\arg\min_{Q\in D^{(k)}({\cal R}({\bar{d}})} P_{\delta}L(Q)$ and $\Psi_{\delta}(P)(a)=\int_w Q_{\delta,P}(a,w)dP(w)$. Note that we now still average over all $w$, but $Q_{\delta,P}(a,w)=\sum_{j\in {\cal R}({\bar{d}})}\beta_{\delta,P}(j)\phi_j(a,w)$ is always defined, even for $(a,w)$ with $g(a\mid w)<\delta$ or $g(a\mid w)=0$. 
In this manner, we can define parameters $\Psi_{\delta}(P)$ that are not suffering from positivity, but nonetheless, are not pathwise diffferentiable so that we can apply our general TMLE approach to this target function.

\section{Reducing analysis of plug-in estimators to linear functionals in a single $\bar{Q}_P$ optimizing a risk function.}\label{Apptargetfunction2}


In this section we argue that the analysis of plug-in estimators $\Phi(\bar{Q}_j(P): j=1,\ldots,J)$ involving multiple functional parameters $\bar{Q}_j(P)$ defined as minimizer of a risk function can be decomposed in the analysis of linear parameters $\Phi_{j,n}(\bar{Q}_{j,P})$ , $j=1,\ldots,J$, up till second order remainders that can be separately shown to be negligible. In addition, this still applies when some of the functional parameters are not variation independent. 
This appears to show that one could focus on constructing TMLE of linear functionals of a single $Q$.

\paragraph{\bf How to handle multiple functional parameters: }
Suppose $\Psi(P)=\Phi(\bar{Q}_1,\bar{Q}_2)$ for two functional parameters $\bar{Q}_j(P)=\arg\min_{\bar{Q}_j}PL_j(\bar{Q}_j)$, $j=1,2$.
For a given plug-in estimator $(\bar{Q}_{1n},\bar{Q}_{2n})$ we have
\[
\begin{array}{l}
\Phi(\bar{Q}_{1n},\bar{Q}_{2n})-\Phi(\bar{Q}_{10},\bar{Q}_{20})\approx
\Phi(\bar{Q}_{1n},\bar{Q}_{20})-\Phi(\bar{Q}_{10},\bar{Q}_{20})+\Phi(\bar{Q}_{10},\bar{Q}_{2n})-\Phi(\bar{Q}_{10},\bar{Q}_{20})
\end{array}
\]
up till a second order term. 
Therefore we can focus on analyzing $\bar{Q}_1\rightarrow \Phi(\bar{Q}_1,\bar{Q}_{20})$ and similarly for $\bar{Q}_2$. 
However, at that point, we have that our functional depends on unknown nuisance parameter which could complicate a TMLE since its targeting would be indexed by $\bar{Q}_{20}$ as well.
However, we also have
\[
\begin{array}{l}
\Phi(\bar{Q}_{1n},\bar{Q}_{2n})-\Phi(\bar{Q}_{10},\bar{Q}_{20})\approx
\Phi(\bar{Q}_{1n},\bar{Q}_{2n})-\Phi(\bar{Q}_{10},\bar{Q}_{2n})+\Phi(\bar{Q}_{1n},\bar{Q}_{2n})-\Phi(\bar{Q}_{1n},\bar{Q}_{20})
\end{array}
\]
 up till a second order term. So we could focus on analyzing $\Phi_{1n}(\bar{Q}_{1n})\equiv \Phi(\bar{Q}_{1n},\bar{Q}_{2n}) $ as estimator of $\Phi_{1n}(\bar{Q}_{10})$ and similarly we can define $\Phi_{2n}(\bar{Q}_{2n})\equiv \Phi(\bar{Q}_{1n},\bar{Q}_{2n})$ as estimator of $
\Phi_{2n}(\bar{Q}_{20})$. 
 If there is dependence on a marginal distribution of $W$ that is naturally and relatively precisely estimated with the empirical distribution, then one might focus on 
 $\Phi(Q_{W,n},\bar{Q}_n)-\Phi(Q_{W,n},\bar{Q}_0)$. However, this could result in lack of pathwise differentiability (even) of projection parameters, due to the empirical mean over $W$, even though at the true $Q_{W,0}$ these would be pathwise differentiable. In that case, we might still focus on the parameter $\bar{Q}\rightarrow \Phi(Q_{W,0},\bar{Q})$ and one might note that the canonical gradient of a pathwise derivative of this functional of $\bar{Q}$ would typically show no dependence on $Q_{W,0}$ so that it does not cause any complications in the development of a TMLE. 

So far we conclude it suffices to focus on analyzing $\Phi(\bar{Q})$ for a known $\Phi$ and a single functional parameter $\bar{Q}$. 

{\bf How to reduce analysis of non-linear functional to a linear functional of $\bar{Q}_P$:}
We now want to argue that we can focus on analyzing functionals $\Phi$ that are {\em linear} in $\bar{Q}$. 
We have
\[
\begin{array}{l}
\Phi(\bar{Q}_n)-\Phi(\bar{Q}_0)\approx \frac{\partial}{\partial\bar{Q}_0}\Phi(\bar{Q}_0)(\bar{Q}_n-\bar{Q}_0)\\
= \dot{\Phi}_0(\bar{Q}_n-\bar{Q}_0)\\
\approx \dot{\Phi}_{\bar{Q}_n}(\bar{Q}_n-\bar{Q}_0),
\end{array}
\]
again up till a second order difference in $(\bar{Q}_n-\bar{Q}_0)$.
Thus, we can define the linear functional $\dot{\Phi}_{\bar{Q}_n}(\bar{Q}_P)$ as our parameter of interest. 
Therefore, we conclude that we could focus on analyzing a known linear functional $\Psi(P)=\Phi(\bar{Q})$ with $\Phi$ being linear mapping in $\bar{Q}$.

{\bf Handling non-variation independent functional parameters:}
Above we considered the case that $\Psi(P)$ is a function of two or more variation independent functional parameters $\bar{Q}_j(P)$. Can we handle the case that $\Phi(\bar{Q}_1,\bar{Q}_2(\bar{Q}_1))$ where the second parameter $\bar{Q}_2$ is implied by $\bar{Q}_1$ and for a given $\bar{Q}_1 $ it is a minimizer of a risk function. 
For example, this situation occurs when the target functional is defined in terms of a sequential regression such as
$E(Y_1\mid V)= E( E(Y\mid A=1,W)\mid V)$, where $\bar{Q}_1(1,W)=E_P(Y\mid A=1,W)$ and $\bar{Q}_2(\bar{Q}_1)=E_P(\bar{Q}_1(1,W)\mid V)$. 
We then want to consider plug-in estimators $\Phi(\bar{Q}_{1n},\bar{Q}_{2n}(\bar{Q}_{1n}))$, where $\bar{Q}_{1n}$ is, for example, an HAL-MLE or HAL-TMLE and, for the given estimator $\bar{Q}_{1n}$, $\bar{Q}_{2n}(\bar{Q}_{1n})$ is an HAL-MLE or HAL-TMLE of $\bar{Q}_{20}(\bar{Q}_{1n})$. 
We have \[
\begin{array}{l}
\Phi(\bar{Q}_{1n},\bar{Q}_{2n}\bar{Q}_{1n})-\Phi(\bar{Q}_{10},\bar{Q}_{20}\bar{Q}_{10})=
\Phi(\bar{Q}_{1n},\bar{Q}_{20}\bar{Q}_{10})-\Phi(\bar{Q}_{10},\bar{Q}_{20}\bar{Q}_{10})\\
+
\Phi(\bar{Q}_{1n},\bar{Q}_{2n}\bar{Q}_{1n})-\Phi(\bar{Q}_{1n},\bar{Q}_{20}\bar{Q}_{10}).
\end{array}
\]
The first difference is of the form $\Phi_0(\bar{Q}_{1n})-\Phi_0(\bar{Q}_{10})$ and can thus be analyzed accordingly as covered by above remarks. We proceed with the second term as follows: 
\[
\begin{array}{l}
=\Phi(\bar{Q}_{1n},\bar{Q}_{2n}\bar{Q}_{1n})-\Phi(\bar{Q}_{1n},\bar{Q}_{20}\bar{Q}_{1n})+
\Phi(\bar{Q}_{1n},\bar{Q}_{20}\bar{Q}_{1n})-\Phi(\bar{Q}_{1n},\bar{Q}_{20}\bar{Q}_{10}).
\end{array}
\]
The first term is again of the type $\Phi_{2n}(\bar{Q}_{2n})-\Phi_{2n}(\bar{Q}_{20})$ and is thus covered by above. 
 So it remains to understand last term.
This is just another particular functional $\Phi_{n,0}(\bar{Q}_{1n})-\Phi_{n,0}(\bar{Q}_{10})$ and thus represents an additional plug-in estimator to analyze for $\bar{Q}_{1n}$. At this point this functional would still depend on unknown $\bar{Q}_{20}$ but, as explained above, we can replace it by $\bar{Q}_{2n}$ introducing another second order term. So we conclude that the analysis of such a plug-in estimator involves the study of three separate terms each one involving a function of a functional parameter that is a minimizer of a risk function:
\[
\begin{array}{l}
\Phi(\bar{Q}_{1n},\bar{Q}_{2n}\bar{Q}_{1n})-\Phi(\bar{Q}_{10},\bar{Q}_{20}\bar{Q}_{10})\approx
\left\{ \Phi(\bar{Q}_{1n},\bar{Q}_{2n}\bar{Q}_{1n})-\Phi(\bar{Q}_{10},\bar{Q}_{2n}\bar{Q}_{1n})\right\} \\
+
\left\{ \Phi(\bar{Q}_{1n},\bar{Q}_{2n}\bar{Q}_{1n})-\Phi(\bar{Q}_{1n},\bar{Q}_{20}\bar{Q}_{1n})\right \} +
\left\{ \Phi(\bar{Q}_{1n},\bar{Q}_{2n}\bar{Q}_{1n})-\Phi(\bar{Q}_{1n},\bar{Q}_{2n}\bar{Q}_{10})\right \} \\
\equiv \Phi_{1n}(\bar{Q}_{1n})-\Phi_{1n}(\bar{Q}_{10})+\Phi_{2n}(\bar{Q}_{2n})-\Phi_{2n}(\bar{Q}_{20})\\
+\Phi_{3n}(\bar{Q}_{1n})-\Phi_{2n}(\bar{Q}_{10}).
\end{array}
\]
Again, as remarked above, we can replace the possible non-linear functionals in linear functionals creating another second order difference.

 Overall, based on the above remarks, we conclude that an analysis of a plug-in estimator reduces to the analysis of linear functionals $\Psi(P)=\Phi(\bar{Q}_P)$ of a functional parameter $\bar{Q}_P=\arg\min_{\bar{Q}}PL(\bar{Q})$, introducing second order remainders that are easily converging to zero at a faster rate that $\bar{Q}_n$ converges to its target $\bar{Q}_0$ or $\bar{Q}_{{\cal R}(d),0}$. 


\newpage

\section{Computing canonical gradient of pathwise differentiable approximation of target function}
\label{app:canonicalgradient}

 One key ingredient of this TMLE for target functions is that we need to find the canonical gradient of $\alpha_{\lambda}:{\cal M}\rightarrow\openr^{n(\lambda)}$ defined by $\alpha_{\lambda,j}(P)=\langle \Phi(\bar{Q}_{\lambda,P}),\phi_{1,j}^*\rangle_{\psi}$, $j\in {\cal R}_1(\lambda)$. We can focus on finding this canonical gradient at a $P\in {\cal M}(\lambda)$. For log-likelihood behaving loss, we have that
 this canonical gradient equals the canonical gradient of $\alpha_{\lambda}:{\cal M}(\lambda)\rightarrow\openr^{n(\lambda)}$ that assumes the smaller model ${\cal M}(\lambda)$, or equivalently assumed that $\bar{Q}\in {\cal Q}(\lambda)$. Let's first consider the case that $D^{(k)}({\cal R}(d))=D^{(k)}([0,1]^d)$ so that ${\cal M}$ is a nonparametric model. 
 Let's denote $\alpha_{\phi}:{\cal M}\rightarrow\openr$ by $\alpha_{\phi}(P)=\langle \Phi(\bar{Q}_P),\phi\rangle_{\psi}$.
 Let $D^*_{\alpha_{\phi}(),P}$ be the canonical gradient at $P$ of $\alpha_{\phi}$ and not that this is linear in $\phi$.
 Note that $C_{\lambda}(P)(l)=\alpha_{\phi_{1,l}^*}(P)$, so that $D^*_{C_{\lambda}(),l,P}=(D^*_{\alpha_{\phi_{1,l}^*}(),P}$, $l\not\in {\cal R}_{1,\lambda})$. Let $\alpha_j(P)\equiv \langle \Phi(\bar{Q}_P),\phi_{1,j}^*\rangle_{\psi}$. Thus, $\alpha_j(P)=\alpha_{\phi_{1,j}^*}(P)$. 
 Therefore the canonical gradient of $\alpha_j$ at $P$ is given by $D^*_{\alpha_{\phi_{1,j}^*}(),P}$, $j\in {\cal R}_{1,\lambda}$.
 The canonical gradient of $\alpha_{\lambda,j}$ at $P$ is given by 
 \[
 D^*_{\alpha_{\lambda}(),j,P}=D^*_{\alpha_{\phi_{1,j}^*}(),P}-\Pi(D^*_{\alpha_{\phi_{1,j}^*},P}\mid H_{1,\lambda}),\]
 where $H_{1,\lambda}$ is the linear span of $D^*_{\phi_{1,l}^*,P}$, $l\not\in {\cal R}_{1,\lambda}$.
 We note that this can be represented as 
 \[
 D^*_{\alpha_{\lambda}(),j,P}=D^*_{\alpha_{\phi_{1,j}^*-\phi_{1,j}^{\perp}}(),P},\]
 where $\phi_{1,j}^{\perp}\in D^{(k)}({\cal R}_{1,\lambda}^c)$ is chosen so that $D^*_{\alpha_{\phi_{1,j}^{\perp}}(),P}$ equals the projection of $D^*_{\alpha_{\phi_{1,j}^*}(),P}$ onto $H_{1,\lambda}$. Note that $\phi_{1,j}^{\perp}$ is a linear combination $\phi_{1,l}^*$, $l\in {\cal R}_1(\lambda)^c$.
 
 So we can conclude that the only challenge in calculating the canonical gradient of $\alpha_{\lambda,j}:{\cal M}\rightarrow\openr$ is to determine
 the canonical gradient of $\alpha_{\phi}()$ at $P$ and one needs to carry out a projection onto $H_{1,\lambda}$.
 We can easily determine the canonical gradient $D^*_{\alpha_{\phi}(),P}$ of this in the nonparametric model (i.e., when $D^{(k)}({\cal R}(d))=D^{(k)}([0,1]^d)$: for example, for the DRC example and $\langle h_1,h_2\rangle_{\psi}=\int h_1(a)h_2(a) da$, we would have
$D^*_{\alpha_{\phi}(),P }=\phi(A)/g(A\mid W)(Y-\bar{Q}_P(A,W))$. 
Suppose now that $D^{(k)}({\cal R}(d))$ is strictly smaller than $D^{(k)}([0,1]^d)$. In that case we can obtain the canonical gradient of $D^*_{\alpha_{\phi}(),P}$ on this smaller model ${\cal M}(D^{(k)}({\cal R}(d))$ by projecting the one for the nonparametric model, say $D^*_{\alpha_{\phi}(),P,np}$, onto the tangent space of this actual model.
This tangent space is of the form $S_{h}(\bar{Q}_{\beta})=\frac{d}{d\bar{Q}}L(\bar{Q}_{\beta})(\sum_{m\in {\cal R}(d)}h(m)\phi_m)$ using paths $\beta+\delta h$. We also know that $D_{\alpha_{\phi}(),P}$ has similar form $D_{\alpha_{\phi}(),P}=\frac{d}{d\bar{Q}}L(\bar{Q})(\sum_{m}h_{np}^*(m)\phi_m)$ for a specified $h_{np}^*$ and where the sum is now over all spline basis functions spanning $D^{(k)}([0,1]^d)$. So we could write
$D_{\alpha_{\phi},P}=S_{h^*_{np}}(\bar{Q}_{\beta})$. 
Therefore, the projection of $D_{\alpha_{\phi}(),P,np}$ onto this tangent space typically corresponds with projecting $\sum_{m}h_{np}^*(m)\phi_m$ onto
$D^{(k)}({\cal R}(d))=\{\sum_{m\in {\cal R}(d)}h(m)\phi_m: h\}$ in some $L^2(\mu)$ Hilbert space. 
This projection is then represented by $D^*_{\alpha_{\phi}(),P}=S_{h^*}(\bar{Q}_{\beta})$ with $h^*\in D^{(k)}({\cal R}(d))$. 
A possible inverse weighting in $D_{\alpha_{\phi}(),P,np}$ would be stabilized in $D^*_{\alpha_{\phi}(),P}$ due to using a non-saturated submodel $D^{(k)}({\cal R}(d))$ of $D^{(k)}([0,1]^d)$. Determining the projection onto $H_{1,\lambda}$ is not any harder once we have determined these canonical gradients, so that, again, it is all about determining $D^*_{\alpha_{\phi}(),P}$.

In the next subsection we discuss a general approach for determining the canonical gradient of $D^*_{\alpha_{\phi}(),P}$ for general HAL-models $D^{(k)}({\cal R}(d))$.

We summarize this formula for the canonical gradient of $\alpha_{\lambda}:{\cal M}\rightarrow\openr^{n(\lambda)}$ (and thereby for our approximate target function $\Psi_{\lambda}(P)$) in the following lemma. 
\begin{lemma}
Define $\alpha_{\phi}:{\cal M}\rightarrow\openr$ by $\alpha_{\phi}(P)=\langle \Phi(\bar{Q}_P),\phi\rangle_{\psi}$.
 Let $D^*_{\alpha_{\phi}(),P}$ be the canonical gradient at $P$ of $\alpha_{\phi}$ and note that this is linear in $\phi$.
 Note that $C_{\lambda}(P)(l)=\alpha_{\phi_{1,l}^*}(P)$, so that $D^*_{C_{\lambda}(),l,P}=(D^*_{\alpha_{\phi_{1,l}^*}(),P}$, $l\not\in {\cal R}_{1,\lambda})$. Let $\alpha_j(P)\equiv \langle \Phi(\bar{Q}_P),\phi_{1,j}^*\rangle_{\psi}$. Thus, $\alpha_j(P)=\alpha_{\phi_{1,j}^*}(P)$. 
 Therefore the canonical gradient of $\alpha_j$ at $P$ is given by $D^*_{\alpha_{\phi_{1,j}^*}(),P}$, $j\in {\cal R}_{1,\lambda}$.
 The canonical gradient of $\alpha_{\lambda,j}$ at $P$ is given by 
 \[
 D^*_{\alpha_{\lambda}(),j,P}=D^*_{\alpha_{\phi_{1,j}^*}(),P}-\Pi(D^*_{\alpha_{\phi_{1,j}^*},P}\mid H_{1,\lambda}),\]
 where $H_{1,\lambda}$ is the linear span of $D^*_{\phi_{1,l}^*,P}$, $l\not\in {\cal R}_{1,\lambda}$ in $L^2_0(P)$.
 We note that this can be represented as 
 \[
 D^*_{\alpha_{\lambda}(),j,P}=D^*_{\alpha_{\phi_{1,j}^*-\phi_{1,j}^{\perp}}(),P},\]
 where $\phi_{1,j}^{\perp}\in D^{(k)}({\cal R}_{1,\lambda}^c)$ is chosen so that $D^*_{\alpha_{\phi_{1,j}^{\perp}} (),P}$ equals the projection of $D^*_{\alpha_{\phi_{1,j}^*}(),P}$ onto $H_{1,\lambda}$. Note that $\phi_{1,j}^{\perp}$ is a linear combination $\phi_{1,l}^*$, $l\in {\cal R}_1(\lambda)^c$.
 If $D^*_{\alpha_{\phi},P,np}$ is the canonical gradient for a nonparametric model ${\cal M}(D^{(k)}([0,1]^d)$ while ${\cal M}={\cal M}(D^{(k)}({\cal R}(d))$, then $D^*_{\alpha_{\phi}(),P}$ can be determined as the projection of latter on tangent space of model ${\cal M}(D^{(k)}({\cal R}(d))$.
 
 The canonical gradient of $P\rightarrow \Psi_{\lambda}(P)(x)$ is given by $D^*_{\Psi_{\lambda}(),x,P}=D^{*,\top}_{\alpha_{\lambda}(),P}{\bf \phi}_{1,\lambda}^*(x)$. 
\end{lemma}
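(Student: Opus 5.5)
The plan is to obtain the canonical gradient of $\alpha_{\lambda,j}$ by first treating the unrestricted inner-product functionals $\alpha_\phi$ and then adjusting for the fact that $\alpha_\lambda$ is a projection coefficient. Throughout, I read $\alpha_{\lambda,j}(P)$ as the $j$-th coordinate of the projection of $\Phi(\bar{Q}_P)$ onto $D^{(k)}({\cal R}_{1,\lambda})$, expressed in an orthonormal basis $\{\phi^*_{1,l}\}$ of the full space $D^{(k)}([0,1]^d)$. The submodel ${\cal M}(\lambda)$ is then characterised by the constraints $C_\lambda(P)(l)=\alpha_{\phi^*_{1,l}}(P)=0$ for $l\notin{\cal R}_{1,\lambda}$.

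The first step is to show that $\phi\mapsto\alpha_\phi(P)=\langle\Phi(\bar{Q}_P),\phi\rangle_\psi$ is linear in $\phi$. Because the pathwise derivative $\frac{d}{d\delta}\alpha_\phi(P_\delta)=\langle\dot\Phi_{\bar Q_P}(\dot{\bar Q}),\phi\rangle_\psi$ is linear in $\phi$, the Riesz representer $D^*_{\alpha_\phi(),P}$ in the closure of the tangent space is linear in $\phi$ as well. It follows immediately that $\alpha_j=\alpha_{\phi^*_{1,j}}$ has canonical gradient $D^*_{\alpha_{\phi^*_{1,j}}(),P}$, and that $C_\lambda$ has gradient components $D^*_{\alpha_{\phi^*_{1,l}}(),P}$ for $l\notin{\cal R}_{1,\lambda}$. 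These are exactly the spanning elements of $H_{1,\lambda}$.

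The second step is the projection formula. I will use the standard fact that, for a smooth functional restricted to a constrained submodel $\{C_\lambda=0\}$, the canonical gradient equals the unconstrained gradient minus its $L^2_0(P)$-projection onto the span of the gradients of the constraints. Equivalently, the tangent space of ${\cal M}(\lambda)$ is the orthogonal complement of $H_{1,\lambda}$, intersected with $T_{\cal M}(P)$. Since $D^*_{\alpha_{\phi^*_{1,j}}(),P}-\Pi(\cdot\mid H_{1,\lambda})$ is a gradient, because it differs from one by an element orthogonal to the submodel's tangent space, and it lies in that tangent space, it is the canonical gradient on ${\cal M}(\lambda)$. The step I expect to be the main obstacle is transferring this from ${\cal M}(\lambda)$ back to ${\cal M}$ at $P\in{\cal M}(\lambda)$. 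Here I would use the paper's remark that for log-likelihood-type losses the projection coefficient's derivative along any path in ${\cal M}$ equals its derivative along the loss-based projected path in ${\cal M}(\lambda)$. This is the implicit-function argument already used in Lemma~\ref{lem:canonical gradient}, now with the normal equations of the projection: differentiating $\langle\Phi(\bar Q_P)-\alpha^\top\phi^*_{1,\lambda},\phi^*_{1,j}\rangle_\psi=0$ removes exactly the $H_{1,\lambda}$ components. If that route turns out to be delicate, I would argue directly via the normal equations in the orthonormal basis, which gives the same formula.

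Next, by linearity, the projection $\Pi(D^*_{\alpha_{\phi^*_{1,j}},P}\mid H_{1,\lambda})$ is a finite or closed linear combination $\sum_{l\notin{\cal R}_{1,\lambda}}c_l D^*_{\alpha_{\phi^*_{1,l}}(),P}=D^*_{\alpha_{\phi^\perp_{1,j}}(),P}$, with $\phi^\perp_{1,j}=\sum_l c_l\phi^*_{1,l}$. This yields the representation $D^*_{\alpha_{\phi^*_{1,j}-\phi^\perp_{1,j}}(),P}$. The statement about the smaller model ${\cal M}(D^{(k)}({\cal R}(d)))$ is the standard fact that the canonical gradient in a submodel is the projection of any gradient onto its tangent space $T_{\cal M}(P)$. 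Finally, $\Psi_\lambda(P)(x)=\alpha_\lambda(P)^\top\phi^*_{1,\lambda}(x)$ is linear in $\alpha_\lambda$ with fixed coefficients, so the delta method gives $D^*_{\Psi_\lambda(),x,P}=D^{*,\top}_{\alpha_\lambda(),P}\phi^*_{1,\lambda}(x)$.
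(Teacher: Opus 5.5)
Your plan fails at the transfer step, and the cause is how you read $\alpha_{\lambda,j}$. You take $\alpha_{\lambda,j}(P)$ to be the $\langle\cdot,\cdot\rangle_\psi$-projection coefficient of $\Phi(\bar Q_P)$ onto $D^{(k)}({\cal R}_{1,\lambda})$, that is, a projection in target-function space. In a $\psi$-orthonormal basis this coefficient equals $\langle\Phi(\bar Q_P),\phi^*_{1,j}\rangle_\psi=\alpha_j(P)$ for every $P\in{\cal M}$. By your own first step, its canonical gradient on ${\cal M}$ is then $D^*_{\alpha_{\phi^*_{1,j}}(),P}$, with no correction.

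The implicit-function argument of Lemma~\ref{lem:canonical gradient} gives the same conclusion. Differentiating $\langle\Phi(\bar Q_P)-\alpha^\top{\bf \phi}^*_{1,\lambda},\phi^*_{1,j}\rangle_\psi=0$ yields $\dot\alpha_j=\langle\dot\Phi_{\bar Q_P}(\dot{\bar Q}),\phi^*_{1,j}\rangle_\psi$, which removes nothing. The reason is that $\psi$-orthogonality of $\phi^*_{1,j}$ and $\phi^*_{1,l}$ does not make $D^*_{\alpha_{\phi^*_{1,j}}(),P}$ and $D^*_{\alpha_{\phi^*_{1,l}}(),P}$ orthogonal in $L^2_0(P)$. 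In this appendix's DRC example, with $\langle h_1,h_2\rangle_\psi=\int h_1h_2\,da$ and $D^*_{\alpha_\phi(),P}=\phi(A)(Y-\bar Q_P(A,W))/g(A\mid W)$, one gets $P\{D^*_{\alpha_{\phi^*_{1,j}}(),P}D^*_{\alpha_{\phi^*_{1,l}}(),P}\}=\int\phi^*_{1,j}\phi^*_{1,l}(a)\,E_W\{\sigma^2(a,W)/g(a\mid W)\}\,da$. Here $\sigma^2(a,W)$ is the conditional variance of $Y$, and the integral is generally nonzero unless this weight is constant in $a$.

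So take a path in ${\cal M}$ with score $S=D^*_{\alpha_{\phi^*_{1,l}}(),P}\in H_{1,\lambda}$, $l\notin{\cal R}_{1,\lambda}$. Your $\alpha_{\lambda,j}$ moves along it at rate $P\{D^*_{\alpha_{\phi^*_{1,j}}(),P}S\}\neq 0$. Your candidate gradient is orthogonal to $H_{1,\lambda}$, so it predicts rate $0$ and is not a gradient on ${\cal M}$. Neither the transfer step nor the ``direct normal equations'' fallback can rescue this. Under your reading, the formula is only the canonical gradient of $\alpha_j$ viewed as a parameter on the submodel ${\cal M}(\lambda)$. The statement itself signals the problem: it gives $\alpha_j$ and $\alpha_{\lambda,j}$ different canonical gradients, which is impossible if they are the same map on ${\cal M}$.

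The missing idea is that in this appendix $\alpha_{\lambda,j}$ is a projection in model space. It is defined as $\alpha_{\lambda,j}(P)=\langle\Phi(\bar Q_{\lambda,P}),\phi^*_{1,j}\rangle_\psi$, where $\bar Q_{\lambda,P}=\arg\min_{\bar Q\in{\cal Q}(\lambda)}PL(\bar Q)$ is the loss-based projection onto ${\cal M}(\lambda)=\{P: C_\lambda(P)=0\}$; equivalently $\alpha_{\lambda,j}=\alpha_j\circ\Pi_{{\cal M}(\lambda)}$. With this definition, the reduction the paper invokes is valid. Take $P\in{\cal M}(\lambda)$ and a log-likelihood loss, whose risk Hessian is the information. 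Then the projected path $\Pi_{{\cal M}(\lambda)}P_\epsilon$ has score $\Pi(S\mid T_{{\cal M}(\lambda)}(P))$. Hence $\frac{d}{d\epsilon}\alpha_{\lambda,j}(P_\epsilon)$ at $\epsilon=0$ equals $P\{D^*_{j,{\cal M}(\lambda)}\Pi(S\mid T_{{\cal M}(\lambda)}(P))\}=P\{D^*_{j,{\cal M}(\lambda)}S\}$, where $D^*_{j,{\cal M}(\lambda)}\in T_{{\cal M}(\lambda)}(P)$ is the canonical gradient of $\alpha_j$ in ${\cal M}(\lambda)$.

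Your constrained-submodel step, with $T_{{\cal M}(\lambda)}(P)=T_{\cal M}(P)\cap H_{1,\lambda}^{\perp}$, then identifies $D^*_{j,{\cal M}(\lambda)}$ as $D^*_{\alpha_{\phi^*_{1,j}}(),P}-\Pi(D^*_{\alpha_{\phi^*_{1,j}}(),P}\mid H_{1,\lambda})$. Once you use this definition, the rest of your outline matches the paper's argument: linearity in $\phi$, the $\phi^\perp_{1,j}$ representation, projecting the nonparametric gradient onto the HAL tangent space, and the delta method for $\Psi_\lambda(P)(x)$.
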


\subsection{General approach for determining canonical gradient of pathwise differentiable parameters for HAL-models $D^{(k)}({\cal R}(d))$.}
We refer to Chapter 8 of \citep{vanderLaan&Rose18} about computing projections of initial gradients onto tangent space of HAL-models $D^{(k)}({\cal R}(d))$.

If one knows the canonical gradient $D^*_{\phi,P,np}$ in a nonparametric model ${\cal M}(D^{(k)}([0,1]^d))$, then one can focus on projecting it on the $L$-tangent space to obtain the canonical gradient $D^*_{\phi,P}$ for the actual model ${\cal M}(D^{(k)}({\cal R}(d))$. Let's first consider this case that $D^*_{\phi,P,np}$ is known. Let $D^0_{\beta}$ be such an initial gradient at $\bar{Q}_{\beta}$ only suppressing its dependence on $\bar{Q}$, since the other nuisance parameters are typically not updated so that we can view them as fixed. 
The tangent space for model $D^{(k)}({\cal R}(d))$ consists of $S_h(\beta)\equiv \frac{d}{d\bar{Q}_{\beta}}L(\bar{Q}_{\beta})(\sum_{j\in {\cal R}(d)}h(j)\phi_j)$ with $h$ varying over all vectors $h\in {\cal R}^N$ with $N=\mid {\cal R}(d)\mid$. 
One can then define 
\[
h^*_{\beta}\equiv\arg\min_h P\left( D^0_{\beta}(O)-S_h(\beta)(O)\right)^2.\]
Note that $S_h(\beta)=\sum_{j\in {\cal R}(d)} h(j) S_{e_j}(\beta)$, $e_{j_0}$ is the unit vector that is one can the $j_0$-th component so that 
$\sum_{j\in {\cal R}(d)}e_{j_0}(j)\phi_j=\phi_{j_0}$.
Thus, determining $h^*_{\beta}$ is reduced to a least squares regression problem:
\[
h^*_{\beta}\equiv \arg\min_h E_P\left( D^0_{\beta}(O)-\sum_{j\in {\cal R}(d)} h(j) S_{e_j}(\beta)(O)\right)^2.\]
$\bar{Q}_{\beta}$ does not fully determine the data distribution so that it remains to specify $P$. Therefore, one might replace $P$ by an estimator of $P_{\beta,n}$ of $P_0$ compatible with $\bar{Q}_P=\bar{Q}_{\beta}$.
Presumably this might be a continuous measure, so that the expectation $E_P$ is an integral. One could now decide to draw a large sample of $B$ observations from $P_{\beta,n}$ and replace the expectation by its empirical mean:
\[
h^*_{\beta,n}=\arg\min_h \frac{1}{B}\sum_{b=1}^B\left( D^0_{\beta}(O_b)-\sum_{j\in {\cal R}(d)}h(j) S_{e_j}(\beta)(O_b)\right)^2.\]
To approximate this solution one might use Lasso regression:
\[
h^*_{\beta,n}=\arg\min_{h,\pl h\pl_1<C_n} \frac{1}{B}\sum_{b=1}^B\left( D^0_{\beta}(O_b)-\sum_{j\in {\cal R}(d)}h(j) S_{e_j}(\beta)(O_b)\right)^2,\]
where one can select $C_n$ with cross-validation applied to this sample of $B$ observations. 
One might also estimate the canonical gradient with $h^*_{\beta,n}$ defined as follows:
\[
h^*_{\beta,n}\equiv \arg\min_ {h,\pl h\pl_1<C_n} P_n \left( D^0_{\beta}(O)-\sum_{j\in {\cal R}(d)} h(j) S_{e_j}(\beta)(O)\right)^2.\]
The latter is a particularly appealing method since it avoids having to simulate from the distribution $B$, while if $\hat{P}$ is a consistent estimator of $P_0$, this will still be a consistent estimator of the canonical gradient at $P$, and notice that it is still an element of the tangent space at $\bar{Q}_{\beta}$.

However, there are various problems where an initial gradient is not easily available. 
So let's also consider a method for computing the canonical gradient $D^*_{\phi,P}$ for model ${\cal M}(D^{(k)}({\cal R}(d))$ that does not rely on first computing an initial gradient for $\alpha_{\phi}:{\cal M}\rightarrow\openr$ at $P$. 
For the sake of demonstration, let's consider the DRC example for the model $D^{(k)}({\cal R}_{\lambda}(d))$ for $\bar{Q}_0$.
We could define a score operator $A_{\beta}: {\cal H}_{\beta}^F\rightarrow L^2_0(P)$ defined by 
$A_{\beta}(h)=\sum_{j\in {\cal R}_{\lambda}(d)}h(j)\phi_j(A,W) (Y-\bar{Q}_{\beta}(A,W))$. We can then put an inner product on ${\cal H}_{\beta}^F=\openr^{N}$ with $N=\mid {\cal R}_{\lambda}(d)\mid$, such as $\langle h_1,h_2\rangle^F=\sum_{j\in {\cal R}_{\lambda}(d)}h_1(j)h_2(j)$
Suppose that we aim to establish the canonical gradient of $P\rightarrow \Psi^F(\bar{Q}_{\beta_{\lambda}(P)})$ on ${\cal M}$, where this is equivalent with $P\rightarrow \Psi^F(\bar{Q}_{\lambda,P})$. Then, we can first determine the pathwise derivative
$\frac{d}{d\delta_0}\Psi^F(\bar{Q}_{\beta_{\lambda}+\delta_0 h})$ as a linear operator in $h\in H^F_{\lambda}=\{h(j): j\in {\cal R}_{\lambda}(d)\}$. By the Riesz-representation theorem we can write it as
$\langle D^F_{\beta_{\lambda}},h\rangle^F$ for some $D^F_{\beta_{\lambda}}\in {\cal H}^F_{\lambda}$.
For example, in the case that $\Psi^F(\bar{Q})(a)=E_W \bar{Q}(a,W)$, we have
\[
\begin{array}{l}
\frac{d}{d\delta_0}\Psi^F(\bar{Q}_{\beta_{\lambda}+\delta_0 h})=E_W \sum_{j\in {\cal R}_{\lambda}(d)}h(j)\phi_j(a,W)\\
=\sum_{j\in {\cal R}_{\lambda}(d)} h(j)E_W\phi_j(a,W)\\
=\langle D^F_a,h\rangle^F,
\end{array}
\]
where $D^F_a=(E_W\phi_j(a,W): j\in {\cal R}_{\lambda}(d))$.
Let's now determine the adjoint $A^{\top}_{\beta_{\lambda}}: L^2_0(P)\rightarrow {\cal H}_{\lambda}^F$. We have
\begin{eqnarray*}
\langle A_{\beta_{\lambda}}(h),S\rangle_P&=&E_P \sum_{j\in {\cal R}_{\lambda}(d)}h(j)\phi_j(Y-\bar{Q}_{\beta_{\lambda}}) S(W,A,Y)\\
&=&\sum_{j\in {\cal R}_{\lambda}(d)}h(j) \left\{ E_P \phi_j(A,W)(Y-\bar{Q}_{\beta_{\lambda}}(A,W)) S(W,A,Y)\right\}\\
&\equiv& \langle h, A^{\top}_{\beta_{\lambda}}(S)\rangle^F.
\end{eqnarray*}
Thus, the adjoint is given by:
\[
A^{\top}_{\beta_{\lambda}}(S)= (E_P\{\phi_j(A,W)(Y-\bar{Q}_{\beta_{\lambda}}(A,W)) S(W,A,Y)\}: j\in {\cal R}_{\lambda}(d)).\]
The information operator is then defined by $I_{\beta_{\lambda}}=A_{\beta_{\lambda}}^{\top}A_{\beta_{\lambda} }: {\cal H}^F_{\lambda}\rightarrow{\cal H}^F_{\lambda}$:
in the DRC example, it is given by
\[
\begin{array}{l}
I_{\beta_{\lambda}}(h)=\left( E_P\{\phi_j(A,W)(Y-\bar{Q}_{\beta_{\lambda}}(A,W)) \sum_{m\in {\cal R}_{\lambda}(d)}h(m)\phi_m(A,W)(Y-\bar{Q}_{\beta_{\lambda}})\}: j\in {\cal R}_{\lambda}(d)\right) \\
=\left( \sum_{m\in {\cal R}_{\lambda}(d)} h(m) E_P\left\{\phi_j\phi_m(A,W)(Y-\bar{Q}_{\beta_{\lambda}}(A,W))^2\right\}:j\in {\cal R}_{\lambda}(d)\right)\\
=\left( \sum_{m\in {\cal R}_{\lambda}(d)} h(m) E_P\phi_j\phi_m\sigma^2_{\beta_{\lambda}}: j\in {\cal R}_{\lambda}(d)\right),
\end{array}
\]
where $\sigma^2_{\beta}(A,W)= E((Y-\bar{Q}_{\beta_{\lambda}}(A,W))^2\mid A,W)$.
Thus, $I_{\beta_{\lambda}}$ can be represented by a matrix $I_{\beta_{\lambda}}(j,m)=E_P\phi_j\phi_m\sigma^2_{\beta_{\lambda}}$. 
The canonical gradient is given by $D^*_{\beta_{\lambda}}=A_{\beta_{\lambda}}I_{\beta_{\lambda}}^{-1} D^F_{\beta_{\lambda}}$, where $D^F_{\beta_{\lambda}}$ represents the canonical gradient of the pathwise derivative of $\beta\rightarrow \Phi(\bar{Q}_{\beta})$ as a linear operator on $H_{\lambda}^F$.
So this representation requires solving $I_{\beta_{\lambda}}(h)=D^F_{\beta_{\lambda},a}$. This corresponds with, 
\[
\sum_{m\in {\cal R}(d)} E_P\{\phi_j\phi_m\sigma^2_{\beta_{\lambda}}\} h(m)=E_W\phi_j(a,W)\mbox{ $j\in {\cal R}_{\lambda}(d)$}.\]
If ${\cal R}_{\lambda}(d)$ is a large set, then the inversion of the $N\times N$-matrix $I_{\beta_{\lambda}}$ might be computationally challenging and unstable. 
Therefore, one might approximate this vector $h=I_{\beta_{\lambda}}^{-1}D^F_{\beta_{\lambda},a}$ by defining $Y(j)\equiv E_W\phi_j(a,W)$, 
$X_j(m)=E_P\phi_j\phi_m(Y-\bar{Q}_{\beta_{\lambda}})^2$, and fit a least squares LASSO regression estimator
$\tilde{h}=\arg\min_{h,\pl h\pl_1<C} \sum_{j\in {\cal R}_{\lambda}(d)}( Y(j)-\sum_{m\in {\cal R}_{\lambda}(d)}h(m)X_j(m))^2$ using $L_1$-regularization. In this way we can use Lasso regression based on $N=\mid {\cal R}_{\lambda}(d)\mid $ observations to approximate the solution $h^*_{\beta_{\lambda}}=I_{\beta_{\lambda}}^{-1}D^F_{\beta}$. Note that $E_W\phi_j(a,W)$ would be estimated with empirical mean, and we could also estimate $E_P\phi_j\phi_m\sigma^2_{\beta_{\lambda}}$ with an empirical mean $P_n \phi_j\phi_m\sigma^2_{\beta_{\lambda}}$. 
The canonical gradient is then given by $D^*_{\beta_{\lambda}}(O)=\sum_{j\in {\cal R}_{\lambda}(d)}h^*_{\beta_{\lambda}}(j)\phi_j(Y-\bar{Q}_{\beta_{\lambda}}(A,W))$.


\section{Example: Causal effect of binary treatment on survival, conditional on covariates}\label{Chtargetfunction6}
We observe $O=(V,W,A,\tilde{T},\Delta)$, where $V\in [0,1]^{d_1}$ is a $d_1$-dimensional vector of baseline covariates whose treatment effect modification we aim to understand, while $W$ represents an additional vector of measured confounders, and $A$ is the binary treatment of interest. Here $\tilde{T}$ is the follow up time and $\Delta$ is an indicator if this time represents the time to failure of interest or that it just represents a censoring time. In terms of underlying full data $X=(V,W,A,C,T)$ we have $\tilde{T}=\min(T,C)$ and $\Delta=I(T\leq C)$. We assume that $C$ satisfies coarsening at random (CAR): $\lambda_C(t\mid W,A,T)=\lambda_C(t\mid W,A)$ at $t<T$.
Let the model ${\cal M}$ be nonparametric beyond models on $g_A=p_{A\mid W}$ and $\lambda_C$.
The density $p$ can be parametrized in terms of $(Q_W,g_A,\lambda_T,\lambda_C)$, where the conditional hazard $\lambda_{T}$ minimizes the log-likelihood risk under $P$ given by
\[
L(\lambda_T)=-E_P \log \prod_{s< \tilde{T}}(1-\Lambda_T(ds\mid A,V,W))\lambda_T(\tilde{T}\mid A,V,W)^{\Delta}.\]
We can also view the data structure as a longitudinal data structure over time $t$.
Let $N(t)=I(\tilde{T}\leq t,\Delta=1)$ and $C(t)=I(\tilde{T}\leq t,\Delta =0)$, and consider the ordering
$(W,A,(N(t),C(t): t\geq 0))$. Let $F(t)=(W,A,\bar{N}(t),\bar{C}(t))$. Then, due to CAR, $E(dN(t)\mid F(t-))=I(\tilde{T}\geq t)\lambda_T(t\mid V,W,A)$. 
The negative log-likelihood $L(\lambda_T)$ can also be represented as:
\[
L(\lambda_T)=-\int_s \left\{dN(s)\log \lambda(s\mid V,W,A)+(1-dN(s))\log(1-\lambda(s\mid V,W,A))\right\}.
\]
Similarly, we define the negative log-likelihood for $\lambda_C$ as $L(\lambda_C)$. 
The above applies to discrete and continuous survival times. Let's consider the case that $T$ is continuous or finely discrete so that 
we can model $\lambda_T(t\mid V,W,A)=\exp(Q(t,V,W,A))$ with $Q\in D^{(k)}(0,1]^d)$. We will also use notation $\lambda_Q$. 
This model for the conditional failure time hazard now also defines a statistical functional parameter $Q:{\cal M}\rightarrow D^{(k)}([0,1]^d)$ defined by $\lambda_P=\lambda_{Q(P)}$.
 Let $L(Q)=L(\lambda_{Q})$ be the log-likelihood risk defined above.
Let $S_Q(t\mid A,V,W)=\prod_{s\leq t}(1-\Lambda_Q(ds\mid A,V,W))$ be the conditional survival function of $T$ at $t$, given $A,V,W$, under $\lambda_Q$.
We define our target function of interest as $\Psi(P)(v)=E_P(S_Q(t_0\mid A=1,V,W)-S_Q(t_0\mid A=0,V,W)\mid V=v)$.
So $\Psi(P)=\Psi^F(Q(P))$ for a specified $\Psi^F$.
Under a structural causal model so that $T=T_A$ with $T_0,T_1$ being potential outcomes under static treatments $A=0$ and $A=1$, and the assumption that $A$ is independent of $(T_0,T_1)$, given $(V,W)$, 
this statistical target function represents $S_1(t_0\mid V)-S_0(t_0\mid V)$, the additive causal effect of the binary treatment on survival at $t_0$, conditional on $V$. 

{\bf Target function projection approach:} We define a parametric working model $\sum_{j\in {\cal R}_{1,\lambda}}\alpha(j)\phi_{1,j}(V)$ for $\Psi(P)(V)$; determine an orthonormal basis $\phi_{1,j}^*$, $j\in {\cal R}_{1,\lambda}$, w.r.t. an inner product $\langle \cdot,\cdot\rangle_{\psi}$ such as $\int_{[0,1]^{d_1}} h_1(v)h_2(v) dv$. A particular inner product is given by $\langle h_1,h_2\rangle_{\psi}=E_P h_1(V)h_2(V)$, where for the sake of parameter definitions we still treat this inner product as given/fixed, but it means that in practice we have to replace this inner product by its empirical analogue $P_n h_1h_2$. 
 We then define the pathwise differentiable approximation \[
 \Psi_{\lambda}(P)=\Psi_{\lambda}(Q_{W\mid V},Q)=\sum_{j\in {\cal R}_{1,\lambda}}\alpha_{\lambda,P}(j)\phi_{1,j}^*,\] where $\alpha_{\lambda,P}(j)\equiv \langle \Psi(P),\phi_{1,j}^*\rangle_{\psi}$.
 Thus, $\Psi_{\lambda}(P)$ depends on $P$ through the functional $Q(P)$ that defines the conditional failure time hazard, and the conditional distribution of $W$, given $V$. Of course, we could also have defined it as 
 \[
 \Psi_{\Lambda}(P)=\arg\min{\psi\in D^{(k_1)}({\cal R}_{1,\lambda})}\left\lvert \Psi(P)-\psi\right\rvert_{\psi}, 
 \]
 with no need for an orthonormal basis representation. 
 
 We can determine the canonical gradient $D^*_{\alpha_{\lambda}(),P}=(D^*_{\phi_{1,j}^*,P}: j\in {\cal R}_{1,\lambda})$ of $\alpha_{\lambda}()$, which just comes down to finding canonical gradient $D^*_{\phi,P}$ of $\alpha_{\phi}(P)=\langle \Psi(P),\phi\rangle_{\psi}$ for a given function $\phi \in D^{(k_1)}([0,1]^{d_1})$ (i.e., a function $\phi(V)$). 
 
 \begin{lemma}
 Consider the parameter $\alpha_{\phi}(P)=\langle \Psi(P),\phi\rangle_{\psi}$ so that
 $\Psi_{\lambda}(P)=\sum_{j\in {\cal R}_{1,\lambda}}\alpha_{\phi_{1,j}^*}(P)\phi_{1,j}^*$. Therefore the canonical gradient of $\alpha_{\phi}()$ is determined by the canonical gradient of $\alpha_{\phi}()$.
 We have 
 \[
 D^*_{\phi,P}=D^*_{\phi,P_{W\mid V},P}+D^*_{\phi,\Lambda_T(),P},\]
 where
 \begin{eqnarray*}
D^*_{\phi,P_{W\mid V},P}&=&\phi(V)\{S(t_0\mid A=1,V,W)-S(t_0\mid A=0,V,W)\}-\alpha_{\phi}(P)\\
 D^*_{\phi,\Lambda_T(),P}&=&\int H_{t_0,P}(s,A,V,W)\phi(V) (dN(s)-\Lambda_T(ds\mid A,V,W)),\end{eqnarray*}
and
\[
 H_{t_0,P}(s,A,V,W)= \frac{2A-1}{g_A(A\mid V, W)} I(s<t_0)\frac{1}{\bar{G}_C(s-\mid A,V,W)}\frac{S(t_0\mid A,V,W)}{S(s\mid A,V,W)}.
 \]
 \end{lemma}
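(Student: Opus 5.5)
I will derive the canonical gradient of $\alpha_{\phi}(P)=\langle \Psi(P),\phi\rangle_{\psi}$ by the same two-step route as in Lemma~\ref{lem:canonical gradient}. First I factorize the likelihood into $Q_V$, $Q_{W\mid V}$, $g_A$, $\Lambda_T$ and $\Lambda_C$, and take the inner product $\langle h_1,h_2\rangle_\psi=E_P h_1(V)h_2(V)$ as fixed. Then
\[
\alpha_\phi(P)=E_P\bigl[\phi(V)\{S(t_0\mid 1,V,W)-S(t_0\mid 0,V,W)\}\bigr],
\]
which is a mean over the joint law of $(V,W)$ of a function of $\Lambda_T$. Since the model is nonparametric apart from restrictions on $g_A$ and $\lambda_C$, the tangent space decomposes orthogonally into the $(V,W)$-scores, the $g_A$-scores, the $\Lambda_T$-scores (martingale integrals $\int h(s,A,V,W)\,dM(s)$ with $dM=dN-I(\tilde T\ge s)\Lambda_T(ds\mid A,V,W)$), and the $\Lambda_C$-scores. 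Because $\alpha_\phi$ does not depend on $g_A$ or $\Lambda_C$, the canonical gradient is the sum of a $(V,W)$-component and a $\Lambda_T$-component.

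For the $(V,W)$-component, I perturb the marginal of $(V,W)$ along a score $s(V,W)$ with $\Lambda_T$ held fixed. The derivative is $E_P[\phi(V)\{S_1-S_0\}(V,W)\,s(V,W)]$, so the gradient is the centered function $\phi(V)\{S(t_0\mid 1,V,W)-S(t_0\mid 0,V,W)\}-\alpha_\phi(P)$. This step is routine.

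For the $\Lambda_T$-component, I fix $(V,W)$ and $a\in\{0,1\}$ and use the standard Duhamel/product-integral derivative
\[
\frac{d}{d\delta}S_\delta(t_0\mid a,v,w)=-S(t_0\mid a,v,w)\int_0^{t_0} h(s,a,v,w)\,\Lambda_T(ds\mid a,v,w)
\]
along the hazard path $d\Lambda_\delta=(1+\delta h)\,d\Lambda_T$. I then express this as a covariance with a martingale score. The weight I use is $(2A-1)/g_A(A\mid V,W)$ to select the arm $a$ and reweight to the $(V,W)$ marginal. For the censoring I use $I(\tilde T\ge s)/\{S(s-\mid A,V,W)\bar G_C(s-\mid A,V,W)\}$ to undo the at-risk probability, whose factorization relies on CAR.

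The expected candidate is the familiar $-\int_0^{t_0}\frac{S(t_0)}{S(s)\bar G_C(s-)}\,dM(s)$ form, multiplied by $\phi(V)(2A-1)/g_A$. I need to check that the sign and the distinction between $S(s)$ and $S(s-)$ match the displayed $H_{t_0,P}$. The sign of the displayed expression appears to be absorbed into the convention, so I will verify it via $E[\cdot\,\mid A,V,W]$ with the predictable covariation. I also read $\Lambda_T(ds\mid\cdot)$ in the display as the compensator $I(\tilde T\ge s)\Lambda_T(ds)$, since otherwise the term is not mean zero.

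The main obstacle is this identification step. I need to show that $E_P\bigl[\int H\phi\,dM\cdot\int h\,dM\bigr]$ reproduces the pathwise derivative exactly, which requires the predictable variation $I(\tilde T\ge s)\,d\Lambda_T(1-\Delta\Lambda_T)$, and also the discrete-time case. I would first do the computation for continuous $\Lambda_T$ and then treat ties by the usual $1-\Delta\Lambda$ correction. Finally, since the $\Lambda_T$-component is already a $\Lambda_T$-score and the $(V,W)$-component is a $(V,W)$-score, both lie in the tangent space, so their sum is the canonical gradient and no further projection is needed.
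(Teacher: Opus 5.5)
Your argument is sound, but it follows a different route from the paper. The paper never differentiates along hazard paths. It discretizes $(V,W)$ and takes the gradient of $S(t_0\mid a,v,w)$ to be $I(V=v,W=w)/P(v,w)$ times the efficient influence curve of a treatment-specific survival in the stratum $(v,w)$. It integrates this against $\phi(v)\,dP(w\mid v)\,dP(v)$ to get $\phi(V)$ times the $\Lambda_T$-part of the influence curve of $S_1(t_0)-S_0(t_0)$. It then imports $H_{a,t_0,P}$ ``from the literature''. The $P_V$ and $P_{W\mid V}$ parts come from the same delta-method heuristic and are merged at the end. You instead verify the Riesz representation directly, using the orthogonal tangent-space decomposition under CAR, Duhamel, and the predictable covariation of $M$, and you perturb the $(V,W)$ law jointly.

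The paper's route is short and makes the implementation-relevant point at once: the clever covariate is that of $S_1(t_0)-S_0(t_0)$ multiplied by $\phi(V)$. However, it leans on a discretization limit and an external result. Yours is self-contained and settles the details the display glosses over:
\begin{itemize}
\item the compensator must carry $I(\tilde T\ge s)$;
\item in the discrete case the score of the path $(1+\delta h)\,d\Lambda_T$ is $\int h(1-\Delta\Lambda_T)^{-1}\,dM$, and with that pairing the denominator $S(s)\bar G_C(s-)$, not $S(s-)$, comes out exactly;
\item the indicator must be $I(s\le t_0)$ rather than $I(s<t_0)$ if $\Lambda_T$ has an atom at $t_0$.
\end{itemize}

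There are two things to correct when you write it up.

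First, the sign is not a convention: your computation gives $D^*_{\phi,\Lambda_T(),P}=-\int H_{t_0,P}\,\phi(V)\,dM$. To check, take continuous $\Lambda_T$, $h=I(A=1)$ and $\phi\ge 0$. The pathwise derivative of $\alpha_\phi$ is $-E[\phi(V)S(t_0\mid 1,V,W)\Lambda_T(t_0\mid 1,V,W)]$. The covariance of the displayed plus-sign term with the score $\int I(A=1)\,dM$ is the same quantity with a plus sign. So the displayed $H_{t_0,P}$, and the $H_{a,t_0,P}$ the paper's proof takes from the literature, lack the leading minus carried by the standard survival clever covariate. Record this discrepancy rather than absorbing it.

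Second, you say the inner product $E_P h_1(V)h_2(V)$ is held fixed, yet you perturb the law of $V$. That perturbation is exactly what produces the centering at $\alpha_\phi(P)$. If the $V$-measure were genuinely frozen (at the current $P_V$), there would be no $P_V$-component. The $(V,W)$-component would then be $\phi(V)\{S(t_0\mid 1,V,W)-S(t_0\mid 0,V,W)-\Psi(P)(V)\}$. So state explicitly that the inner product moves with $P$, as the paper's proof tacitly assumes.
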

 {\bf Proof:}
 Note that $\alpha_{\phi}(P)=
E_P E(S(t_0\mid A=1,V,W)-S(t_0\mid A=0,V,W)\mid V)\phi(V)$ depends on $P_V$, $P_{W\mid V}$ and $\Lambda$. 
We have 
\[
D^*_{\phi,Q_V(),P}=\{S(t_0\mid A=1,V)-S(t_0\mid A=0,V) \}\phi(V)-\alpha_{\phi}(P),\] 
 where $S(t_0\mid A=a,V)=E_{W\mid V}(S(t_0\mid A=a,V,W)\mid V)$.
 Let's now determine the $\Lambda_T$-component of $D^*_{\phi,P}$.
 The canonical gradient of $S(t_0\mid A=a,V,W)$ is given by $I(V=v,W=w)/P(v,w) D^*_{S_a(t_0),(v,w),P}$, where 
 $D^*_{S_a(t_0),(v,w),P}$ is the canonical gradient of treatment specific survival $S_a(t_0)$ when one samples from a conditional distribution $(V,W)=(v,w)$. We have the analytic expression for $D^*_{S_a(t_0),(v,w),P}$ which has the form $\int H_{(v,w),a,t_0,P}(t,A,W)(dN(t)-\Lambda_T(t\mid A,W))$ for its $\lambda_T$-component. Here $H_{v,w,a,t_0,P}$ is just the same time dependent clever covariate as in efficient influence curve of $S_a(t_0)$ but where $g(A\mid V,W)=g(A\mid V=v,W=w)$, $S(t\mid A=a,V,W)=S(t\mid A=a,V=v,W=w)$, and $\lambda_C(t\mid A,V,W)=\lambda_C(t\mid A,V=v,W=w)$. 
 Then, the $\lambda_T$-component of $D^*_{\phi,P}$ is given by
 \[
 \begin{array}{l}
 D^*_{\phi,\lambda_T(),P}=\int_v \int_w \frac{I_{v,w}(V,W)}{P(v,w)} D^*_{S_1(t_0)-S_0(t_0),(v,w),\lambda_T(),P}(\tilde{T},\Delta,A,v,w) dP(w\mid v) \phi(v) dP(v)\\
 =D^*_{S_1(t_0)-S_0(t_0),\lambda_T(),(V,W),P}(\tilde{T},\Delta,A,V,W) \phi(V)\\
 = D^*_{S_1(t_0)-S_0(t_0),\lambda_T,P}(O)\phi(V),
 \end{array}
 \]
 where $D^*_{S_1(t_0)-S_0(t_0),\lambda,P}$ is the $\lambda_T$-component of the canonical gradient of $S_1(t_0)-S_0(t_0)$.
 From the literature we know that \[
 D^*_{S_1(t_0)-S_0(t_0),\lambda_T(),P}=\int (H_{1,t_0,P}-H_{0,t_0,P})(s,V,W,A) (dN(s)-\Lambda_T(ds\mid A,V,W)),\]
 where
 \[
 H_{a,t_0,P}(t,A,V,W)= I(A=a)/g_A(a\mid V, W) I(t<t_0)\frac{1}{\bar{G}_C(t-\mid A=a,V,W)}\frac{S(t_0\mid A=a,V,W)}{S(t\mid A=a,V,W)}.
 \]
 So, we can conclude that
 \[
 D^*_{\phi,\lambda_T(),P}=\int (H_{1,t_0,P}-H_{0,t_0,P})(s,A,V,W)\phi(V) (dN(s)-\Lambda_T(ds\mid A,V,W))\]
 is determined by a single time-dependent clever covariate, the same as for targeting $S_1(t_0)-S_0(t_0)$ up till factor $\phi(V)$.
 Then,
 \[
 D^*_{\alpha_{\lambda}(),\Lambda_T(),P}=\int H_{{\bf \phi}_{1,\lambda}^*,t_0,P}(s,A,V,W) (dN(s)-\Lambda_T(ds\mid A,V,W)),\]
 defined by the $n(\lambda)$-dimensional time-dependent clever covariate $H_{{\bf \phi}_{1,\lambda}^*}$.
 
 It remains to determine the $P_{W\mid V}$-component of the canonical gradient $D^*_{\phi,P}$.
 This requires finding the canonical gradient of the following parameter
 \[
\Psi(P_{W\mid V})= \int \phi(v) \int_w \{S(t_0\mid 1,v,w)-S(t_0\mid 0,v,w)\}dP_W(w\mid v) dP_V(v).\]
By $\delta$-method argument, this one is given by 
\[
\begin{array}{l}
\int_v \phi(v) I(V=v)/P_V(v) \{S(t_0\mid A=1,V,W)-S(t_0\mid A=0,V,W)-\Psi(P)(V)\} dP_V(v)\\
=\phi(V)\{S(t_0\mid A=1,V,W)-S(t_0\mid A=0,V,W)-\Psi(P)(V)\}.
\end{array}
\]
This component of the canonical gradient of $D^*_{\phi,P}$ teaches us how to target the regression of $S(t_0\mid A=a,V,W)$ conditional on $V$. However, for this parameter $\alpha_{\phi}$ we can combine the $P_V$ and $P_{W\mid V}$ component and note that it is given by
\[
D^*_{\phi,P_{V,W},P}=\phi(V)\{S(t_0\mid A=1,V,W)-S(t_0\mid A=0,V,W)\}-\alpha_{\phi}(P).\Box\]

So we estimate $\alpha_{\phi}(P)$ with $P_n \{S_n(t_0\mid A=1,V,W)-S_n(t_0\mid A=0,V,W)\}\phi(V)$, and thereby automatically solve $P_n D^*_{\phi,P_{V,W}(),P_{V,W,n},\lambda_n}=0$ for any $\lambda_n$. Thus, there is no need for estimating the regression of $W$, given $V$.
Similarly, if we would not have used an orthonormal basis representation, we would still use the same targeting of the conditional hazard and then run a least squares regression of $S(t_0\mid A=1,W)-S(t_0\mid A=0,W)$ onto the linear model $\sum_{j\in {\cal R}_{1,\lambda}}\alpha(j)\phi_{1,j}$.

{\bf TMLE of $\Psi_{\lambda}(P_0)$:}
 We have
 \[
 D^*_{\alpha_{\lambda}(),\Lambda_T(),P}=\int H_{t_0,P}(s,A,V,W) {\bf \phi}_{1,\lambda}^*(V)( dN(s)-\Lambda_T(ds\mid A,V,W)),\]
 defined by an $n(\lambda)$-dimensional time-dependent clever covariate $H_{t_0,{\bf \phi}_{1,\lambda}^*,P}=H_{t_0,P}{\bf \phi}_{1,\lambda}^*(V)$.
 
So it is now straightforward to construct an HAL-MLE $Q_n$ of $Q_0$; $g_{A,n}$ of $g_{A,0}$, $\lambda_{C,n}$ of $\lambda_{C,0}$, and target $Q_n$ using a $n(\lambda)$-dimensional time-dependent clever covariate $H_{t_0,{\bf \phi}_{1,\lambda}^*,n}$ into an update $Q_{n,\lambda}^*$ solving 
$P_n D^*_{\alpha_{\lambda}(),Q_{n,\lambda}^*,g_{A,n},\lambda_{C,n}}=0$ as a vector equation. Let $P_n^*=(P_{V,W,n},Q_{n,\lambda}^*,g_{A,n},\lambda_{C,n})$.
This then yields the TMLE $\Psi_{\lambda}(P_{n,\lambda}^*)=\alpha_{\lambda}(P_{n,\lambda}^*)^{\top}{\bf \phi}_{1,\lambda}^*$ solving
$P_n D^*_{\Psi_{\lambda}(),v,P_{n,\lambda}^*}=0$ uniformly in $v$ so that it yields a TMLE of the whole target function. 

{\bf Implication of our asymptotic normality results:}
Our results show asymptotic linearity with estimated influence curve $D^*_{\Psi_{\lambda}(),v,P_{n,\lambda}^*}=D^{*,\top}_{{\bf \phi}_{1,\lambda}^*,P_{n,\lambda}^*}{\bf \phi}_{1,\lambda}^*(v)$, and corresponding asymptotic normality of $\Psi_{\lambda}(P_{n,\lambda}^*)-\Psi_{\lambda}(P_0)$ at a rate $(n(\lambda)/n)^{1/2}$. Selecting $\lambda=\lambda_n$ so that $n(\lambda_n)$ is such that the uniform approximation error $\Psi_{\lambda}(P_0)-\Psi(P_0)=O^+(n(\lambda)^{-k_1^*})$ is negligible relative to $(n(\lambda)/n)^{1/2}$ by some $\log n$-factor then also yields
\[
(n/n(\lambda_n))^{1/2}(\Psi_{\lambda}(P_n^*)-\Psi(P_0))(v)/\sigma_n(v)\Rightarrow_d N(0,1),\]
pointwise for each $v$, where $\sigma_n^2(v)$ is the sample variance of $n(\lambda)^{-1/2} D^*_{\Psi_{\lambda}(),v,P_n^*}$. Here $P_n^*=P_{n,\lambda_n}^*$.
This yields pointwise 0.95-confidence intervals $\Psi_{\lambda_n}(P_n^*)(v)\pm 1.96 \sigma_n(v)n(\lambda_n)^{1/2}/n^{1/2}$.
This is equivalent with $\Psi_{\lambda_n}(P_n^*)(v)\pm 1.96\sigma_{n,\lambda_n}(v)/n^{1/2}$, where now
$\sigma^2_{n,\lambda_n}(v)=P_n D^*_{\Psi_{\lambda_n}(),v,P_n^*}$ not normalizing the EIC to have bounded variance. 
Moreover, we can construct simultaneous confidence bands based on the influence curve across $v$, at cost of $ \log n$-factor in width. These could be constructed in the standard manner as done for vectors of target parameters based on the multivariate normal limit distribution with covariance matrix of the vector influence curve. 

 Overall we can conclude that this TMLE of the target function is straightforward, including its inference, and it corresponds with a TMLE for a multivariate target parameter. The main complication relative to TMLE for pathwise differentiable target parameters is the tuning of $\lambda$. As discussed earlier, we could use Lepski's method for selecting of $\lambda$, involving constructing a TMLE for a sequence of $\lambda_j$-values so that we can trade-off change in estimate with change in standard error. 
 
 If $V$ is higher dimensional, it can be challenging to define the appropriate sieve a priori. In that case, we can use a data adaptive sieve TMLE or the Targeted HAL-MLE, where the latter can use a large initial working model $D^{(k)}({\cal R}_{1,N})$, and use the LASSO in the TMLE step targeting $\Psi_{{\cal R}_{1,N}}(P_0)$. 
 In this problem that corresponds with logistic or Cox LASSO with appropriate selection of the $L_1$-norm. One could still decide to use the variance of the EIC of $\Psi_{{\cal R}_{1,\lambda}}()$ for inference, even though that model might be larger than warranted by the rate of convergence. One might suspect that the inference would be conservative. Alternatively, we can use the targeted bootstrap.

\section{Additional Simulation Study Figures}

\begin{figure}[H]
 \centering
 \includegraphics[width=0.9\textwidth]{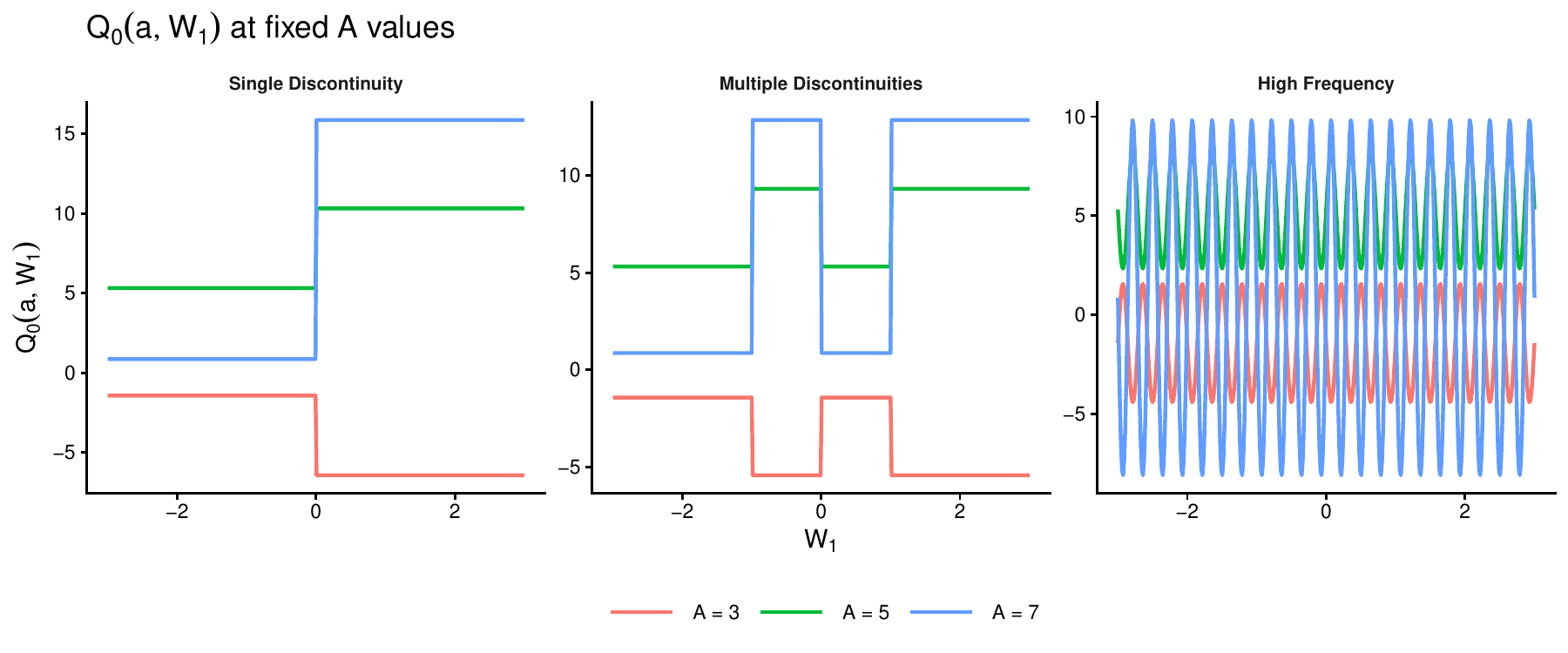}
 \caption{Plots of the functional forms of $\bar Q_0(A,W)$ for various DGP settings at $A=\{3,5,7\}$.}
 \label{fig:q0}
\end{figure}

\begin{figure}[H]
 \centering
 \includegraphics[width=0.4\textwidth]{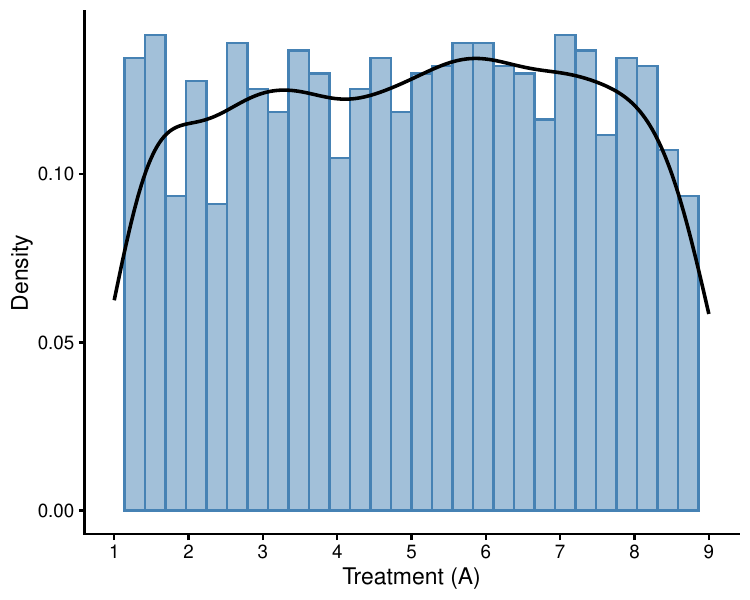}
\includegraphics[width=0.4\textwidth]{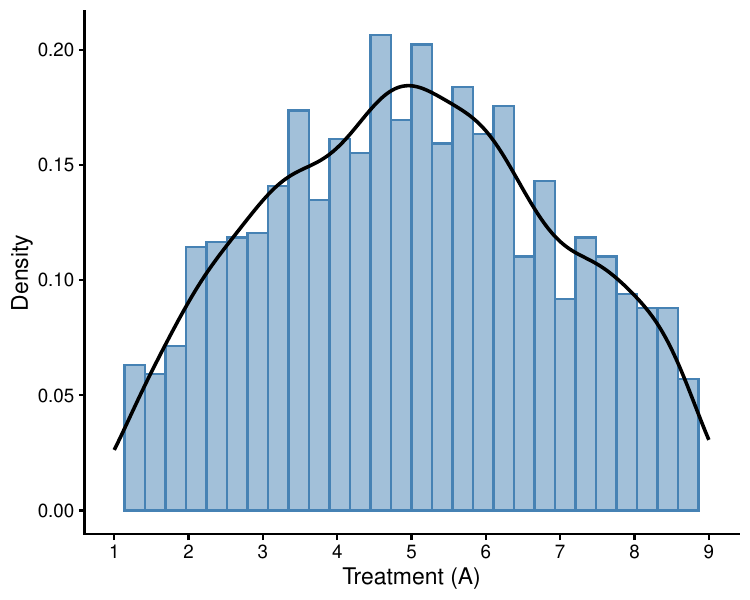}
 \caption{Density of the treatment $A$ for one replicate at $n=2000$ under uniform (left) and normal (right) distributions.}
 \label{fig:treatmentdist}
\end{figure}

\begin{figure}[H]
 \centering
 \includegraphics[width=0.9\textwidth]{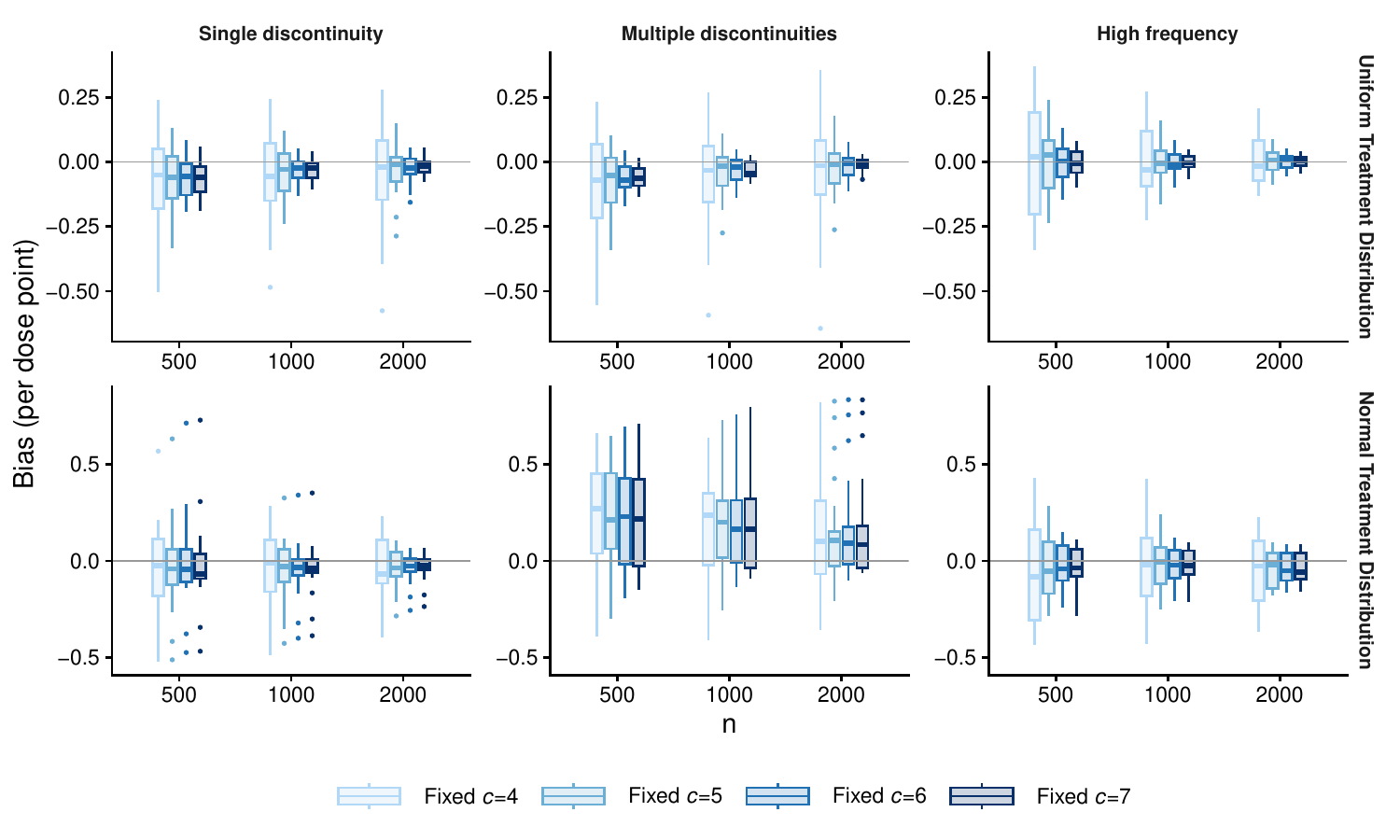}
 \caption{Boxplots of the bias across the $m=25$ dose evaluation points $a_j\in[1,9]$, for each estimator with fixed sieve size, DGP, sample size $n\in\{500,1000,2000\}$, and treatment distribution.
 Each observation is the average bias at one test point. Sieve size is $c\cdot n^{1/5}$.}
 \label{fig:bias}
\end{figure}

\begin{figure}[H]
 \centering
 \includegraphics[width=0.9\textwidth]{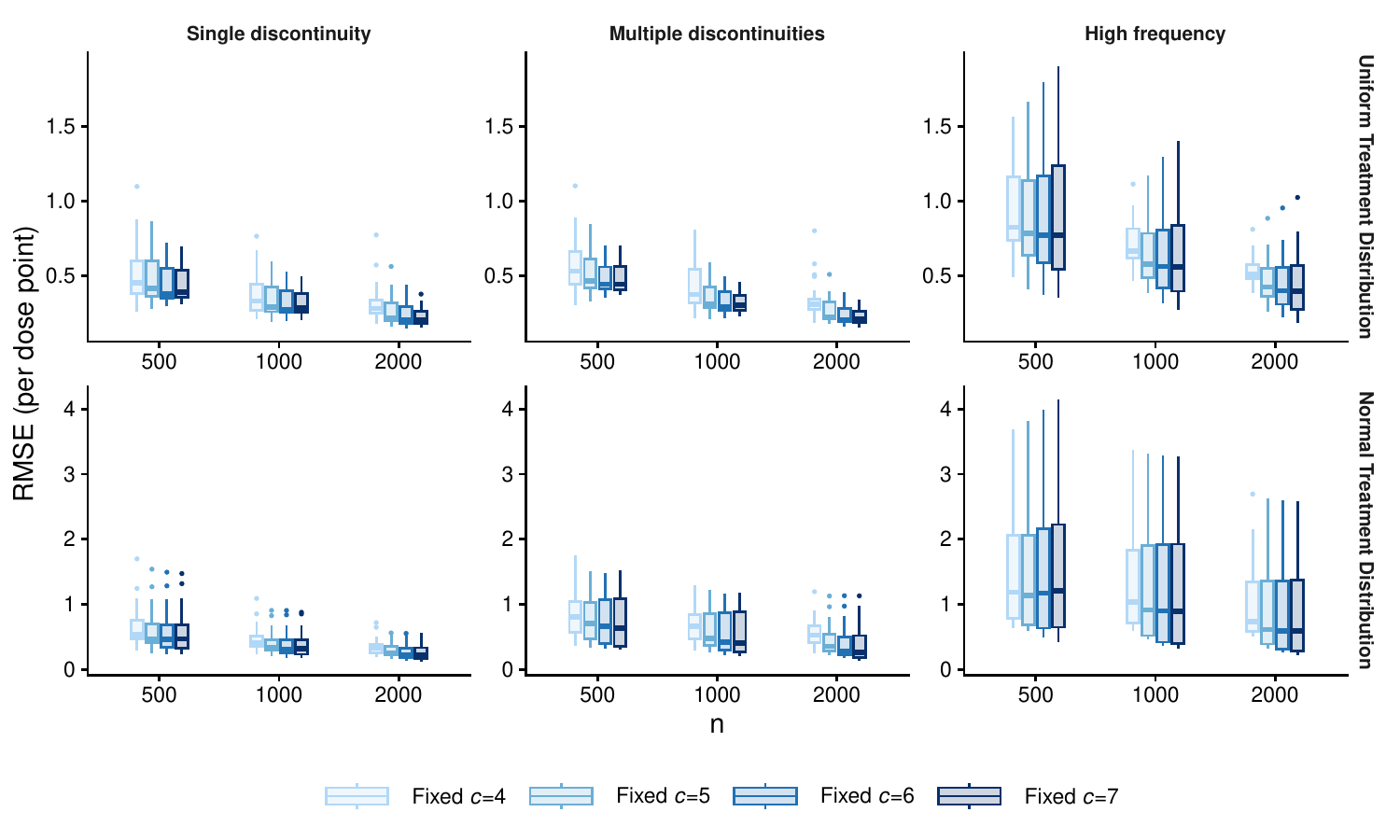}
 \caption{Boxplots of the RMSE across the $m=25$ dose evaluation points $a_j\in[1,9]$, for each estimator with fixed sieve size, DGP, sample size $n\in\{500,1000,2000\}$, and treatment distribution.
 Each observation in a boxplot is the average RMSE at one test point. Sieve size is $c\cdot n^{1/5}$.}
 \label{fig:rmse}
\end{figure}

\begin{figure}[H]
 \centering
 \includegraphics[width=0.9\textwidth]{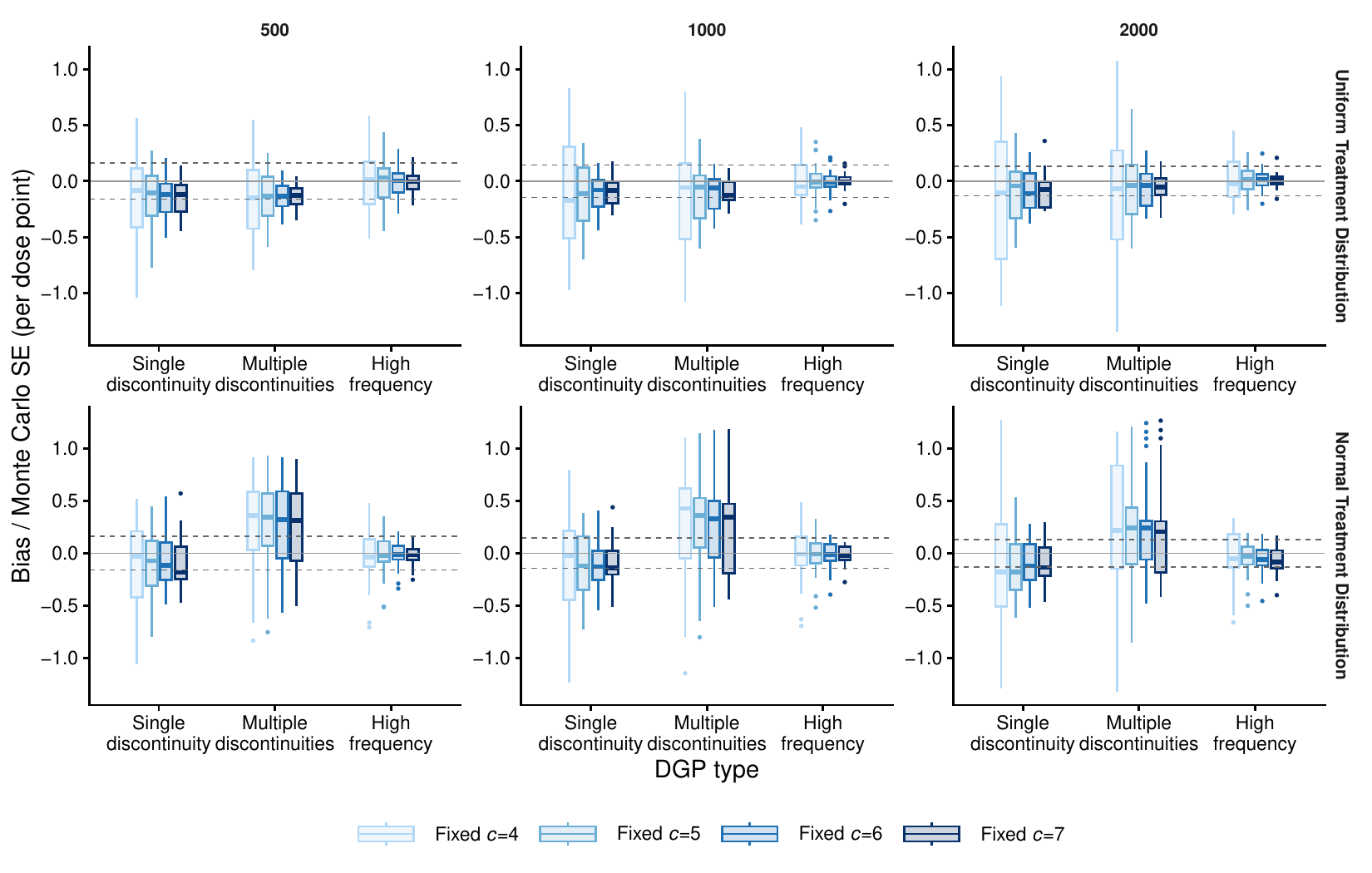}
 \caption{Boxplots of the Bias/MC-SE across the $m=25$ dose evaluation points $a_j\in[1,9]$, for each estimator with fixed sieve size, DGP, sample size $n\in\{500,1000,2000\}$, and treatment distribution.
 Each observation is the average Bias/MC-SE at one test point. Sieve size is $c\cdot n^{1/5}$.
 Dashed lines mark $\pm 1/\log(n)$.}
 \label{fig:ratio}
\end{figure}

\begin{table}[!h]
\centering
\caption{Bias and RMSE of the oracle projection vs.\ the true DRC, averaged over dose points.}
\label{tab:oracleproj}
\centering
\resizebox{\ifdim\width>\linewidth\linewidth\else\width\fi}{!}{
\begin{tabular}[t]{llrcccc}
\toprule
\multicolumn{3}{c}{ } & \multicolumn{2}{c}{Adaptive C=6 Oracle Projection} & \multicolumn{2}{c}{Fixed C=6 Oracle Projection} \\
\cmidrule(l{3pt}r{3pt}){4-5} \cmidrule(l{3pt}r{3pt}){6-7}
DGP & Treatment Distribution & $n$ & Bias & RMSE & Bias & RMSE\\
\midrule
 &  & 500 & \textbf{-0.0016} & \textbf{0.1577} & -0.0022 & 0.1723\\

 &  & 1000 & \textbf{-0.0022} & \textbf{0.1230} & -0.0026 & 0.1337\\

 & \multirow{-3}{*}{\raggedright\arraybackslash Uniform} & 2000 & \textbf{-0.0016} & \textbf{0.0920} & -0.0026 & 0.1147\\

 &  & 500 & \textbf{-0.0002} & \textbf{0.1759} & -0.0013 & 0.1983\\

 &  & 1000 & \textbf{-0.0002} & \textbf{0.1280} & -0.0004 & 0.1460\\

\multirow{-6}{*}[0.5\dimexpr\aboverulesep+\belowrulesep+\cmidrulewidth]{\raggedright\arraybackslash Single discontinuity} & \multirow{-3}{*}{\raggedright\arraybackslash Normal Treatment Distribution} & 2000 & \textbf{-0.0002} & \textbf{0.0912} & -0.0003 & 0.1096\\
\cmidrule{1-7}
 &  & 500 & \textbf{-0.0011} & \textbf{0.1669} & -0.0022 & 0.1879\\

 &  & 1000 & \textbf{-0.0012} & \textbf{0.1284} & -0.0017 & 0.1444\\

 & \multirow{-3}{*}{\raggedright\arraybackslash Uniform Treatment Distribution} & 2000 & \textbf{-0.0005} & \textbf{0.0977} & -0.0012 & 0.1199\\

 &  & 500 & \textbf{-0.0008} & \textbf{0.1881} & -0.0011 & 0.2152\\

 &  & 1000 & \textbf{0.0000} & \textbf{0.1476} & -0.0001 & 0.1658\\

\multirow{-6}{*}[0.5\dimexpr\aboverulesep+\belowrulesep+\cmidrulewidth]{\raggedright\arraybackslash Multiple discontinuities} & \multirow{-3}{*}{\raggedright\arraybackslash Normal Treatment Distribution} & 2000 & \textbf{-0.0001} & \textbf{0.1077} & -0.0004 & 0.1323\\
\cmidrule{1-7}
 &  & 500 & 0.0008 & \textbf{0.2112} & \textbf{0.0007} & 0.2398\\

 &  & 1000 & 0.0000 & \textbf{0.1738} & 0.0000 & 0.1913\\

 & \multirow{-3}{*}{\raggedright\arraybackslash Uniform Treatment Distribution} & 2000 & \textbf{-0.0003} & \textbf{0.1148} & -0.0005 & 0.1391\\

 &  & 500 & \textbf{0.0003} & \textbf{0.2006} & 0.0004 & 0.2286\\

 &  & 1000 & 0.0004 & \textbf{0.1562} & \textbf{0.0003} & 0.1788\\

\multirow{-6}{*}[0.5\dimexpr\aboverulesep+\belowrulesep+\cmidrulewidth]{\raggedright\arraybackslash High frequency} & \multirow{-3}{*}{\raggedright\arraybackslash Normal Treatment Distribution} & 2000 & \textbf{0.0003} & \textbf{0.1060} & 0.0006 & 0.1423\\
\bottomrule
\end{tabular}}
\end{table}

\begin{figure}[H]
 \centering
 \includegraphics[width=0.9\textwidth]{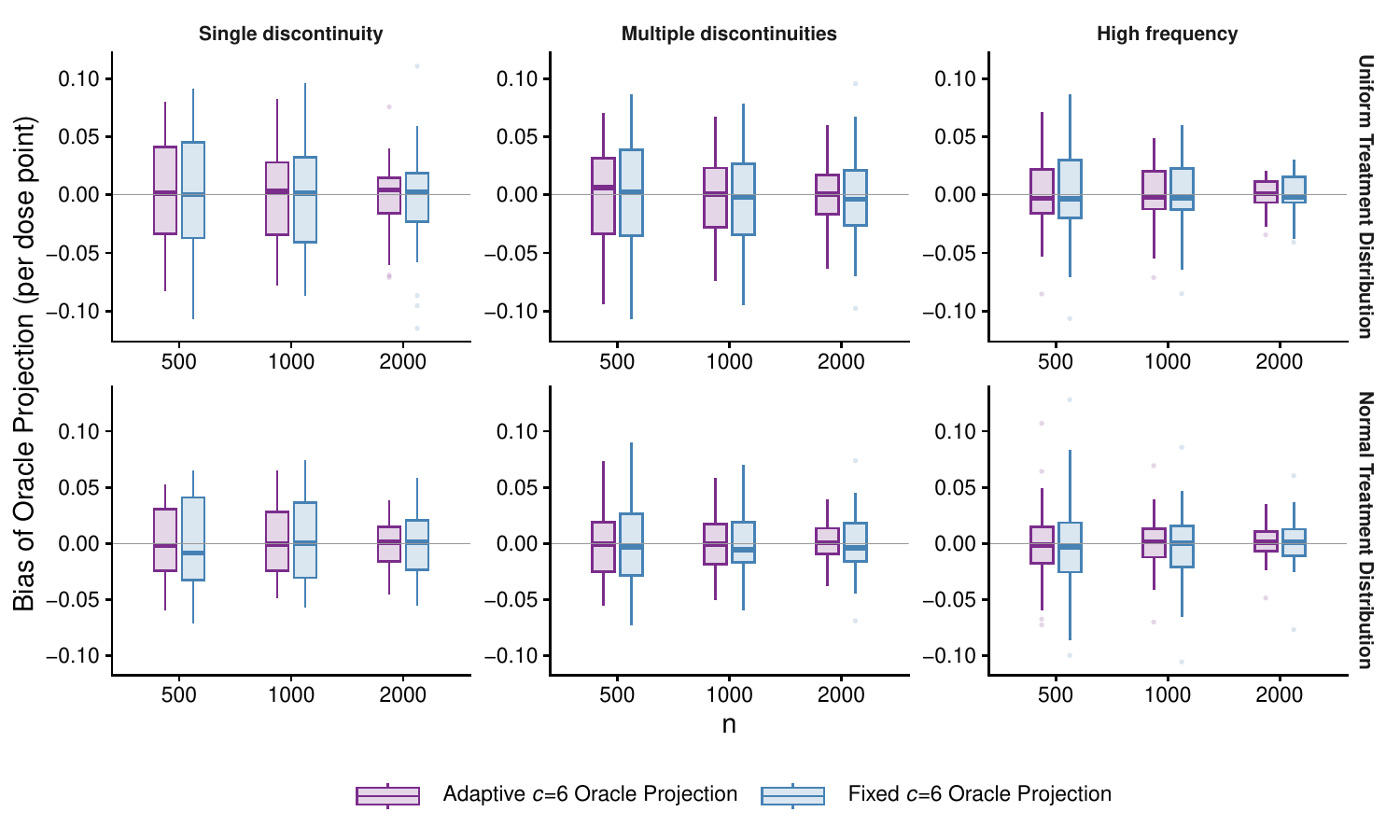}
 \caption{Boxplots of the oracle projection bias across the $m=25$ dose evaluation points, for each DGP, sample size, and treatment distribution.
 Each observation is the oracle projection bias at one test point.  Sieve size is $c\cdot n^{1/5}$.}
 \label{fig:oracle_proj_bias}
\end{figure}

\begin{figure}[H]
 \centering
 \includegraphics[width=0.9\textwidth]{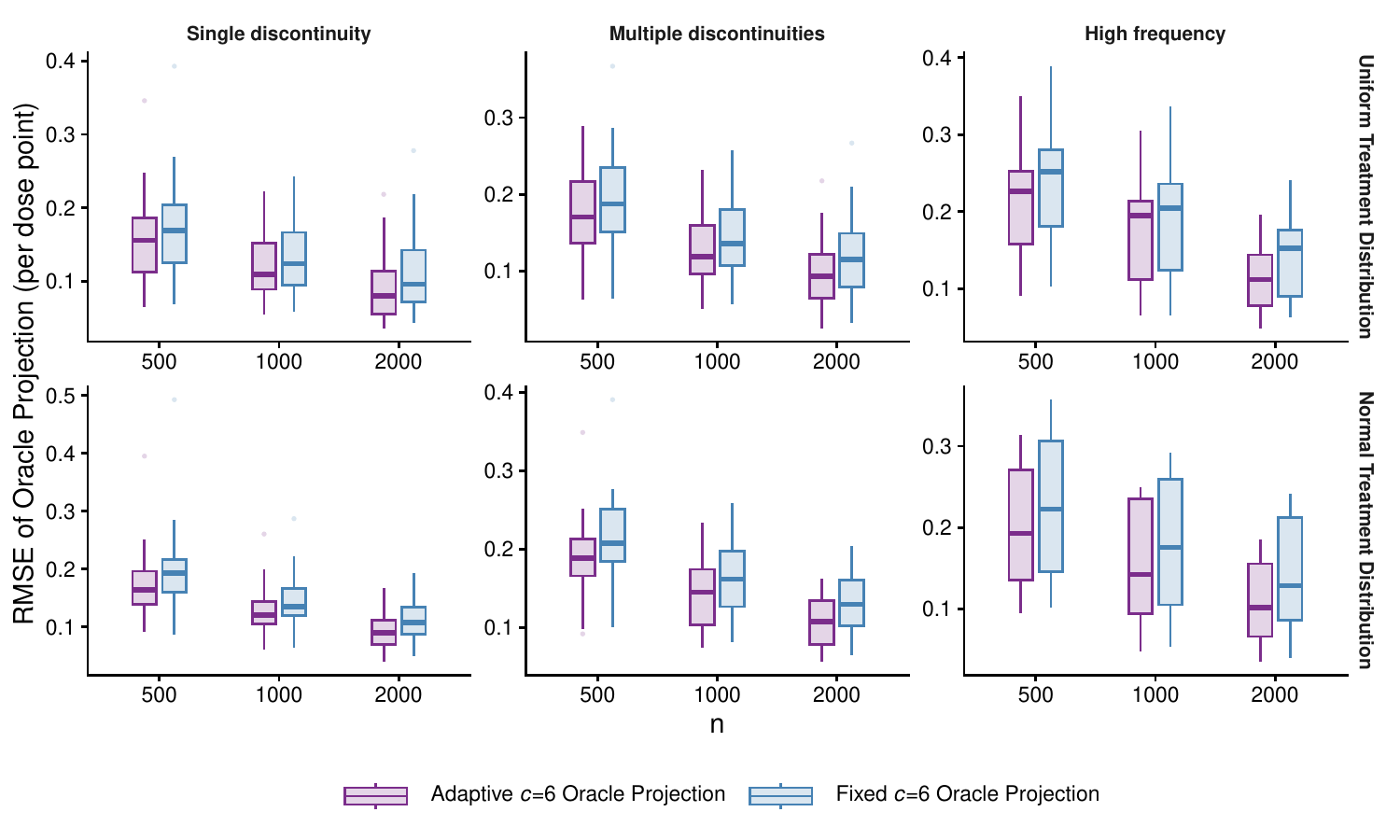}
 \caption{Boxplots of the oracle projection RMSE across the $m=25$ dose evaluation points, for each DGP, sample size, and treatment distribution.  
 Each observation is the RMSE of the Oracle-HAL estimate relative to the truth $\psi_0(a_j)$ at one test point.}
 \label{fig:oracle_proj_rmse}
\end{figure}

\begin{figure}[H]
 \centering
 \includegraphics[width=0.9\textwidth]{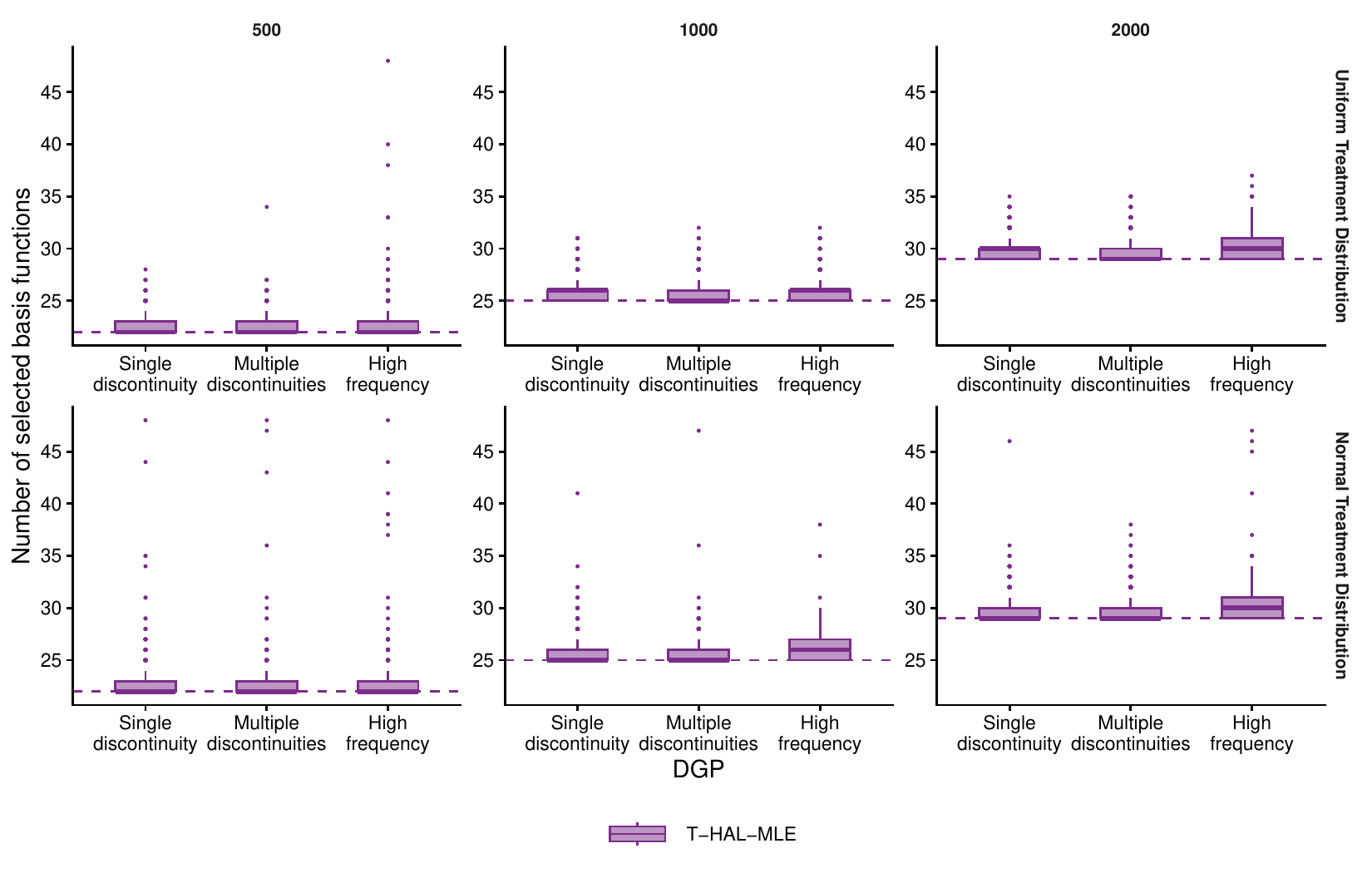}
 \caption{Distribution of the number of basis functions selected by the T-HAL-MLE targeting step across $B=1{,}000$ Monte Carlo replications, for each DGP, sample size, and treatment distribution, under the adaptive selector strategy with $c_1=6$ and $c_2=9$. }
 \label{fig:basis_size}
\end{figure}

\begin{figure}[H]
 \centering
 \includegraphics[width=0.95\textwidth]{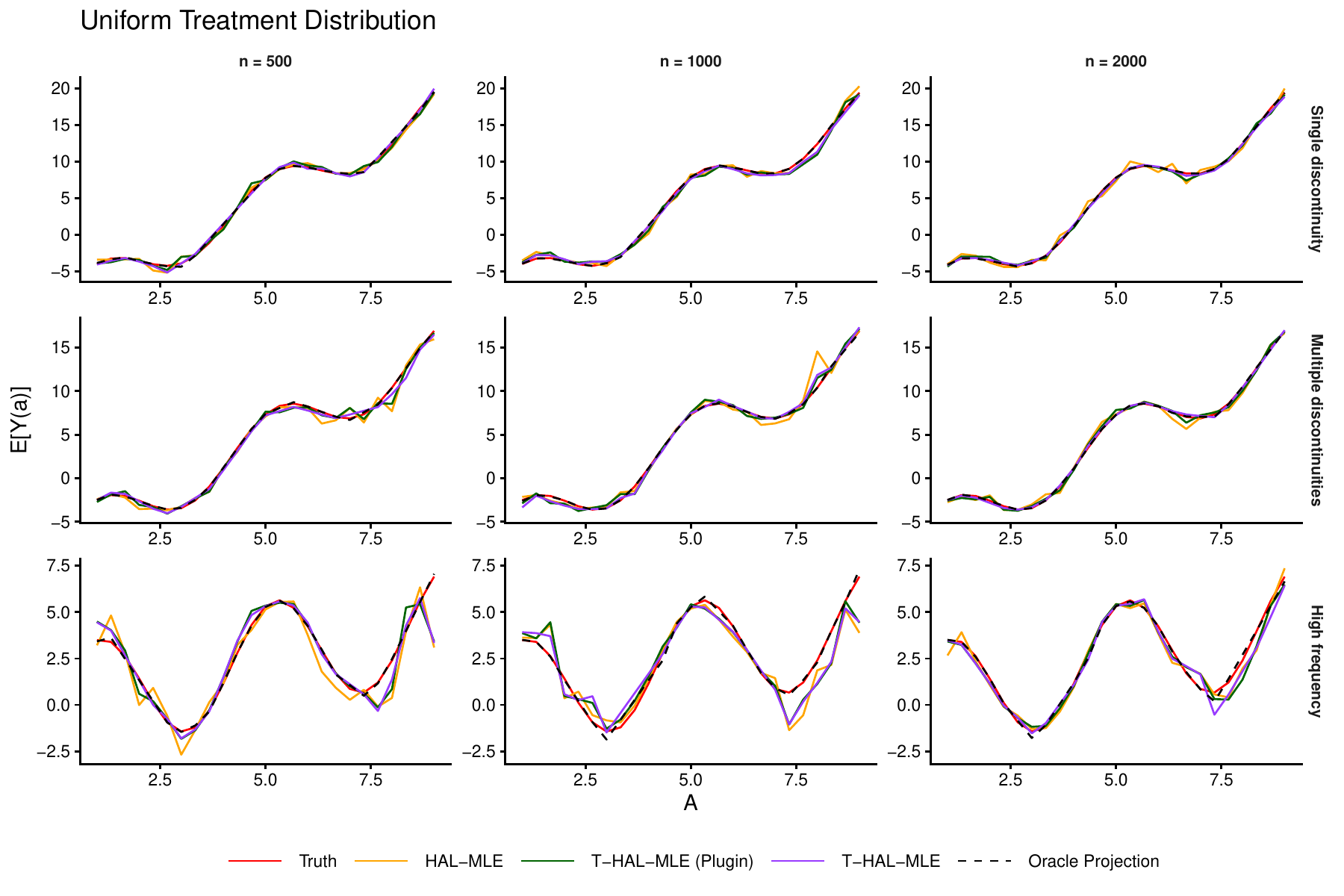}
 \includegraphics[width=0.95\textwidth]{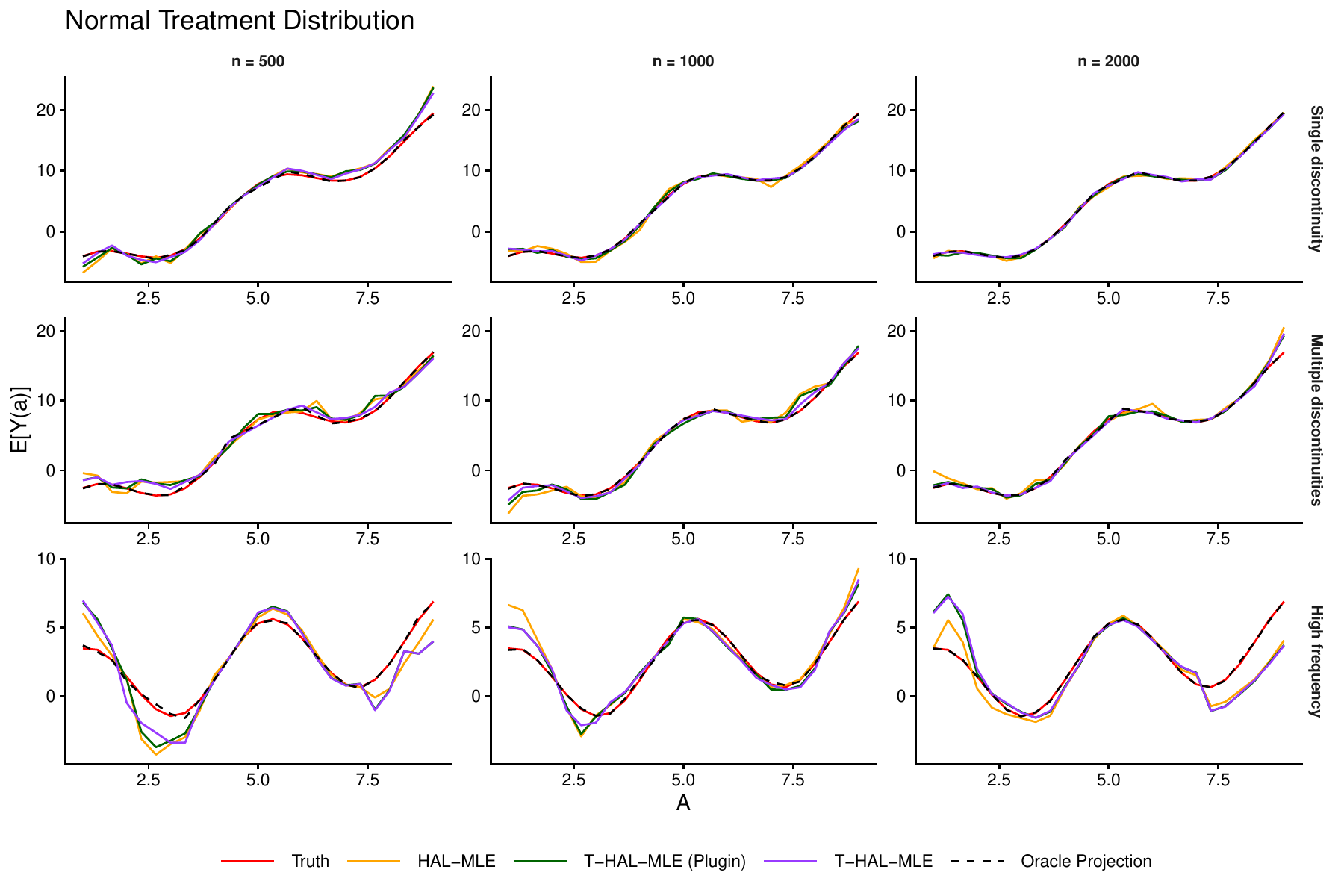}

 \caption{Estimated dose-response curves under uniform (top) and normal (bottom) treatment distributions for one replicate at $n=\{500, 1{,}000, 2000\}$.
 Each row corresponds to one outcome distribution.
 The red line is the true DRC, and the black dashed is the projection of the true DRC onto the basis selected during the LASSO targeting stage, with LASSO penalty chosen through the adaptive selector strategy.}
 \label{fig:drc}
\end{figure}

\end{document}